\documentclass[12pt]{article}

\usepackage{eucal}
\usepackage{amssymb,latexsym,amsfonts,amsmath,amsthm,verbatim,hyperref,enumerate,dsfont}
\usepackage{color,cite,graphicx}
\usepackage{latexsym,amssymb,epsf} 

\usepackage{pstricks, pst-node}
\psset{arrows=->, labelsep=3pt, mnode=circle}
\input pst-plot

\DeclareMathAlphabet{\mathpzc}{OT1}{pzc}{m}{it}

\def\PP{\mathds{P}}
\def\EE{\mathds{E}}
\def\RR{\mathds{R}}
\def\NN{\mathds{N}}
\def\ZZ{\mathds{Z}}

\def\1{\mathds{1}}
\def\eps{\varepsilon}

\def\ToD{\overset{D}{\longrightarrow}}

\def\G{\mathcal{G}}
\def\E{\mathcal{E}}
\def\A{\mathcal{A}}
\def\Eq{\mathfrak{E}}
\def\Emb{\operatorname{Emb}}
\def\Proj{\operatorname{Proj}}
\def\Vol{\operatorname{Vol}}

\def\tr{\operatorname{tr}}

\def\supp{\operatorname{supp}}

\def\To{\longrightarrow}
\def\vv{\overrightarrow}
\def\<{\langle}
\def\>{\rangle}

\theoremstyle{plain}
\newtheorem{thm}{Theorem}[section]
\newtheorem*{thm*}{Theorem}
\newtheorem{lemma}[thm]{Lemma}
\newtheorem*{lemma*}{Lemma*}
\newtheorem{prop}[thm]{Proposition}
\newtheorem*{prop*}{Proposition}
\newtheorem{cor}[thm]{Corollary}
\newtheorem*{cor*}{Corollary}

\newtheorem*{cl*}{Claim}
\newtheorem*{obs*}{Observation}
\newtheorem{rmk}[thm]{Remark}
\theoremstyle{remark}
\newtheorem*{not*}{Notation}
\theoremstyle{definition}
\newtheorem{ex}[thm]{Example}
\newtheorem{dfn}[thm]{Definition}
\newtheorem*{qn*}{Question}

\numberwithin{equation}{section}

\begin{document}

\title{The spectral edge of some random band matrices}
\author{Sasha Sodin\footnote{Tel Aviv University, sodinale@post.tau.ac.il.
Supported in part by the Adams Fellowship Program of the Israel
Academy of Sciences and Humanities and by the ISF.}}
\maketitle

\begin{abstract}
We study the asymptotic distribution of the eigenvalues of random
Hermitian periodic band matrices, focusing on the spectral edges.
The eigenvalues close to the edges converge in distribution to the Airy
point process if (and only if) the band is sufficiently wide ($W \gg N^{5/6}$).
Otherwise, a different limiting distribution appears.
\end{abstract}


\section{Introduction}

In this paper, we study the edge of the spectrum of random Hermitian periodic
band matrices. The $N \times N$ Hermitian random matrix $H = H_N$ with rows and
columns labelled by elements of $\ZZ / N\ZZ$ has independent entries above
the main diagonal, and
\begin{equation}\label{eq:band}
H_{uv} = 0 \quad \text{if} \quad |u-v|_N = \min( |u-v|, N - |u-v|) > W_N
    \quad \text{or} \quad u=v~.
\end{equation}
To simplify the exposition, we assume that either
\begin{equation}\label{eq:pm}
\PP \{H_{uv} = 1 \} = \PP \{H_{uv} = -1 \} = 1/2~, \quad 0 < |u-v|_N \leq W_N
\end{equation}
(``random signs''), or
\begin{equation}\label{eq:ph}
H_{uv} = \exp(i U_{uv})~, \quad U_{uv} \sim U(0, 2\pi)~, \quad 0 < |u-v|_N \leq W_N
\end{equation}
(``random phases''), and defer the discussion of possible generalisations to the last
section. For the same reason, we assume that $W = W_N \to \infty$ as $N \to \infty$.
\\*
\\*
The matrix $H_N$ is closely related to the graph $\G = (\ZZ/N\ZZ, \E)$, where
\begin{equation}\label{eq:def.g}
(u, v) \in \E \iff 0 < |u-v| \leq W_N~,
\end{equation}
and can be viewed as a Hamiltonian of quantum evolution in a disordered environment
on $\G$ ((\ref{eq:ph}) corresponds to broken time-reversal symmetry.) We refer the
reader to the work of Fyodorov and Mirlin \cite{FM} for a thorough discussion of
physical motivation.
\\*
\\*
Random matrices similar to $H_N$ have been studied in both mathematical and physical
literature. We survey some of the results that pertain to the current work, and refer
to \cite{FM} and to the recent review of Spencer \cite{Sp} for detailed bibliography.

\newcounter{remnr1}
\def\rem1{
    \addtocounter{remnr1}{1}
    \vspace{2mm}\noindent{\bf (\Alph{remnr1})} }

\rem1 Bogachev, Molchanov, and Pastur \cite{BMP} have proved that the empirical spectral
measure (or integrated density of states)
\begin{equation}\label{eq:def.mu}
\mu_N(\alpha) = \# \left\{ \text{eigenvalues of $\frac{H_N}{2\sqrt{2W_N}}$
    in $(-\infty, \alpha]$} \right\}
\end{equation}
converges (weakly, in distribution, as $N \to \infty$ and $W_N \to \infty$), to the
(deterministic) Wigner measure
\begin{equation}\label{eq:w}
\mu_{\operatorname{Wigner}}(\alpha)
    = \int_{-\infty}^\alpha \frac{2}{\pi} (1-\alpha_1^2)_+^{1/2} d\alpha_1~.
\end{equation}
That is, the {\em global} behaviour of the spectrum of $H_N$ is similar to that of Wigner
matrices (which correspond to the special case $W_N = N/2$.)

\rem1 One of the interesting questions concerning {\em local} eigenvalue statistics is the
crossover between the Random Matrix regime and the Poisson regime. The Thouless criterion
\cite{Th}, applied to random band matrices by Fyodorov and Mirlin \cite{FM}, predicts the
following.\\*
If the mixing exponent $\rho_\text{mix}$ of the (classical) random walk on $\G$ is much
larger than the eigenvalue spacing near $\alpha_0$:
\[ \rho_\text{mix} \gg \frac{\text{mean spacing at $\alpha_0$}}
                            {\text{mean density at $\alpha_0$}}~, \]
then the eigenvalues of $H_N$ near $\alpha_0$ obey Random Matrix statistics, and the
corresponding eigenvectors are extended. If
\[ \rho_\text{mix} \ll \frac{\text{mean spacing at $\alpha_0$}}
                            {\text{mean density at $\alpha_0$}}~, \]
the eigenvalues near $\alpha_0$ obey Poisson statistics, and the corresponding
eigenvectors are localised. \\*
For our graph $\G$, $\rho_\text{mix}$ is of order $W_N^2/N^2$, and in the bulk
of the spectrum, $-1 < \alpha_0 < 1$, (\ref{eq:w}) suggests that
\[ \frac{\text{mean spacing at $\alpha_0$}}
        {\text{mean density at $\alpha_0$}} \asymp 1/N \]
Thus the crossover should occur at $W_N \asymp \sqrt{N}$. \\*
On the physical level of rigour, these predictions have been justified by Fyodorov
and Mirlin \cite{FM}, who have derived a detailed description of both asymptotic
regimes (and also of the crossover.) So far, these results resist mathematical
justification (see however the work of Khorunzhiy and Kirsch \cite{KK} for some
relatively recent developments.) \\*
We remark that Spencer and Wang \cite[Ch.\ III]{W} have managed to make one direction of
(a slightly different form of) the Thouless criterion rigorous under certain assumptions,
in a fairly general setting. However, for now these assumptions have not been verified
for the problem under consideration. We refer the reader to the review of Spencer \cite{Sp}
for a discussion of the mathematical approach to the Thouless criterion and its application
to band matrices.

\rem1 We focus on the edge of the spectrum: $\alpha_0 = \pm 1$. There,
\[ \frac{\text{mean spacing at $\alpha_0$}}
        {\text{mean density at $\alpha_0$}} \asymp \frac{N^{-2/3}}{N^{-1/3}} = N^{-1/3}~,\]
therefore the crossover should occur at $W_N \asymp N^{5/6}$. On the physical level of
rigour, this has been confirmed by Silvestrov \cite{Si}. In fact, one of the aims of the
current paper is to put some of the methods and results of \cite{Si} on firm mathematical
ground.

\vspace{3mm}\noindent
Now we state the main results.
\newcounter{remnr2}
\def\rem2{
    \addtocounter{remnr2}{1}
    \vspace{2mm}\noindent{\pscirclebox{\bf \arabic{remnr2}}} }

\rem2 For $W_N \gg N^{5/6}$ (i.e.\ $W_N/N^{5/6} \to +\infty$), set
\[\begin{split}
    \sigma_R(\lambda) &= \# \left\{ \text{eigenvalues of $\frac{H_N}{2\sqrt{2W_N}}$
        in $\big(1 - \frac{\lambda}{2N^{2/3}}, \, +\infty \big]$} \right\}~, \\
    \sigma_L(\lambda) &= \# \left\{ \text{eigenvalues of $\frac{H_N}{2\sqrt{2W_N}}$
        in $\big(-\infty, \, -1+ \frac{\lambda}{2N^{2/3}} \big]$} \right\}~.
\end{split}\]
\begin{thm}\label{th:RM}
If $W_N / N^{5/6} \to \infty$ as $N \to  \infty$, the measures $\sigma_R(\lambda)$ and
$\sigma_L(\lambda)$ converge in distribution\footnote{on test functions $f \in C(\RR)$
such that $\supp f \cap \RR_+$ is compact} to $\mathfrak{Ai}_\beta(-\lambda)$,
where $\mathfrak{Ai}_\beta$ is the (random) distribution function corresponding
to the Airy point process, $\beta=1$ for (\ref{eq:pm}), and $\beta = 2$ for
(\ref{eq:ph}).

Hence, according the Tracy--Widom theorem \cite{TW1,TW2}, the scaled extreme
eigenvalues $2N^{2/3} (\alpha_{\max}-1)$ and $-2N^{2/3} (\alpha_{\min}+1)$
of $H_N/(2\sqrt{2W_N})$ converge in distribution to the Tracy--Widom law $TW_\beta$.
\end{thm}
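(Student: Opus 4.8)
The plan is to prove convergence to the Airy$_\beta$ point process via the moment method, exploiting the combinatorics of closed walks on the graph $\G$. The key observation is that the edge of the spectrum is governed by high moments $\tr (H_N/(2\sqrt{2W_N}))^{k}$ with $k \asymp N^{2/3}$: if $\lambda_1 \geq \lambda_2 \geq \dots$ are the scaled eigenvalues, then $\sum_i \big(1 + t/N^{2/3}\big)^{k}$ with $k = \lfloor s N^{2/3}\rfloor$ concentrates the contribution on eigenvalues within $O(N^{-2/3})$ of the edge, which is precisely the regime where $\sigma_R,\sigma_L$ live. So the first step is to reduce the statement to showing that the joint moments $\EE \prod_j \tr f_j(H_N)$, for suitable polynomials / exponential-type test functions concentrated near the edge, converge to the corresponding Airy$_\beta$ quantities — equivalently, to compute $\lim_{N} \EE \, \tr\big(H_N/(2\sqrt{2W_N})\big)^{2k_N}$ and its fluctuations for $k_N \sim s N^{2/3}$, and to show the higher mixed cumulants vanish.

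Second, I would expand $\tr H_N^{2k}$ as a sum over closed walks $u_0 \to u_1 \to \dots \to u_{2k} = u_0$ on $\G$, with each step satisfying $0 < |u_{i+1}-u_i|_N \leq W$. Taking expectations kills any walk that does not traverse every edge an even number of times (in the random-sign case) or that is not ``balanced'' in the sense appropriate to (\ref{eq:ph}); the surviving walks are, to leading order, the ones whose edge-set forms a tree (a ``Dyck-type'' pairing of the $2k$ steps), exactly as in the classical Wigner edge computation. The new feature here is that each tree-walk carries a geometric weight: the increments along the $k$ matched pairs are constrained to lie in $\{-W,\dots,W\}\setminus\{0\}$ and must sum to zero around $\ZZ/N\ZZ$, so the count of such walks is governed by a local central limit theorem for sums of $\asymp k$ uniform increments of size $\asymp W$. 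This is where the scales $W$, $N$, $k\asymp N^{2/3}$ interact: the typical spread of the walk after $k$ steps is $\asymp W\sqrt{k}$, and the condition $W\sqrt{k}\ll N$, i.e. $W \ll N/\sqrt{N^{2/3}} = N^{2/3}$... — here one must be careful: it is the return structure over the full length $2k \asymp N^{2/3}$ and the comparison $W^2 k \lll N^2$ that isolates the regime $W \gg N^{5/6}$ as the one in which the walk ``does not feel'' the torus and the answer is universal (Airy), exactly matching the heuristic in item (C).

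Third, assuming the combinatorial estimate, I would show that in the regime $W \gg N^{5/6}$ the moment generating quantities of $\sigma_R$ (resp.\ $\sigma_L$) match those of the Wigner/GUE–GOE edge, and hence by the Tracy--Widom and Airy point process moment characterisations (as used, e.g., by Soshnikov) one gets convergence in distribution to $\mathfrak{Ai}_\beta(-\lambda)$ on the class of test functions in the footnote. The last sentence of the theorem then follows immediately, since $\alpha_{\max} = \sup\{\alpha : \sigma_R \text{ charges } \alpha\}$ is a continuous functional of the point process restricted to $[-\lambda_0,\infty)$ for the relevant topology, so convergence of $\sigma_R$ to $\mathfrak{Ai}_\beta(-\cdot)$ yields $2N^{2/3}(\alpha_{\max}-1) \ToD TW_\beta$, and symmetrically for $\alpha_{\min}$.

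The main obstacle I anticipate is controlling the error terms in the walk count when $k \asymp N^{2/3}$ is that large: one must show that non-tree walks (walks with at least one ``excess'' edge, or with a self-intersection of the vertex trajectory) contribute negligibly \emph{uniformly in this range of $k$}, and that the leading tree-walk count, after the local-CLT evaluation of the increment constraints, reproduces exactly the Airy moments with the correct normalisation $2\sqrt{2W}$ and the correct power of $N^{2/3}$. Handling the torus topology (the ``periodic'' aspect — increments summing to $0$ mod $N$ rather than in $\ZZ$) and verifying that the correction is lower order precisely when $W \gg N^{5/6}$ is the delicate quantitative heart of the argument; this is also exactly the place where the ``only if'' half and the alternative limiting distribution for $W \ll N^{5/6}$ emerge, when the torus \emph{is} felt.
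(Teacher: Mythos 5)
Your plan follows Soshnikov's original route (plain moments $\tr(H_N/(2\sqrt{2W_N}))^{2k}$ with $k\asymp N^{2/3}$, reduction to tree-like even walks, a local CLT for the geometric increments, identification of the limit by matching Airy moments), whereas the paper works instead with traces of Chebyshev polynomials $U_n\bigl(H_N/(2\sqrt{2W_N-1})\bigr)$ of degree $n\asymp N^{1/3}$, whose entries count \emph{non-backtracking} paths; the combinatorics then reduces to a sum over the diagrams of Lemma~2.4, the geometric counting to local limit theorems for the ordinary random walk on $\ZZ/N\ZZ$ (Section~3), and the limit is pinned down by a Levitan-type uniqueness theorem together with comparison to the known case $W_N=N/2$. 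Either route could in principle be made to work, and you correctly identify the two main ingredients (walk combinatorics plus a local CLT on the torus). Note also one slip in the reduction: for Theorem~1.1 the higher mixed cumulants of the traces do \emph{not} vanish in the limit (the Airy process is genuinely random); what must be shown is that they converge to the corresponding Airy cumulants. Vanishing cumulants is the signature of the opposite regime (Theorem~1.2), where the limit is deterministic.

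The genuine gap is at what you yourself call the delicate quantitative heart: your mechanism for the threshold $N^{5/6}$ is both misscaled and inverted. Comparing $W\sqrt{k}$ with $N$ for the raw length $k\asymp N^{2/3}$ yields the wrong exponent $N^{2/3}$, as your own aborted computation shows. The correct length scale is the \emph{reduced} one: in a typical contributing walk the relevant cycles of the edge-multigraph have length comparable to the depth of the associated Dyck path, i.e.\ $\asymp\sqrt{k}\asymp N^{1/3}$ (this is exactly the scale the Chebyshev formalism isolates from the start). Moreover, for pure tree walks the increments cancel identically and no torus constraint arises at all; the constraint appears only when a cycle of reduced length $m\lesssim N^{1/3}$ must close, at cost $\asymp\max\bigl((W\sqrt{m})^{-1},N^{-1}\bigr)$ (Proposition~3.3). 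The Airy/Wigner answer is recovered precisely when this cost equals $N^{-1}$, i.e.\ when the walk \emph{does} wrap around and equidistribute on $\ZZ/N\ZZ$, which for $m\asymp N^{1/3}$ means $WN^{1/6}\gg N$, i.e.\ $W\gg N^{5/6}$; when $W\ll N^{5/6}$ the walk is diffusive, the cycle-closing probability is enhanced to $(W\sqrt{m})^{-1}\gg N^{-1}$, and the non-Airy limit of Theorem~1.2 emerges. So the regime you describe as ``not feeling the torus'' is in fact the non-Airy one. Without this corrected mechanism, and without a bound on the higher-genus contributions that is uniform over both the $N^{-1}$ and $(W\sqrt m)^{-1}$ terms (as in Propositions~4.1--4.3 and 2.5), the proposal cannot isolate $W\gg N^{5/6}$ as the Airy regime.
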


\rem2 For $1 \ll W_N \ll N^{5/6}$, set
\[\begin{split}
    \sigma_R(\lambda) &= \frac{W_N^{6/5}}{N} \,
        \# \left\{ \text{eigenvalues of $\frac{H_N}{2\sqrt{2W_N}}$
            in $\big(1 - \frac{\lambda}{2W_N^{4/5}}, \, +\infty \big]$} \right\}~, \\
    \sigma_L(\lambda) &= \frac{W_N^{6/5}}{N} \,
        \# \left\{ \text{eigenvalues of $\frac{H_N}{2\sqrt{2W_N}}$
            in $\big(-\infty, \, -1+ \frac{\lambda}{2W_N^{4/5}} \big]$} \right\}~.
\end{split}\]

\begin{thm}\label{th:P}
If $W_N \to \infty$ and $W_N/N^{5/6} \to 0$, the measures $\sigma_R$ and $\sigma_L$
converge in distribution to the (same) deterministic measure $\sigma_\beta$. We have:
\[\begin{split}
&\sigma_\beta(\lambda) = \frac{2}{3\pi} \lambda^{3/2} + O(\lambda)~,
    \quad \lambda \to +\infty~, \\
&\sigma_\beta(\lambda) \leq C \exp(-C |\lambda|^{5/4})~, \quad \lambda \to -\infty~.
\end{split}\]
\end{thm}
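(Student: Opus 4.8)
The natural approach is the method of moments. Since $\sigma_\beta$ is deterministic, convergence in distribution of $\sigma_R$ (and, via $H_N\mapsto -H_N$, of $\sigma_L$) is the same as convergence in probability of the linear statistics $\int f\,d\sigma_R$, so it suffices to compute their expectations and to show their variances tend to $0$. The convenient test functions are the $f_c$ with $f_c\big(2W^{4/5}(1-\alpha)\big)=\alpha^{2k}$, $k=\lceil cW^{4/5}\rceil$, possibly multiplied by a fixed smooth cutoff $\chi$ isolating the edge $\alpha\approx +1$; for these
\[
\int f_c\,d\sigma_R=\frac{W^{6/5}}{N}\,\tr\Big[\chi\Big(\tfrac{H_N}{2\sqrt{2W_N}}\Big)\Big(\tfrac{H_N}{2\sqrt{2W_N}}\Big)^{2k}\Big]\approx\int_{\RR}e^{-c\lambda}\,d\sigma_R(\lambda),
\]
because for $\alpha\in[-1,1]$ and $k\asymp W^{4/5}$ the weight $\alpha^{2k}$ is concentrated within $O(W^{-4/5})$ of $\pm1$ and equals $e^{-c\lambda}(1+o(1))$ there. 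Thus everything reduces to the asymptotics of $\EE\,\tr(H_N/(2\sqrt{2W_N}))^{2k}$ (and its second moment) for $k=k_N\asymp W_N^{4/5}$, and the stated two-sided bounds on $\sigma_\beta$ are exactly what is needed to recover the measure from its Laplace transform and to control the contribution of $\lambda<0$.

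Next one expands $\tr H_N^{2k}$ as a sum over closed walks of length $2k$ in the graph $\G$ of \eqref{eq:def.g}. The mean-zero (resp.\ unimodular, mean-zero) structure of the entries under \eqref{eq:pm} (resp.\ \eqref{eq:ph}) annihilates every walk traversing some edge an odd number of times (resp.\ with unbalanced orientation). The principal family consists of walks that traverse each edge exactly twice: these are parametrised by a plane tree $T$ with $k$ edges (Catalan many, $C_k$ of them) together with an embedding of $T$ into $\G$ — a choice of root ($N$ ways) and, for each edge of $T$, a step in $\{\pm1,\dots,\pm W\}$. Since $k\ll N$, these contribute to leading order $N\,C_k\,(2W)^k$, and after the normalisation $(2\sqrt{2W})^{2k}$ this reproduces the semicircle behaviour near the edge; matching it against $\int e^{-c\lambda}\,d\sigma_\beta$ yields exactly the leading term $\tfrac{2}{3\pi}\lambda^{3/2}$ of $\sigma_\beta$, which is nothing but the semicircle density $\tfrac2\pi\sqrt{1-\alpha^2}$ integrated across the edge. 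So the right tail of $\sigma_\beta$ is the bulk behaviour seen from far above the softened edge.

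The real content is to handle, at the critical scale $k\asymp W^{4/5}$, the remaining walks — those visiting fewer than $k+1$ vertices (edges of higher even multiplicity, or cycles in the edge-multigraph whose lengths impose arithmetic constraints on the steps) — which at this scale are no longer negligible and whose resummation produces the $O(\lambda)$ correction and, crucially, the left tail. The plan is to organise this sum by a transfer operator acting on the local state of the walk (roughly, displacement relative to the nominal position together with the current Dyck height, weighted by the flat band profile of $\G$), whose iterates generate $\EE\,\tr H_N^{2k}$; summing the plane-tree/excursion expansion turns this into a self-consistent equation for a resolvent-type function $m(z)$ of the band. Near the spectral edge $z\to1$ this equation has a square-root singularity whose deformation by the finite-band corrections becomes critical precisely when $1-z\asymp W^{-4/5}$ (equivalently $k\asymp W^{4/5}$); on this scale $m$ should converge to the solution of an Airy-type problem attached to the band dispersion, and $\sigma_\beta$ is its Stieltjes--Perron inverse. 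Its asymptotics then give $\tfrac{2}{3\pi}\lambda^{3/2}+O(\lambda)$ on the right and, on the left, a decay governed by the large-deviation (Lifshitz-type) rate for eigenvalues lying above the semicircle edge; the exponent $5/4$ rather than the Airy $3/2$ is the signature of this mechanism — it is the Tracy--Widom rate $|E|^{3/2}$ read in the rescaled coordinate, with $E\leftrightarrow\lambda^{5/6}$ the crossover exponent $W\asymp N^{5/6}$. Finally, the variance of $\int f_c\,d\sigma_R$ is estimated by a connected-graph bound: $\EE[(\tr H_N^{2k})^2]-(\EE\,\tr H_N^{2k})^2$ counts pairs of closed walks sharing at least one edge, and for $k\asymp W^{4/5}\ll N$ the connected pairs are of strictly smaller order, so the variance vanishes and the limit is indeed deterministic — consistently with the edge window then containing $\asymp NW^{-6/5}\to\infty$ eigenvalues.

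I expect the main obstacle to be this middle step: proving that all walk classes other than the identified critical family are genuinely negligible, uniformly in $k\asymp W^{4/5}$, and then pinning down the scaling limit of the moment generating function precisely enough to extract both the constant $\tfrac{2}{3\pi}$ and the tail exponent $5/4$. This demands careful bookkeeping of the coincidences among embedded walk-edges — essentially a local central limit theorem for random walks with steps uniform on $\{\pm1,\dots,\pm W\}$ — together with a singularity/saddle-point analysis of sums over plane trees weighted by the band profile. The same machinery, specialised to $W\gg N^{5/6}$ and refined to track joint moments of $\tr$ against several test functions, is what one would use for Theorem~\ref{th:RM}.
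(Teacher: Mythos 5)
Your overall strategy --- moments at the critical scale $k\asymp W^{4/5}$, the doubled-tree family producing the semicircle term $\tfrac{2}{3\pi}\lambda^{3/2}$, a local CLT for the $\{\pm1,\dots,\pm W\}$-walk governing the cycle-closure constraints, and a connected-pair bound for the variance --- identifies the right ingredients, and the Laplace-transform route via $\alpha^{2k}\approx e^{-c\lambda}$ is a legitimate alternative to the paper's Chebyshev/Levitan machinery (the paper itself notes this after Theorem~\ref{th:RM}). But the proposal has a genuine gap precisely where you say you expect the obstacle: the resummation of the non-tree walks is not carried out, and the mechanism you propose for it (a transfer operator on a ``local state'' leading to a self-consistent resolvent equation and an Airy-type problem) is not a workable substitute. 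The difficulty is that the corrections are indexed by the global topology of the walk: each independent cycle of the edge-multigraph imposes a closure constraint on a sum of $\asymp n_e$ band-steps, and the joint probability of all closures involves the covariance structure of the cycle displacements --- in the paper this is Propositions~\ref{p:emb.RM}--\ref{p:emb.bd}, where the determinant is evaluated by Cauchy--Binet as a sum over spanning trees. This is not a function of any local state of the walk, so a transfer-operator recursion cannot see it. What the paper actually does is: classify walks into diagrams with $2s$ vertices and $3s-k$ edges, bound the number of diagrams by $(Cs)^{s+k-1}/(k-1)!$ (Lemma~\ref{l:fs}), count embeddings of each diagram via the local CLT on the circle (Section~\ref{s:rw}), and sum over $s$; the convergence of that sum for $n\ll W^{2/5}$ is exactly the statement that the ``other walk classes'' are controlled.

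Two further points. First, your exponent $5/4$ is obtained by a numerological heuristic (Tracy--Widom rate read through the crossover exponent), not a derivation; in the paper it comes from the quantitative growth $\psi_1(x)\lesssim\sum_s(Cx)^{(5s-6)/2}(cs)^{-(3s-4)/2}\lesssim\exp(Cx^{5/3})$ of the diagram sum, whose Legendre transform gives $\exp(-c|\lambda|^{5/4})$; without the per-diagram bound and the diagram count you cannot get this. Second, inverting the Laplace transform requires a priori control of the left tail (eigenvalues far above the semicircle edge) \emph{before} you know $\sigma_\beta$ exists --- otherwise $\int e^{-c\lambda}\,d\sigma_{R,N}$ need not determine the limit, and indeed need not even be finite uniformly in $N$; the paper secures this via the exponential-moment bound \eqref{eq:decay} (essentially Theorem~\ref{th:norm}) together with the Erd\H{o}s--Tur\'an inequality for precompactness. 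Relatedly, your smooth cutoff $\chi$ isolating one edge is not computable by the moment method; one separates the edges with odd moments instead (harmless here since $\sigma_R$ and $\sigma_L$ coincide, but it should be said). Finally, note that the paper works with Chebyshev polynomials (non-backtracking walks) rather than plain powers $H^{2k}$; with plain powers you would additionally have to control the backtracking corrections uniformly at scale $k\asymp W^{4/5}$, which is a nontrivial extra layer your sketch does not address.
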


\vspace{1mm}\noindent In some sense, the eigenvalues close to the
edges behave as independent random samples from $\sigma_\beta$. The
method of the current paper yields the following manifestation of
this belief:
\[ \PP \left\{ 2W_N^{4/5}(1 - \alpha_{\max}) \geq \lambda \right\}
    - \exp \left\{ - \frac{N}{W_N^{6/5}} \sigma_\beta(\lambda) \right\} = o(1) \]
uniformly in $\lambda \in \RR$. Unfortunately, our description of the left tail of
$\sigma_\beta$ is not sufficiently precise to deduce convergence to a max-stable law
(cf.\ Gnedenko \cite{G} for the description of domains of attraction of max-stable laws.)

\vspace{1mm}\noindent
Khorunzhiy \cite{K} has proved (for a slightly different class of band matrices) that,
if $W_N \gg \log^{3/2} N$,
\begin{equation}\label{eq:norm}
\left\| H_N \big/ (2 \sqrt{2W_N}) \right\| \ToD 1
\end{equation}
as $N \to \infty$ (actually, he has established a stronger form of convergence.)
He has conjectured (private communication) that the same conclusion
holds under the weaker assumption $W_N \gg \log N$. We confirm this conjecture:
\begin{thm}\label{th:norm}
If $W_N / \log N \to +\infty$, then $\left\| H_N \big/ (2 \sqrt{2W_N}) \right\| \ToD 1$.
\end{thm}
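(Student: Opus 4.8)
The plan is to obtain a matching upper and lower bound on $\|H_N/(2\sqrt{2W_N})\|$ that both converge to $1$ in probability. The lower bound is essentially free: by Theorem 1.1 (when $N^{5/6} \ll W_N$) or Theorem 1.2 together with the displayed estimate $\sigma_\beta(\lambda) = \tfrac{2}{3\pi}\lambda^{3/2} + O(\lambda)$ as $\lambda \to +\infty$ (when $\log N \ll W_N \ll N^{5/6}$) one already knows that with probability tending to $1$ there are eigenvalues within $o(1)$ of $\pm 1$, so $\liminf \|H_N/(2\sqrt{2W_N})\| \geq 1$ in probability. Hence the whole content of the theorem is the upper bound $\limsup \|H_N/(2\sqrt{2W_N})\| \leq 1$, and the only regime that is genuinely new is $\log N \ll W_N \ll \log^{3/2} N$, since Khorunzhiy's result \cite{K} already covers $W_N \gg \log^{3/2} N$ (and the wide-band regime is subsumed by Theorem 1.1 anyway). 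So I would fix $\log N \ll W_N$ and prove $\PP\{\|H_N\| \geq 2\sqrt{2W_N}(1+\eps)\} \to 0$ for every $\eps > 0$.

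The main tool is the moment method with a high, $N$- and $W$-dependent power. Write $\|H_N\|^{2m} \leq \tr H_N^{2m}$, so it suffices to show $\EE \tr H_N^{2m} \leq N \cdot (2\sqrt{2W_N})^{2m} (1+\eps)^{2m}$ for a suitable $m = m_N \to \infty$; by Markov's inequality this gives the bound once $N (1+\eps)^{-2m} \to 0$, i.e.\ $m \gg \log N$. Now $\EE \tr H_N^{2m} = \sum_{\text{closed walks } \gamma \text{ of length } 2m \text{ on } \G} \EE \prod_e H_e$, and under either \eqref{eq:pm} or \eqref{eq:ph} the expectation vanishes unless every edge of $\G$ traversed by $\gamma$ is traversed an even number of times (for \eqref{eq:ph}, with matching numbers of $+$ and $-$ orientations); when it is nonzero it equals $1$. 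So $\EE \tr H_N^{2m}$ counts even closed walks of length $2m$ on $\G$. The standard combinatorial bound (as in Wigner's theorem, adapted to the band structure) is: the number of such walks starting at a given vertex is at most $C_m \cdot (\deg \G)^m \cdot (\text{correction})$, where $C_m = \binom{2m}{m}/(m+1)$ is the Catalan number and $\deg \G = 2W_N$. The leading term $C_m (2W_N)^m \sim 4^m (2W_N)^m / (m^{3/2}\sqrt\pi) = (2\sqrt{2W_N})^{2m} \cdot m^{-3/2}\pi^{-1/2}$ already matches the target with room to spare; the task is to control the corrections coming from (i) walks that use a vertex more than the "generic" number of times and (ii) the fact that not every pair of a tree-edge at a given height can be closed up within the cycle $\ZZ/N\ZZ$ — but since we only want an upper bound, and the periodic structure only \emph{restricts} the available steps, (ii) helps rather than hurts. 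The corrections of type (i) contribute, in the usual accounting, a factor bounded by $(1 + C m^3/W_N)$ or so per "excess" return, and choosing $m = m_N$ with $\log N \ll m_N$ and $m_N^{?}/W_N \to 0$ — concretely $m_N \asymp W_N / \log^{2}W_N$ works when $W_N \gg \log N$, or more carefully $m_N = \lfloor c\,\log N \cdot \omega_N\rfloor$ for a slowly growing $\omega_N$ — makes the total correction $(1+o(1))^{m}$, hence $\leq (1+\eps)^{2m}$ for large $N$. Summing over the $N$ starting vertices gives the claimed bound on $\EE \tr H_N^{2m}$.

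The delicate point — and the step I expect to be the main obstacle — is the precise bookkeeping of the correction factor in the walk count when $W_N$ is barely larger than $\log N$: one needs the number of even closed walks of length $2m$ per vertex to be at most $(2W_N)^m C_m \exp(O(m^3/W_N + m^2/N + \dots))$ with \emph{all} error terms made explicit, so that they can be beaten by the $(1+\eps)^{2m}$ slack simultaneously with $m \gg \log N$. This is exactly the kind of refined moment estimate that appears in proofs of edge universality for Wigner matrices (Soshnikov, Sinai--Soshnikov, Feldheim--Sodin), and here it is somewhat easier because we do not need the \emph{lower} bound on the walk count nor the fine cumulant structure — only a one-sided estimate. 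I would therefore model the argument on \cite{FS} (Feldheim--Sodin) or on Khorunzhiy's \cite{K}, tracking how the threshold on $W_N$ enters: Khorunzhiy's $\log^{3/2}N$ presumably comes from taking $m \asymp \log N$ and needing $m^3/W \to 0$, i.e.\ $W \gg \log^3 N$ — wait, that would give $\log^3$, not $\log^{3/2}$ — so in fact the relevant correction must be of order $m^{3/2}/W^{1/2}$ or $m/W^{1/3}$ after a more careful analysis exploiting that excess returns are rare, and pushing this analysis to its limit is what yields the sharp threshold $W_N \gg \log N$. Once the walk-counting lemma is in place with the right exponents, the rest (Markov, choice of $m_N$, combining with the lower bound from Theorems 1.1--1.2) is routine.
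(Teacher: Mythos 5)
Your overall strategy (bound a high moment, apply Markov, get the lower bound from Theorems~\ref{th:RM}--\ref{th:P}) is the right family of ideas, and the lower-bound part is fine. But the proposal's central ingredient --- the count of even closed walks of length $2m$ with ``all error terms made explicit'' --- is precisely what is missing, and you acknowledge as much: the exponents you guess for the corrections ($m^3/W$ per excess return, then $m^{3/2}/W^{1/2}$, then $m/W^{1/3}$) are mutually inconsistent and none of them is derived. This is not a routine detail: the entire theorem lives in how these exponents interact with the choice of $m$, and a wrong exponent gives a wrong threshold (indeed, Khorunzhiy's $\log^{3/2}N$ comes from demanding that the correction $\exp(Cm^3/W^2)$ be $1+o(1)$ with $m\asymp\log N$; the sharp threshold requires only that the correction be $e^{o(m\eps)}$, but one must know what it is). Also, neither of your proposed choices of $m$ is checked against any bound; e.g.\ $m\asymp W_N/\log^2 W_N$ need not satisfy $m\gg\log N$ when $W_N/\log N\to\infty$ slowly.

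The paper closes exactly this gap by \emph{not} working with raw moments. It reuses the diagram machinery already built for the non-backtracking (Chebyshev) traces: Proposition~\ref{p:emb.bd} together with the diagram count of Lemma~\ref{l:fs} gives, for $n\le W_N$,
\[
\frac{\EE\tr H_N^{(2n)}}{(2W_N-1)^n}\le Cn\left\{\exp\Big[\tfrac{Cn^{3/2}}{N^{1/2}}\Big]+\frac{N}{W_N^{6/5}}\exp\Big[\tfrac{Cn^{5/3}}{W_N^{2/3}}\Big]\right\},
\]
the $\exp[Cn^{5/3}/W^{2/3}]$ arising from summing $(Cn^{5/2}/(s^{3/2}W))^s$ over the genus $s$ --- this is the correct analogue of your ``correction factor''. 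The second ingredient is the test function: instead of $\alpha^{2m}$, whose slack at $1+\eps$ is only $(1+\eps)^{2m}$, the paper linearizes $U_{2n}^4$ via (\ref{eq:jk}) to bound $\EE\tr U_{2n}^4\big(H_N/(2\sqrt{2W_N})\big)$, and uses $U_{2n}(\pm(1+\eps))/(2n)\ge c\exp(cn\sqrt\eps)$; the fourth power guarantees positivity so Markov applies. Choosing $n=\lfloor W_N^{3/5}\log^{2/5}N\rfloor$ (and $n=\lfloor N^{1/3}\rfloor$ when $W_N\ge N^{5/6}$) makes $n\sqrt\eps$ dominate both $\log N$ and $Cn^{5/3}/W_N^{2/3}$ exactly when $W_N\gg\log N$, which is where the sharp threshold comes from. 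Without an actual proof of the walk-count correction with these (or equivalent) exponents, your argument cannot be completed.
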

\noindent As one can see from the argument of Bogachev, Molchanov, and Pastur \cite{BMP},
this result is sharp, meaning that the conclusion fails if $W_N / \log N \to 0$.

\vspace{4mm}\noindent
The proofs of the three results are based on a modification of the moment method. The
moment method has been applied to random matrices since the work of Wigner \cite{W2}.
Bogachev, Molchanov, and Pastur \cite{BMP} have applied it to study the spectrum of band
matrices (see above.)

The moment method appears particularly useful to study the eigenvalue statistics at the
spectral edge. In \cite{S1}, Soshnikov has applied it to derive the limiting distribution
of the extreme eigenvalues of Wigner-type random matrices. Extensions of his approach have
allowed to solve a number of related problems, e.g.\ \cite{S2,P,FP}.

In this paper, we apply a modification of the moment method, which goes back at least to
the work of Bai and Yin \cite{BY} (see \cite{me} for more detailed references.) In \cite{FS},
it has been used, in particular, to give another proof of Soshnikov's result. The
combinatorial technique of \cite{FS} is the main ingredient of the current work. In
Section~\ref{s:p}, we review this technique, and formulate the combinatorial statements
needed to prove the main results.

\vspace{2mm}\noindent
Another ingredient of the proof is an asymptotic description of the (classical) random walk
on $\G$. We prove the necessary facts in Section~\ref{s:rw}. In Section~\ref{s:emb}, we apply
these facts to count subgraphs of $\G$ of a certain form. In Section~\ref{s:p.comb}, we
specialise to the setting of Section~\ref{s:p}, and prove the combinatorial statements
formulated there.

\vspace{2mm}\noindent
Section~\ref{s:sl} collects some facts related to Levitan's uniqueness theorem \cite{L}
which we use in the sequel.

\vspace{2mm}\noindent
In Section~\ref{s:pr}, we conclude the proofs of the main results. Section~\ref{s:ph}
explains how to modify the proofs written for (\ref{eq:pm}) to deal with (\ref{eq:ph}).
In Section~\ref{s:cr}, we discuss generalisations and related open problems.

\vspace{4mm}\noindent
{\bf Notation:} In this paper, $C, c, \cdots$ stand for positive constants, the value
of which may change from line to line. If $X, Y$ are quantities depending on some large
parameter, $X \ll Y \iff Y \gg X \iff X = o(Y) \iff X/Y \to 0$.

\section{Preliminaries}\label{s:p}

From this section, we omit the subscript $N$, and write $H$ for $H_N$ and $W$ for $W_N$.
Also, we consider the matrices with entries (\ref{eq:pm}); in Section~\ref{s:ph} we shall
explain the modifications needed for the case (\ref{eq:ph}).

Let $U_n(\cos \theta) = \sin((n+1)\theta)/\sin \theta$ be the Chebyshev polynomials of the
second kind; set $U_{-2} \equiv U_{-1} \equiv 0$. Let
\[ H^{(n)} = (2W-1)^{n/2} \left\{ U_n \left( \frac{H}{2\sqrt{2W-1}} \right)
    - \frac{1}{2W - 1} \, U_{n-2} \left( \frac{H}{2\sqrt{2W-1}} \right) \right\}~.\]
The following lemma (see e.g.\ \cite{me} for a more general version) is at the basis of our
considerations.

\begin{lemma}
For any Hermitian $N \times N$ matrix $H$ satisfying
\[ H_{uv} = \begin{cases}
    \pm 1,  &0 < |u-v|_N \leq W \\
    0,      &\text{otherwise}
    \end{cases}~,\]
and any $u_0, u_n \in \ZZ/N\ZZ$,
\[ H^{(n)}_{u_0u_n} = {\sum}' H_{u_0u_1} H_{u_1u_2} \cdots H_{u_{n-1}u_n}~,\]
where the sum is over all $(n-1)$-tuples $(u_1, u_2, \cdots, u_{n-1})$ (which we regard
as paths $p_n = u_0 u_1 \cdots u_n$), such that
\begin{description}
\item[(a)] $0 < |u_j - u_{j+1}|_N \leq W$, $0 \leq j \leq n-1$ ($p_n$ is a path on $\G$),
\item[(b)] $u_{j+2} \neq u_j$, $0 \leq j \leq n-2$ ($p_n$ is {\em non-backtracking}.)
\end{description}
\end{lemma}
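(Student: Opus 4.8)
The plan is to prove the identity by a generating-function (transfer-matrix) computation based on the three-term recurrence for the Chebyshev polynomials of the second kind, exactly as one does for the adjacency matrix of a regular tree. Write $A = H/\sqrt{2W-1}$, so that the claim becomes a statement about the polynomials $P_n(A) := (2W-1)^{-n/2}H^{(n)}$, namely that $(P_n(A))_{u_0u_n}$ counts non-backtracking paths of length $n$ from $u_0$ to $u_n$ on $\G$. First I would record the elementary fact that, since $H$ has zero diagonal, $(H^k)_{u_0u_n}$ is the number (with signs, but here the signs multiply to $\prod H_{u_ju_{j+1}}$) of \emph{all} paths $p_k = u_0u_1\cdots u_k$ on $\G$ of length $k$; this is just the definition of matrix multiplication together with condition (a). So the whole content is the passage from all walks to non-backtracking walks, and the role of the specific combination $U_n - (2W-1)^{-1}U_{n-2}$ is precisely to perform this passage.

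The key computation is a recursion. Let $f_n(u_0,u_n)$ denote the number of non-backtracking paths of length $n$ from $u_0$ to $u_n$ on $\G$ (with the convention $f_0 = \1_{u_0=u_n}$, $f_1 = \1_{(u_0,u_n)\in\E}$), and let $g_n(u_0,u_n) = \sum' \prod_j H_{u_ju_{j+1}}$ be the corresponding signed sum that the lemma asserts equals $H^{(n)}_{u_0u_n}$. I would establish the recursion obtained by appending an edge at the far end and subtracting the backtracking correction: a walk of length $n$ that is non-backtracking \emph{except possibly at the last step} is a non-backtracking walk of length $n$ \emph{or} a non-backtracking walk of length $n-2$ followed by an edge out and immediately back. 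Since $\G$ is $2W$-regular, this yields, at the level of the signed sums,
\[
H \cdot H^{(n-1)} = H^{(n)} + (2W-1)\, H^{(n-2)}\quad\text{for } n \ge 2,
\]
with $H^{(0)} = I$ and $H^{(1)} = H$ (the factor $2W$ of going out and back becomes $2W-1$ because the ``backward'' edge is the only forbidden one at the previous step, and one of the $2W$ return directions coincides with continuing the previous non-backtracking walk — this bookkeeping is the one genuinely delicate point and I would carry it out carefully, tracking the $j=0$ and $j=n-2$ boundary cases separately). Dividing by powers of $\sqrt{2W-1}$, this is \emph{exactly} the recursion
\[
\frac{A}{\sqrt{2W-1}}\cdot\Bigl(\text{degree-}(n-1)\text{ term}\Bigr) = \bigl(\text{degree-}n\bigr) + \frac{1}{2W-1}\bigl(\text{degree-}(n-2)\bigr),
\]
matched against the Chebyshev recurrence $2x\,U_{n-1}(x) = U_n(x) + U_{n-2}(x)$ with $x = H/(2\sqrt{2W-1})$ and the correction term $\tfrac{1}{2W-1}U_{n-2}$ absorbing the discrepancy between $2W$ and $2W-1$. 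Checking the base cases $n=0,1$ against $U_0 = 1$, $U_{-1}=U_{-2}=0$ closes the induction.

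I would then finish by induction on $n$: assuming $H^{(m)}_{u_0u_m} = g_m(u_0,u_m)$ for all $m < n$ and all endpoints, the walk-decomposition recursion gives $g_n = H\cdot g_{n-1} - (2W-1) g_{n-2}$ on one side, while the polynomial identity gives the same recursion for $H^{(n)}$; hence $H^{(n)}_{u_0u_n} = g_n(u_0,u_n)$. The main obstacle, as indicated, is purely combinatorial rather than algebraic: correctly justifying the coefficient $2W-1$ (not $2W$) in the backtracking correction, i.e.\ verifying that when one removes from ``walks non-backtracking except at step $n$'' the genuinely non-backtracking walks, the residue is in bijection with (non-backtracking walk of length $n-2$) $\times$ (choice of a neighbour distinct from the penultimate vertex), and that this works uniformly including at the boundary step. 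Once that identity is nailed down, everything else is the standard dictionary between Chebyshev-$U$ polynomials and non-backtracking walk counts on regular graphs; since the paper states a more general version is in \cite{me}, I would at this point simply cite that source for the routine verification and present the recursion as the conceptual heart of the argument.
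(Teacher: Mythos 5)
The paper offers no proof of this lemma at all --- it simply points to \cite{me} for a more general version --- so there is no in-paper argument to compare against; your recursion argument is the standard one (and the one carried out in the cited reference), and it is essentially correct. One point needs fixing: the identity $H\cdot H^{(n-1)} = H^{(n)} + (2W-1)\,H^{(n-2)}$ as you assert it ``for $n\ge 2$'' fails at $n=2$. There the out-and-back correction carries coefficient $2W$, not $2W-1$, because there is no preceding step whose reversal is forbidden: one has $H^2 = A^{(2)} + 2W\,I$ where $A^{(2)}$ is the non-backtracking count of length $2$, and indeed the definition gives $H^{(2)} = (2W-1)\bigl(U_2 - \tfrac{1}{2W-1}U_0\bigr)\bigl(\tfrac{H}{2\sqrt{2W-1}}\bigr) = H^2 - (2W-1)I - I = H^2 - 2W\,I$. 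The clean bookkeeping is: set $V_n = (2W-1)^{n/2}U_n\bigl(H/(2\sqrt{2W-1})\bigr)$, so that $V_n = HV_{n-1} - (2W-1)V_{n-2}$ for $n\ge 1$ and $H^{(n)} = V_n - V_{n-2}$; consequently $H^{(n)} = H H^{(n-1)} - (2W-1)H^{(n-2)}$ holds for $n\ge 3$ only, which is exactly the range in which your $2W-1$ count of return directions (one neighbour excluded by the previous non-backtracking constraint) is valid. The induction should therefore verify the three base cases $n=0,1,2$ directly and run the matching recursions from $n=3$ on; the extra $-\tfrac{1}{2W-1}U_{n-2}$ in the definition of $H^{(n)}$ is precisely what absorbs the anomalous $n=2$ coefficient, as you anticipated. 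With that boundary case made explicit, your proof is complete.
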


\noindent The following corollary is immediate from the lemma and (\ref{eq:pm}).

\begin{cor}\label{cor} For the random matrix $H$ as defined in the introduction,
\[ \EE \tr H^{(n(1))} \tr H^{(n(2))} \cdots \tr H^{(n(k))} \]
is equal to the number of $k$-tuples of paths (shortly: $k$-paths)
\[ p_{n(1),\cdots,n(k)} = u_0^1 u_1^1 \cdots u_{n(1)}^1 \,\,
                          u_0^2 u_1^2 \cdots u_{n(2)}^2 \,\, \cdots \,\,
                          u_0^k u_1^k \cdots u_{n(k)}^k \]
that satisfy (a),(b), and also
\begin{description}
\item[(c)] $u_{n(j)}^j = u_0^j$, $1 \leq j \leq k$ (the paths are closed);
\item[(d)] the number
\[ \# \left\{ (i, j) \, | \, u_i^j = u, \, u_{i+1}^j = v \right\}
    - \# \left\{ (i, j) \, | \, u_i^j = v, \, u_{i+1}^j = u \right\} \]
is even, for any $u,v \in \ZZ/N\ZZ$.
\end{description}
\end{cor}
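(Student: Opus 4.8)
The plan is to apply the Lemma for each realisation of $H$, expand the product of traces into a sum of monomials in the entries $H_{uv}$, and then take the expectation using independence and (\ref{eq:pm}). Since $H$ almost surely satisfies the hypotheses of the Lemma (it is Hermitian, with entries $\pm 1$ on the band and $0$ elsewhere), I would first write, for each fixed $n$,
\[ \tr H^{(n)} = \sum_{u_0 \in \ZZ/N\ZZ} H^{(n)}_{u_0 u_0}
  = \sum_{u_0} {\sum}' H_{u_0 u_1} H_{u_1 u_2} \cdots H_{u_{n-1} u_0}~, \]
where the inner sum runs over non-backtracking paths on $\G$ from $u_0$ back to $u_0$, i.e.\ over closed non-backtracking paths of length $n$ (conditions (a), (b), (c)). Multiplying the $k$ such expansions yields
\[ \tr H^{(n(1))} \cdots \tr H^{(n(k))}
  = \sum_{p} \prod_{j=1}^k \prod_{i=0}^{n(j)-1} H_{u_i^j u_{i+1}^j}~, \]
the sum over $k$-paths $p = p_{n(1),\cdots,n(k)}$ satisfying (a), (b), (c).

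Next I would take the expectation and move it inside this finite sum. Because $H$ is Hermitian with real entries, $H_{vu} = H_{uv}$, so for a fixed $k$-path the monomial depends only on the unordered pairs traversed: it equals $\prod_{\{u,v\}} H_{uv}^{m_{uv}(p)}$, where $m_{uv}(p)$ counts the steps $(i,j)$ with $\{u_i^j, u_{i+1}^j\} = \{u,v\}$. The distinct band pairs $\{u,v\}$ index mutually independent random variables, and $\EE H_{uv}^m = \1[m \text{ even}]$ by (\ref{eq:pm}); hence
\[ \EE \prod_{\{u,v\}} H_{uv}^{m_{uv}(p)}
  = \prod_{\{u,v\}} \1[m_{uv}(p) \text{ even}]
  = \1[\text{all } m_{uv}(p) \text{ are even}]~. \]

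Finally I would check that ``all $m_{uv}(p)$ even'' is precisely condition (d). Writing $a_{uv} = \# \{ (i,j) : u_i^j = u, \, u_{i+1}^j = v \}$ and $b_{uv} = a_{vu}$, one has $m_{uv}(p) = a_{uv} + b_{uv}$, and $a_{uv} + b_{uv}$ is even if and only if $a_{uv} - b_{uv}$ is even. So the $k$-paths surviving the expectation are exactly those obeying (a)--(d), and their number equals $\EE \tr H^{(n(1))} \cdots \tr H^{(n(k))}$.

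There is no serious obstacle here --- the statement is genuinely immediate from the Lemma. The only bookkeeping that requires a moment's care is that Hermitian symmetry forces edges to be counted as \emph{undirected}, so a single variable $H_{uv} = H_{vu}$ may carry a power accumulated from traversals in both directions; and that the parity condition in (d), phrased as a \emph{difference} of directed counts, must be matched with the natural condition that each undirected edge is used an even number of times. Both points are settled at once by $a + b \equiv a - b \pmod 2$.
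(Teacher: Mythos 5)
Your argument is correct and is exactly the standard expansion the paper has in mind when it declares the corollary ``immediate from the lemma and (\ref{eq:pm})'': expand each trace via the Lemma, take the expectation term by term using independence of the distinct band entries and $\EE H_{uv}^m = \1[m\ \text{even}]$, and observe that the parity of the directed difference in (d) coincides with the parity of the total number of traversals of the undirected edge since $a-b\equiv a+b \pmod 2$. No gap.
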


\noindent As in \cite{FS}, we group the $k$-paths satisfying (a)-(d) into topological
equivalence classes, which are in one-to-one correspondence with {\em $k$-diagrams}:

\begin{dfn}\label{def:diag.k} Let $\beta \in \{1,2\}$.\hfill
A {\em $k$-diagram} is an (undirected) multigraph $\bar{G} = (\bar{V}, \bar{E})$,
together with a $k$-tuple of circuits
\begin{equation}\label{eq:barp} \bar{p} = \bar{u}_0^1 \bar{u}_1^1 \cdots \bar{u}_0^1 ,\,\,\,
             \bar{u}_0^2 \bar{u}_1^2 \cdots \bar{u}_0^2 ,\,\,\,
             \cdots ,\,\,\, \bar{u}_0^k \bar{u}_1^k \cdots \bar{u}_0^k
\end{equation}
on $\bar{G}$, such that
\begin{itemize}
\item $\bar{p}$ is  {\em non-backtracking} (meaning that in every circuit no edge is
followed by its reverse, unless the edge is a loop $\bar{u}\bar{u}$);
\item For every $(\bar{u}, \bar{v}) \in \bar{E}$,
\[ \# \left\{ (i,j) \, | \, \bar{u}_j^i = \bar{u}, \, \bar{u}_{j+1}^i = \bar{v} \right\}
+ \# \left\{ j \, | \, \bar{u}_j^i = \bar{v}, \, \bar{u}_{j+1}^i = \bar{u} \right\} = 2~;\]
\item the degree of $u_0^i$ in $\bar{G}$ is 1; the degrees of all the other vertices
are equal to 3.
\end{itemize}
\end{dfn}

\begin{rmk}\label{rem:clar} We emphasise that $\bar{G}$ is a multigraph in which the
coinciding edges are distinguished. Thus, strictly speaking, a circuit is not uniquely
determined by the vertices it passes. Still, we find it convenient to use the notation
as in (\ref{eq:barp}). Next, we do not distinguish two diagrams which are isomorphic
in the natural sense. Thus, by a diagram we actually mean an equivalence class (e.g.\
in the second part of the following lemma.)
\end{rmk}

\noindent The following lemma summarises some properties of $k$-diagrams from \cite[Part~II]{FS}.

\begin{lemma}\label{l:fs}\hfill
\begin{enumerate}
\item For every $k$-diagram, there exists an integer $s \geq k$ (``non-orientable genus'', cf.\
Figures 1,2), such that the diagram has $2s$ vertices and $3s-k$ edges.
\item The number $D_k(s)$ of $k$-diagrams corresponding to a given $s$ satisfies
\[ \frac{(s/C)^{s+k-1}}{(k-1)!} \leq D_k(s) \leq \frac{(Cs)^{s+k-1}}{(k-1)!}\]
\end{enumerate}
\end{lemma}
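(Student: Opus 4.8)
The plan is to follow the combinatorial bookkeeping of \cite[Part~II]{FS}, treating part~(1) as an Euler-type count and bracketing $D_k(s)$ by an ``encode'' argument for the upper bound and a ``generate'' argument for the lower bound.

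For part~(1): write $V=|\bar V|$ and $E=|\bar E|$. Exactly $k$ vertices (the $\bar u_0^{\,i}$) have degree $1$ and the remaining $V-k$ have degree $3$, so the handshake identity $k+3(V-k)=2E$ gives $3V=2E+2k$; in particular $V$ is even, and writing $V=2s$ yields $E=3s-k$, which is the asserted numerology. To obtain $s\ge k$, note that every edge lies on at least one circuit (otherwise the parity condition of Definition~\ref{def:diag.k} would read $0=2$), so every connected component of $\bar G$ carries a closed non-backtracking walk of positive length and is therefore not a tree, i.e.\ has first Betti number $\ge 1$. If $\bar G$ has $c$ connected components then $\beta_1(\bar G)=E-V+c=(3s-k)-2s+c=s-k+c\ge c$, whence $s\ge k$. (The reading of $s$ as a non-orientable genus, which is not needed for these estimates, is the one suggested by Figures~1,2.)

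For the upper bound in part~(2) I would \emph{encode} a $k$-diagram by reading off its circuits. Concatenate $\bar p^{\,1},\dots,\bar p^{\,k}$ into a single word of edge-traversals: its length is $2E=6s-2k$ (each edge is used exactly twice), and it is cut into the $k$ circuits by $k-1$ break points, which can be placed in at most $\binom{6s}{k-1}\le (6s)^{k-1}/(k-1)!$ ways. Now trace the word, labelling the vertices $1,2,3,\dots$ in order of first appearance --- a canonical representative of the isomorphism class, cf.\ Remark~\ref{rem:clar}. Each step of the word is of one of three kinds: a fresh edge to a fresh vertex (no choice; exactly $2s-k$ such steps, one per non-root vertex), a fresh edge to an already-seen vertex (at most $2s$ choices; exactly $s$ such steps, by the edge count), or a re-traversal of an edge used once before (at most $3$ choices, since the current vertex has degree $\le 3$; exactly $E=3s-k$ such steps). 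All remaining requirements --- non-backtracking, each edge used exactly twice, the degree constraints --- only shrink the set of legal continuations, so $D_k(s)\le (2s)^{s}\cdot 3^{3s-k}\cdot (6s)^{k-1}/(k-1)!\le (Cs)^{s+k-1}/(k-1)!$.

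For the lower bound I would instead \emph{generate} many diagrams. Starting from a single long non-backtracking circuit, attach $\asymp s$ elementary local modifications (``crosscap/handle insertions''), each gluing together two of the $\asymp s$ available positions and raising the genus by a controlled bounded amount; after adjusting, this produces $\ge (s/C)^{s}$ diagrams of genus exactly $s$, and since each insertion carries only $O(1)$ local automorphisms the number of distinct isomorphism classes is still $\ge (s/C)^{s}$ (the over-count being $\le C^{s}$, absorbed into $C$). To build $k$-diagrams, additionally pinch off short closed sub-circuits at $k-1$ of the $\asymp s$ positions along the main circuit and declare them to be circuits $2,\dots,k$; the pinch positions can be chosen in $\binom{s}{k-1}\asymp s^{k-1}/(k-1)!$ ways, and, after checking that this keeps everything non-backtracking and pairwise non-isomorphic, one obtains $D_k(s)\ge (s/C)^{s+k-1}/(k-1)!$.

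I expect the lower bound to be the main obstacle: one must commit to an explicit insertion-and-pinching rule for which the genus comes out exactly $s$, the circuits stay non-backtracking, and the automorphism over-count is provably of the form $C^{s}$ so that it disappears into the constant. Part~(1) and the upper bound are, by contrast, routine once the handshake identity and the circuit-reading encoding are in place.
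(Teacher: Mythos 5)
First, note that the paper does not prove this lemma at all: it is stated as a summary of facts imported from \cite[Part~II]{FS}, so there is no in-paper argument to compare yours against. Judged on its own terms, your part~(1) is correct and complete: the handshake identity $k+3(V-k)=2E$ forces $V=2s$, $E=3s-k$, and the observation that every component carries a closed non-backtracking walk of positive length (hence is not a tree) gives $\beta_1=s-k+c\ge c$, i.e.\ $s\ge k$. Your upper bound is also essentially sound: the first-appearance canonical labelling plus the three-way classification of steps gives $D_k(s)\le C^{s}(2s)^{s}(6s)^{k-1}/(k-1)!$ (you should also charge $3^{6s-2k}$ for recording which of the three kinds each step is, but that is absorbed into $C^{s}$), which has the required form.

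The genuine gap is the lower bound, and you have identified it yourself: what you give is a plan, not a proof. ``Attach $\asymp s$ elementary local modifications, each raising the genus by a controlled bounded amount; after adjusting, this produces $\ge (s/C)^{s}$ diagrams of genus exactly $s$'' leaves every load-bearing claim unverified --- that the insertions preserve the degree constraints (all non-root vertices must have degree exactly $3$, not at most $3$), that the resulting circuits remain non-backtracking and traverse each edge exactly twice, that the parameter lands on exactly $s$ rather than in a window around it, and above all that the map from choice-sequences to isomorphism classes is at most $C^{s}$-to-one. The last point is where such arguments usually die: without an explicit construction one cannot rule out that the $\binom{s}{\Theta(s)}$-type overcount from reordering the insertions collapses the count below $(s/C)^{s}$. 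A clean route is to exhibit an explicit injective family --- for instance, fix a long backbone circuit and a specific local gadget, parametrise the diagrams by an ordered choice of attachment points read off along the backbone (which makes the construction canonically decodable, hence injective up to a bounded factor), and only then distribute the $k-1$ secondary circuits, which yields the $s^{k-1}/(k-1)!$ factor. Until such a rule is written down and its injectivity checked, the lower half of part~(2) is not established.
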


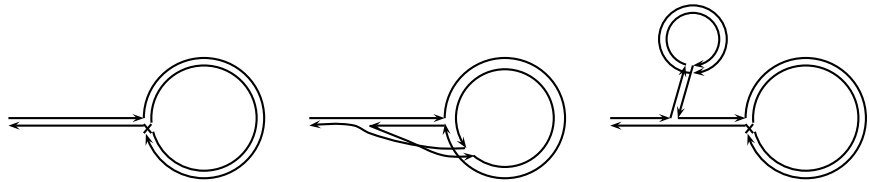
\begin{figure}[h]
\vspace{2cm}
\setlength{\unitlength}{1cm}
\begin{pspicture}(-1,0)
\psline(.2,1)(2, 1)
\psarcn(2.8,1){.8}{180}{195}
\psset{arrows=-}
\psline(2,.8)(2.1,.92)
\psarcn(2.8,1){.7}{185}{195}
\psline(2,.92)(2.1,.8)
\psset{arrows=->}
\psline(2,.9)(.2,.9) %
\psline(4.2,1)(6,1)
\psarcn(6.8,1){.8}{180}{185}
\psline(6,.9)(5,.9)
\pscurve(5,.9)(5.5,.7)(6,.5)(6.4,.5)
\psarc(6.8,1){.65}{230}{218}
\pscurve(6.27,.6)(5.9, .6)(5,.8)(4.8,.9)(4.2,.9) %
\psline(8.2,1)(9, 1)
\psline(9, 1)(9.2, 1.7)
\psarcn(9.3, 2.05){.35}{255}{270}
\psarcn(9.3, 2.05){.45}{265}{270}
\psline(9.3, 1.7)(9.1, 1)
\psline(9.1, 1)(10, 1)
\psarcn(10.8,1){.8}{180}{195}
\psset{arrows=-}
\psline(10,.8)(10.1,.92)
\psarcn(10.8,1){.7}{185}{195}
\psline(10,.92)(10.1,.8)
\psset{arrows=->}
\psline(10,.9)(8.2,.9)
\end{pspicture}
\caption[Fig. 4.5]{Some 1-diagrams: $s = 1$ (left),
    $s = 2$ (center, right)}\label{fig:diag1}
\vspace{2mm}
\end{figure}

\begin{figure}[h]
\vspace{3cm}
\setlength{\unitlength}{1cm}
\begin{pspicture}(-1,0)
\psline(.2,1)(1, 1)
\psarcn(1.6,1){.6}{180}{187}
\psline(1,.9)(.2,.9)
\psline(3,1)(2.05, 1)
\psarc(1.6,1){.5}{0}{352}
\psline(2.05, .9)(3,.9)
\psline(4,0)(4, 1)
\psarcn(4,1.5){.5}{270}{278}
\psset{arrows=-}
\psline(4.1,1)(4,1.1)
\psarcn(4,1.5){.4}{268}{282}
\psline(4.1,1.1)(4,1)
\psset{arrows=->}
\psline(4.1,1)(4.1,0)
\psline(4.7,2)(4.7, 1)
\psarc(4.7,.5){.5}{90}{82}
\psset{arrows=-}
\psline(4.8,1)(4.7,.9)
\psarc(4.7,.5){.4}{90}{82}
\psline(4.8,.9)(4.7,1)
\psset{arrows=->}
\psline(4.8,1)(4.8,1.5)
\psline(4.8,1.5)(5.8, 1.5)
\psarc(6.4,1.5){.6}{180}{173}
\psset{arrows=-}
\psline(5.8, 1.6)(5.9, 1.5)
\psarc(6.4,1.5){.5}{180}{170}
\psline(5.9,1.6)(4.8, 1.6)
\psset{arrows=->}
\psline(4.8,1.6)(4.8,2)
\psline(7.7,1)(8.5, 1)
\psline(8.5, 1)(8.7, 1.7)
\psarcn(8.8, 2.05){.35}{255}{270}
\psarcn(8.8, 2.05){.45}{265}{270}
\psline(8.8, 1.7)(8.6, 1)
\psline(8.6, 1)(9.5, 1)
\psarcn(10.1,1){.6}{180}{190}
\psline(9.5,.9)(7.7,.9)
\psline(11.1, 2)(10.45, 1.35)
\psarc(10.1,1){.5}{45}{37}
\psline(10.5, 1.25)(11.17, 1.93)
\end{pspicture}
\caption[Fig. 4.5]{Some 2-diagrams: $s = 2$ (left, center),
    $s = 3$ (right)}\label{fig:diag2}
\end{figure}
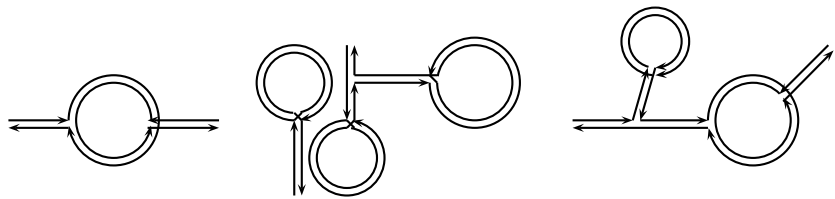

\noindent Thus our goal is to compute the number of $k$-paths corresponding to a given diagram.
It will be convenient to consider connected diagrams only. This can be done as follows: set
\[\begin{split}
T(n(1))         &= \EE \tr H^{(n(1))}~, \\
T(n(1), n(2))   &= \EE \tr H^{(n(1))} \tr H^{(n(2))} - T(n(1)) T(n(2))~, \\
&\cdots\cdots \\
T(n(1), \cdots, n(k)) &= \EE \prod_{j=1}^k H^{(n(j))}
    - \sum_\Pi \prod_{P \in \Pi} T(\{n(j)\}_{j \in P})~,
\end{split}\]
where the sum is over non-trivial partitions of $\{1,2,\cdots,k\}$ into disjoint sets.
It is not hard to see that indeed $T(n(1), \cdots, n(k))$ counts the number of $k$-paths
$p_{n(1),\cdots,n(k)}$ that satisfy (a)--(d) and correspond to connected diagrams.
\\*
\\*
Now we formulate the two main technical statements of this paper.

\begin{prop}\label{p:main.rm} Let $W \gg N^{5/6}$, $R \geq 0$. For any
\[ 1 \leq n(1) \leq n(2) \cdots \leq n(k) \leq R N^{1/3} \]
such that $n(1) + \cdots + n(k) = 2n$ is even,
\begin{multline*}
\frac{T(n(1), \cdots, n(k))}{(2W-1)^n} = \\
\left[ \prod_{j=1}^k n(j) \right] \, \phi_k(n(1)/N^{1/3}, \cdots, n(k)/N^{1/3})
    + N^{k/3} \eps_N(n(1), \cdots, n(k))~,
\end{multline*}
where
\begin{enumerate}
\item $\phi_k(z_1, \cdots, z_k) = \sum_{s \geq k} g_{k, s}(z_1, \cdots, z_k)$,
$g_{k,s}$ being a continuous homogeneous function of degree $3(s-k)$,
\[ g_{k,s}(z_1, \cdots, z_k) \leq \frac{(C\|z\|)^{3(s-k)}}{(cs)^{2s-3k+1}}~.\]
\item $\eps_N(n(1), \cdots, n(k)) = o(1)$, where the implicit constant depends only
on $R$ and on $W/N^{5/6}$, and
\[ \sum_{1 \leq n(1) \leq \cdots \leq n(k) \leq N^{1/3}}
    \frac{\eps_N(n(1), \cdots, n(k))}{n(1) \cdots n(k)} \leq C_k.\]
\end{enumerate}
If $n(1) + \cdots + n(k) = 2n+1$ is odd, $T(n(1), \cdots, n(k)) = 0$.
\end{prop}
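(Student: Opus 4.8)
The plan is to expand $T(n(1),\dots,n(k))/(2W-1)^n$ as a sum over connected $k$-diagrams and, for each diagram, to reduce the number of realizing $k$-paths to estimates for the classical random walk on $\G$ whose steps are uniform in $\{\pm 1,\dots,\pm W\}$. The starting point is the reduction already visible in the text: by Corollary~\ref{cor} and the definition of $T$, the quantity $T(n(1),\dots,n(k))$ is the number of $k$-paths on $\G$ of lengths $n(1),\dots,n(k)$ obeying (a)--(d) whose topological type is a \emph{connected} $k$-diagram; grouping by type, $T(n(1),\dots,n(k))=\sum_{\bar G}\mathcal{N}_{\bar G}$, the sum over connected $k$-diagrams $\bar G$, where $\mathcal{N}_{\bar G}$ counts realizations of $\bar G$ in $\G$ with the prescribed circuit lengths. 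For the parity statement: a $k$-path contributes to the expectation in Corollary~\ref{cor} only if every edge of $\G$ is traversed an even number of times, so $T=0$ unless $\sum_j n(j)$ is even.

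The heart of the matter is to evaluate $\mathcal{N}_{\bar G}$ for a fixed connected $k$-diagram $\bar G$ of genus $s$ (so $2s$ vertices, $3s-k$ edges, first Betti number $s-k+1$, by Lemma~\ref{l:fs}(1)). An edge of $\bar G$ is a contracted non-backtracking chain in $\G$ whose length is part of the data, the chain lengths summing to $n$; distributing the circuit lengths $n(1),\dots,n(k)$ among the chains along each circuit is what produces the prefactor $\prod_j n(j)$ and the dependence on the individual $n(j)$. Fixing one vertex (a factor $N$, by translation invariance) and a spanning tree of $\bar G$, the steps along the tree chains are free, $\asymp(2W-1)$ each, so after dividing by $(2W-1)^n$ the count $\mathcal{N}_{\bar G}/(2W-1)^n$ becomes $N$ times an expectation of the form $\EE\bigl[\prod_{i=1}^{s-k+1}r_{m_i}(\Delta_i)\bigr]$, where $r_m$ is the normalized end-point distribution of a non-backtracking $m$-step walk and the displacements $\Delta_i$ that the $s-k+1$ independent cycles must close are measurable functions of the free steps. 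Evaluating these expectations is exactly what Section~\ref{s:rw} should provide: a local central limit theorem for chains shorter than the mixing time $\asymp N^2/W^2$, and equidistribution on $\ZZ/N\ZZ$ for longer chains; converting them into counts of subgraphs of $\G$ of the required shape is the role of Section~\ref{s:emb}. Because $W\gg N^{5/6}$ and $n(j)\le RN^{1/3}$ force $N^2/W^2\ll N^{1/3}$, every long chain is past the mixing time, so a cycle built from long chains costs $\asymp 1/N$ while short chains are controlled by the local limit theorem; summing over the admissible length distributions turns the genus-$s$ contribution into $\prod_j n(j)$ times a homogeneous function $g_{k,s}$ of degree $3(s-k)$ with the stated bound, plus an error term.

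It then remains to resum over $s\ge k$. Using the count $D_k(s)$ of diagrams from Lemma~\ref{l:fs}(2), the series $\sum_{s\ge k}g_{k,s}=\phi_k$ and the companion error series converge because the bound $g_{k,s}\le(C\|z\|)^{3(s-k)}/(cs)^{2s-3k+1}$ carries a denominator $(cs)^{2s-3k+1}$ that defeats the growth $(Cs)^{s+k-1}/(k-1)!$ of $D_k(s)$; the same bookkeeping yields $\eps_N=o(1)$ with $\sum_{n(1)\le\cdots\le n(k)}\eps_N/(n(1)\cdots n(k))\le C_k$, the dependence of the $o(1)$ on $R$ and on $W/N^{5/6}$ being precisely the dependence of the random walk estimates on those parameters.

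I expect the main obstacle to be making the random walk input quantitative and \emph{uniform} — over all chain lengths from $O(1)$ up to $RN^{1/3}$, over the whole range $W/N^{5/6}\to\infty$, and in the presence of correlations among the $\Delta_i$ when several cycles share a chain — sharply enough that the accumulated errors stay summable against the super-exponentially many diagrams. Equivalently, one must show that each additional handle genuinely gains the factor $(cs)^{-(2s-3k+1)}$ in $g_{k,s}$, and that the error loses only a comparable amount; this quantitative interplay between the combinatorics of Lemma~\ref{l:fs} and the mixing of the walk is exactly what Sections~\ref{s:rw}--\ref{s:p.comb} are designed to supply.
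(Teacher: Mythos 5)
Your proposal follows essentially the same route as the paper: decompose $T$ over connected $k$-diagrams, sandwich the path count between $\Emb_+$ and $\Emb$ for the associated length system $\Eq_{\mathpzc{D}}$, evaluate each embedding count via the random-walk equidistribution/local-limit estimates of Section~3 (with the polytope of admissible chain lengths producing the $\prod_j n(j)$ prefactor and the homogeneity of $g_{k,s}$), truncate at a genus $s_0$, and kill the tail by playing the $(cs)^{-(2s-3k+1)}$ gain per diagram against the $(Cs)^{s+k-1}$ diagram count of Lemma~2.5. The quantitative uniformity you flag as the main obstacle is indeed exactly what Propositions~3.1--3.3 and 4.1, 4.3 are built to supply, so the argument is correct as outlined.
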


\begin{prop}\label{p:main.rw}
Let $W \ll N^{5/6}$, $R \geq 0$. For any
\[ 1 \leq n(1) \leq \cdots \leq n(k) \leq R W^{2/5} \]
such that $n(1) + \cdots + n(k) = 2n$ is even,
\begin{multline*}
\frac{T(n(1), \cdots, n(k))}{(2W-1)^n} = \\
\frac{N}{W^{6/5}} \, \left\{ \left[ \prod_{j=1}^k n(j) \right]\,
    \psi_k\left(\frac{n(1)}{W^{2/5}}, \cdots, \frac{n(k)}{W^{2/5}}\right)  + W^{2k/5}
        \eps(n(1), \cdots, n(k))\right\}~,
\end{multline*}
where
\begin{enumerate}
\item $\psi_k(z_1, \cdots, z_k) = \sum_{s \geq k} h_{k, s}(z_1, \cdots, z_k)$,
$h_{k,s}$ being a homogeneous function of degree $(5s-5k-1)/2$, continuous outside
the origin,
\[ h_{k,s}(z_1, \cdots, z_k) \leq \frac{(C\|z\|)^\frac{5s-5k-1}{2}}{(cs)^\frac{3s-5k+1}{2}}~.\]
\item $\eps_N(n(1), \cdots, n(k)) = o(1)$, where the implicit constant depends only
on $R$ and on $N^{5/6}/W$, and
\[ \sum_{1 \leq n(1) \leq \cdots \leq n(k) \leq W^{2/5}}
    \frac{\eps_N(n(1), \cdots, n(k))}{n(1) \cdots n(k)} \leq C_k.\]
\end{enumerate}
If $n(1) + \cdots + n(k) = 2n+1$ is odd, $T(n(1), \cdots, n(k)) = 0$.
\end{prop}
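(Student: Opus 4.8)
The plan is to mimic the structure of the proof of Proposition~\ref{p:main.rm}, but in the regime $W \ll N^{5/6}$, where the diffusion on $\G$ has not yet had time to wrap around the cycle $\ZZ/N\ZZ$, so the relevant combinatorics is governed by a random walk on $\ZZ$ (rescaled) rather than on a torus. First I would recall, from Corollary~\ref{cor} and the discussion following Lemma~\ref{l:fs}, that $T(n(1),\dots,n(k))/(2W-1)^n$ is exactly the number of $k$-paths on $\G$ that (i) realise a given connected $k$-diagram $\bar G$ with genus $s$ (hence $2s$ vertices and $3s-k$ edges), summed over all such diagrams $s \ge k$, and divided by the appropriate power of $2W-1$ coming from the Chebyshev normalisation. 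So the task reduces to: for each connected $k$-diagram of genus $s$, count the number of ways to embed it into $\G$, i.e.\ to assign to its $2s$ vertices distinct elements of $\ZZ/N\ZZ$ so that every edge $(\bar u,\bar v)$ becomes an edge of $\G$, non-backtracking-ly consistent with the circuit structure. This is precisely the subgraph-counting problem announced for Section~\ref{s:emb}, fed by the random-walk estimates of Section~\ref{s:rw}.

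The key step is the asymptotics of the embedding count. Writing an embedding as a choice of one ``root'' vertex location ($N$ choices, by translation invariance on $\ZZ/N\ZZ$) followed by the successive increments along the edges — each increment is an integer in $[-W,W]\setminus\{0\}$ — the number of embeddings is $N$ times the number of integer solutions of the ``zero-sum-around-each-circuit'' linear system with increments bounded by $W$. After the rescaling $z_j = n(j)/W^{2/5}$, each individual increment contributes a factor $\asymp W$, and the constraints (the circuits must close up, and the non-backtracking/distinctness conditions) reduce the count by a factor that, by a local central limit theorem for the walk with steps uniform on $[-W,W]$, is of order $W^{-1}$ per independent cycle, with a density prefactor. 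The genus-$s$, $k$-component diagram has $3s-k$ edges and $2s-k$ independent cycles (first Betti number, since a connected graph on $2s$ vertices with $3s-k$ edges has cyclomatic number $3s-k-(2s-1) = s-k+1$ per component... — more precisely $s-k+1$ over all, but I would track this carefully), so the heuristic count is $N\cdot W^{3s-k}\cdot W^{-(s-k+1)}\cdot(\text{density})$ in the un-rescaled variables; converting edge-lengths to the variables $z_j$ and collecting powers of $W$ should produce exactly the homogeneity degree $(5s-5k-1)/2$ claimed for $h_{k,s}$, the overall prefactor $N/W^{6/5}$, the Jacobian factor $\prod n(j)$, and the bound $h_{k,s}(z)\le (C\|z\|)^{(5s-5k-1)/2}/(cs)^{(3s-5k+1)/2}$ once combined with the diagram count $D_k(s)$ from Lemma~\ref{l:fs}(2). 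The $W^{5/6}$ threshold enters because the diffusive spread of a length-$n$ walk is $\asymp W\sqrt n \asymp W\cdot W^{1/5} = W^{6/5}$ when $n\asymp W^{2/5}$, which must be $\ll N$ for the torus not to matter; $W^{6/5}\ll N \iff W \ll N^{5/6}$.

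The main obstacle, I expect, is the uniform control of the error term $\eps_N$ — in particular the summability bound $\sum_{n(1)\le\cdots\le n(k)\le W^{2/5}} \eps_N(\cdot)/(n(1)\cdots n(k)) \le C_k$. This requires (a) a local CLT / Gaussian approximation for the lazy-step walk on $\ZZ$ with an error that is uniform as the total length ranges up to $R W^{2/5}$ and that survives being summed over the $\asymp W^{2k/5}$-many tuples after division by $n(1)\cdots n(k)$ (a harmonic-series-type gain of at most $(\log W)^k$, which must be beaten by a genuine power saving in the CLT error, hence the need for a sharp local limit theorem with polynomial rate); (b) control of the ``collision'' terms where two a priori distinct vertices of $\bar G$ are forced to coincide or two increments cancel — these are lower-order in $W$ but must be shown to be so uniformly; and (c) the passage from the finite cyclic group $\ZZ/N\ZZ$ to $\ZZ$, i.e.\ showing that the wrap-around contributions are negligible exactly when $W\ll N^{5/6}$. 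Steps (a) and (c) are where the hypothesis $W/N^{5/6}\to 0$ is used in an essential, quantitative way, and getting the error uniform in $W/N^{5/6}$ (as the statement demands — the implicit constant depends only on $R$ and on $N^{5/6}/W$) is the delicate part; I would handle it by the same careful bookkeeping of ``marked'' edges and vertices as in \cite{FS}, localising the analysis to one cycle of the diagram at a time and using the bound on $D_k(s)$ to sum over diagrams at the end.
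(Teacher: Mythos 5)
You follow essentially the same route as the paper: reduce $T(n(1),\cdots,n(k))$ to counting embeddings of connected $k$-diagrams via the sandwich (\ref{eq:paths_emb}), evaluate each embedding count through the local limit theorem for the band random walk (Propositions~\ref{p:clt1}--\ref{p:clt3} together with the non-backtracking correction of Lemma~\ref{l:nbtrw}), and sum over diagrams using the bound on $D_k(s)$ from Lemma~\ref{l:fs}; your identification of the threshold through $W\sqrt{n}\asymp W^{6/5}\ll N$ is precisely the hypothesis $n \ll N^2/W^2$ of Proposition~\ref{p:emb.P}. The one place where your execution must be upgraded is the Gaussian sum over vertex positions: treating the closure constraints ``one independent cycle at a time'' is not quite legitimate since the cycles share edges and their constraints are correlated, and the paper instead evaluates this sum globally by Poisson summation followed by the Cauchy--Binet/spanning-tree identity, which is what yields a well-defined limiting homogeneous function $h_{k,s}$ rather than merely the correct power of $W$.
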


\section{Random walk on the circle}\label{s:rw}

Let $\G = (\ZZ/ N \ZZ, \E)$ be defined by (\ref{eq:def.g}) (in this section we do
not assume that $W \gg 1$.) Denote by $\mathcal{W}_n(R)$ the number of paths of length
$n$ in $\G$ between two (fixed) vertices $u,v$ such that $|u-v|_N = R$. We prove the
following statements.

\begin{prop}\label{p:clt1} If $1 \ll n \ll N^2/W^2$, $R \ll n^{3/4} W$, and $R \leq 0.49 N$, then
\[ \frac{\mathcal{W}_n(R)}{(2W)^n} = (1+o(1))
    \left[ \frac{\pi n}{3}(W+1) (2W+1) \right]^{-1/2}
    \exp \left\{ - \frac{3R^2}{n (W+1)(2W+1)} \right\}~.\]
\end{prop}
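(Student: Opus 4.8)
The plan is to recognise the left-hand side as a transition probability and to prove the formula as a quantitative (moderate-deviations) local central limit theorem. Fix $u,v$ with $|u-v|_N=R$; since $W\ll N$ in the regime considered, every vertex of $\G$ has exactly $2W$ neighbours, so $\mathcal W_n(R)/(2W)^n=\PP(S_n\equiv R\bmod N)$, where $S_n=X_1+\dots+X_n$ and the $X_i$ are i.i.d.\ uniform on $\{-W,\dots,W\}\setminus\{0\}\subset\ZZ$ (using the reflection symmetry $X_i\leftrightarrow -X_i$ to replace the signed distance $v-u$ by $R$). Writing $\sigma^2:=\EE X_1^2=\tfrac16(W+1)(2W+1)$, one has $\PP(S_n\equiv R\bmod N)=\sum_{m\in\ZZ}\PP_\ZZ(S_n=R+mN)$, and the asserted right-hand side is exactly the $m=0$ Gaussian term $(2\pi n\sigma^2)^{-1/2}\exp(-R^2/(2n\sigma^2))$. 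So the task splits into (i) a local CLT on $\ZZ$ with \emph{relative} error $o(1)$, uniform for $|\rho|\le R\ll n^{3/4}W$, and (ii) the observation that the terms $m\ne 0$ are negligible.

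Step (ii) is the easy part, and is where $R\le 0.49N$ and $n\ll N^2/W^2$ enter: for $m\ne 0$ we have $|R+mN|\ge 0.51N$, so — granting even a crude Gaussian majorant valid for every $\rho$ — the contribution of $m\ne0$ is at most $(2\pi n\sigma^2)^{-1/2}\sum_{m\ne0}\exp(-(0.51mN)^2/(2n\sigma^2))$, which is smaller than the $m=0$ term by a factor $\exp(-cN^2/(n\sigma^2))=\exp(-cN^2/(nW^2))=o(1)$. (The bound $0.49N$ rather than $N/2$ is essential: at $R=N/2$ the $m=0$ and $m=-1$ terms coincide and one would obtain twice the formula.)

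For step (i) I would use Fourier inversion, $\PP_\ZZ(S_n=\rho)=\frac1{2\pi}\int_{-\pi}^\pi\widehat\mu(\theta)^n e^{-i\rho\theta}\,d\theta$ with $\widehat\mu(\theta)=\frac1W\sum_{j=1}^W\cos j\theta=\frac{\sin(W\theta/2)\cos((W+1)\theta/2)}{W\sin(\theta/2)}$, splitting the integral at $|\theta|\asymp n^{-1/4}/W$. Over the small-$\theta$ range, Taylor's formula gives $\log\widehat\mu(\theta)=-\tfrac12\sigma^2\theta^2+r(\theta)$ with $r(\theta)=\tfrac1{24}\kappa_4\theta^4+O(W^6\theta^6)$ and $|\kappa_4|\asymp W^4\asymp\sigma^4$; expanding $e^{nr}$, the leading term integrates (after extending to all of $\RR$, at a cost $O(e^{-c\sqrt n})$) exactly to $(2\pi n\sigma^2)^{-1/2}e^{-\rho^2/(2n\sigma^2)}$, while the first Edgeworth correction equals $\tfrac{n\kappa_4}{24}\,\partial_\rho^4$ of it, whose ratio to the main term is $O\!\big(\rho^4/(n^3\sigma^4)+\rho^2/(n^2\sigma^2)+1/n\big)$; since $\rho\le R\ll n^{3/4}W\asymp n^{3/4}\sigma$ this is $o(1)$ — this is exactly where $R\ll n^{3/4}W$ is used — and the higher terms of $e^{nr}$ and the $O(W^6\theta^6)$ remainder are of strictly smaller order by the same computation. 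Over the complementary range a routine estimate — using $\widehat\mu(\theta)\le e^{-c\sigma^2\theta^2}$ for $|\theta|\lesssim 1/W$ and $|\widehat\mu(\theta)|\le 1/(W|\sin(\theta/2)|)$ together with the closed form, which keeps $|\widehat\mu|$ bounded away from $1$ for $|\theta|\gtrsim 1/W$ (the walk is aperiodic once $W\ge2$; $W=1$ is the classical periodic case) — bounds the contribution by $O\big(W^{-1}e^{-c\sqrt n}\big)$. Since $n\gg1$ and the main term is $\asymp(W\sqrt n)^{-1}e^{-\rho^2/(2n\sigma^2)}$ with $\rho^2/(n\sigma^2)\le R^2/(n\sigma^2)\ll\sqrt n$, this tail is $o$ of the main term, and assembling (i) with (ii) gives the claim.

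I expect the main obstacle to be exactly the uniformity demanded in (i): because $R$ may be as much as $\asymp n^{1/4}$ standard deviations out, the Gaussian one is trying to match is (sub)exponentially small, so one must check that both the fourth-cumulant Edgeworth term \emph{and} the characteristic-function tail are negligible against a main term that is itself tiny — and it is precisely this that forces the three quantitative hypotheses $n\gg1$, $R\ll n^{3/4}W$, and $n\ll N^2/W^2$. Everything else is bookkeeping.
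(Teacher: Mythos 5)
Your reduction --- unfold the walk from $\ZZ/N\ZZ$ to $\ZZ$, prove a local CLT for the line walk by Fourier inversion, and dismiss the wraparound terms --- is the probabilistic shadow of what the paper actually does (Lemma \ref{l:gspec} and (\ref{eq:wn}), followed by Poisson summation), so the skeleton is right. The genuine gap is in your step (i), exactly at the point you flag and then wave away. In the range $\sqrt{n\log n}\,W \lesssim R \ll n^{3/4}W$ the target quantity $(2\pi n\sigma^2)^{-1/2}e^{-R^2/(2n\sigma^2)}$ is smaller than $(W\sqrt n)^{-1}n^{-J}$ for every fixed $J$, whereas your error terms are only polynomially small in $n$. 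The Edgeworth terms that are exact polynomials in $\theta$ do convert into $\partial_\rho^{2m}$ of the Gaussian and therefore carry the factor $e^{-R^2/(2n\sigma^2)}$ --- your computation of the fourth-cumulant ratio is correct --- but the analytic remainder of the cumulant expansion (the $O(W^{2L+2}\theta^{2L+2})$ term, and likewise $e^{nr}$ minus any finite Taylor polynomial) is not a polynomial and can only be bounded in absolute value; near $\theta=0$ the Gaussian weight provides no decay, so it contributes an absolute error of order $(W\sqrt n)^{-1}n^{-L}$ for any fixed number $L$ of cumulants retained. This is \emph{not} $o$ of the main term once $R^2/(2n\sigma^2)\gg \log n$, and keeping more cumulants only improves the power of $n$, never producing the missing factor $e^{-R^2/(2n\sigma^2)}$. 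The standard and essentially unavoidable fix is to shift the contour to the saddle point $\theta_\ast=-iR/(n\sigma^2)$ (equivalently, tilt the step distribution), after which the factor $e^{-R^2/(2n\sigma^2)}$ is extracted up front and every remaining estimate is relative; this is precisely case (b) of the paper's proof, where $\phi_+(z)=n\ln f(z)+2\pi iRz$ is expanded about the purely imaginary critical point $z_+=iy_+$ and the case split $R\lessgtr \eta\sqrt n\,W$ exists exactly to separate the regime where absolute bounds suffice from the regime where they do not.

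A secondary but real problem is step (ii): because $0.49$ is so close to $1/2$, your ``crude Gaussian majorant'' must carry essentially the sharp constant $1/(2n\sigma^2)=3/(n(W+1)(2W+1))$ in the exponent. A Hoeffding-type bound with variance proxy $W^2$ (rather than $\sigma^2\approx W^2/3$) gives $e^{-\rho^2/(2nW^2)}$, and then the $m=-1$ term at $|\rho|=0.51N$ is \emph{not} dominated by the $m=0$ term at $R=0.49N$, since $0.5\cdot 0.51^2<1.5\cdot 0.49^2$. Obtaining the majorant with nearly sharp constant uniformly in $\rho$, including the large-deviation range $|\rho|\asymp nW$, again requires tilting; note that Proposition \ref{p:clt3}, with its unspecified constant $C$ in the exponent, would not suffice for this comparison either. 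The paper sidesteps the issue entirely by keeping the wraparound inside the contour integral: on the shifted contour the factor $(1-\exp\{2\pi iNz\})^{-1}$ is $1+O(\exp\{-2\pi Ny_+\})$, and this is where $n\ll N^2/W^2$ and the choice of $\eta$ enter.
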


\begin{prop}\label{p:clt2}
If $N^2/W^2 \ll n$,
\[ \frac{\mathcal{W}_n(R)}{(2W)^n} = (1+o(1)) N^{-1}~. \]
\end{prop}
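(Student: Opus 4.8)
\textbf{Proof plan.}
The quantity $\mathcal{W}_n(R)/(2W)^n$ is exactly the transition probability $\PP\{\xi_1+\cdots+\xi_n\equiv r_0\ (\mathrm{mod}\ N)\}$, where $\xi_1,\xi_2,\dots$ are i.i.d.\ uniform on $\{\pm1,\dots,\pm W\}\subset\ZZ/N\ZZ$ and $r_0$ is any residue with $|r_0|_N=R$ (by the symmetry $\xi\leftrightarrow-\xi$ of the step law the choice of $r_0$, and of the two marked vertices, is immaterial). Discrete Fourier inversion on $\ZZ/N\ZZ$ gives
\[
\frac{\mathcal{W}_n(R)}{(2W)^n}=\frac1N\sum_{t=0}^{N-1}e^{-2\pi i r_0 t/N}\,\hat\mu(t)^n,\qquad
\hat\mu(t)=\frac1W\sum_{k=1}^W\cos\frac{2\pi tk}{N}\,.
\]
The $t=0$ term equals $1/N$, so the whole point is to prove $\Sigma:=\sum_{t=1}^{N-1}|\hat\mu(t)|^n\to0$; the resulting error is then $o(1/N)$, uniformly, as required.

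Since $\hat\mu(N-t)=\hat\mu(t)$, I would bound $\hat\mu$ in terms of $m:=\min(t,N-t)$ and split into three ranges. For $1\le m\le N/(\pi W)$ the elementary inequality $\cos\theta\le 1-\theta^2/3$ on $|\theta|\le2$ gives $1-\hat\mu(t)\ge m^2W^2/N^2$, hence $|\hat\mu(t)|\le e^{-m^2W^2/N^2}$. For $m\ge N/W$ the Dirichlet–kernel bound $|\hat\mu(t)|\le\bigl(W|\sin(\pi m/N)|\bigr)^{-1}\le N/(2Wm)\le1/2$ suffices. For the intermediate range $N/(\pi W)\le m\le N/W$ I claim $|\hat\mu(t)|\le\rho_0$ for an absolute constant $\rho_0<1$ (discussed below). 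Granting this, summing over $m$ (each value taken by $\le2$ values of $t$) and using $m^2\ge m$,
\[
\Sigma\ \le\ 2\Bigl[\bigl(e^{nW^2/N^2}-1\bigr)^{-1}+\tfrac NW\,\rho_0^{\,n}+C\,\tfrac NW\,2^{-n}\Bigr].
\]
The first term is $o(1)$ because $n\gg N^2/W^2$. The other two are $o(1)$ because $n\gg N^2/W^2$ also forces $2^n\gg N/W$ and $\rho_0^{-n}\gg N/W$: indeed $\log_2(N/W)\le N^2/W^2$, so $n-\log_2(N/W)\ge n-N^2/W^2=(N^2/W^2)(nW^2/N^2-1)\to+\infty$, and similarly with $\log_2(1/\rho_0)$ in place of $1$. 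Hence $\Sigma=o(1)$ and $\mathcal{W}_n(R)/(2W)^n=N^{-1}(1+o(1))$.

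The heart of the matter is the intermediate range, where the Gaussian/Taylor picture has run out (the phases $2\pi tk/N$ reach order $1$) while the Dirichlet bound still exceeds $1$; there one must exploit that the step law is genuinely spread over $\asymp W$ sites. Writing $1-\hat\mu(t)=\tfrac2W\sum_{k=1}^W\sin^2(\pi km/N)$ and keeping only the indices $k$ with $km/N\in[\tfrac14,\tfrac12]$ — a \emph{positive proportion} of $\{1,\dots,W\}$ precisely when $N/(\pi W)\le m\le N/W$ — gives $1-\hat\mu(t)\ge c$, hence $\hat\mu(t)\le1-c$. To exclude $\hat\mu(t)$ being near $-1$ I use the uniform bound $\hat\mu(t)\ge-1+c'$, valid for all $t\not\equiv0$: pairing the index $k$ with $2k$ and using $x+(2x-1)^2\ge\tfrac7{16}$ for $x=\cos^2(\pi km/N)\in[0,1]$ yields $1+\hat\mu(t)=\tfrac2W\sum_k\cos^2(\pi km/N)\ge c'$. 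Together these give $|\hat\mu(t)|\le\rho_0<1$. I expect the routine but delicate bookkeeping here — counting the good indices $k$ near the two endpoints $m\sim N/(\pi W)$ and $m\sim N/W$, and the trivial adjustments for small $W$ — to be the fiddliest part. A second point deserving care is that in the large-$m$ range one really needs the \emph{decay} $|\hat\mu(t)|\le N/(2Wm)$ rather than merely $|\hat\mu(t)|\le\mathrm{const}<1$: when $W$ is comparable to $N$ the number $n$ of steps may be only slowly growing, and a bound of the shape $N\cdot(\mathrm{const})^{n}$ would fail to be $o(1)$.
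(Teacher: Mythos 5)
Your proposal is correct and follows the same route as the paper: Fourier inversion on $\ZZ/N\ZZ$ turns $\mathcal{W}_n(R)/(2W)^n$ into the sum $N^{-1}\sum_k f(k/N)^n e^{2\pi iRk/N}$ (your $\hat\mu(t)$ is exactly the symbol $f(t/N)$ of Lemma~\ref{l:gspec}), the $k=0$ term gives $N^{-1}$, and everything reduces to showing the remaining terms are $o(1)$. The only real difference is how that last step is carried out. The paper invokes the uniform Gaussian bound $|f(x)|\le\exp\{-cW^2x^2\}$ of Lemma~\ref{l:f.decay} (stated there without proof) and sums a single geometric-type series; you instead prove the required decay from scratch by a three-range analysis (Taylor expansion near $t=0$, a uniform spectral gap $|\hat\mu|\le\rho_0<1$ in the middle range via the $\sin^2$ identity and the $k\mapsto 2k$ pairing, and the Dirichlet-kernel decay $N/(2Wm)$ for large $m$). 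Your pointwise bounds are weaker than Lemma~\ref{l:f.decay} in the outer ranges, but, as you correctly observe, the $1/(W|\sin(\pi m/N)|)$ decay is exactly what is needed to beat the factor $N$ when $W\asymp N$ and $n$ grows slowly --- a point that a bare constant bound would miss. Two small repairs: in the lower bound $1+\hat\mu(t)\ge c'$ the pairs $(k,2k)$ as written overlap (e.g.\ $k=2$ is both a ``$k$'' and a ``$2k$''), so restrict to odd $k\le W/2$ to get disjoint pairs; and the whole statement (like Lemma~\ref{l:f.decay} itself) requires $W\ge2$, since for $W=1$ and even $N$ the walk is periodic and $\hat\mu(N/2)=-1$ --- consistent with the paper's standing assumption $W\to\infty$ in the main results.
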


\begin{prop}\label{p:clt3}
Without any assumptions on $n, N, R, W$,
\[ \frac{\mathcal{W}_n(R)}{(2W)^n} \leq C \left[
    (W\sqrt{n})^{-1} \exp\left\{- \frac{CR^2}{nW^2}\right\}  + N^{-1} \right]~. \]
\end{prop}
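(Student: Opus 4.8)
The plan is to reduce everything to a Fourier/characteristic-function computation on the circle $\ZZ/N\ZZ$ and then estimate the resulting exponential sum by splitting according to the size of the dual variable. Write $\mathcal{W}_n(R)$ as the $(u,v)$-entry of $A^n$, where $A$ is the adjacency matrix of $\G$; since $A$ is a circulant matrix, its eigenvalues are $\widehat{a}(\xi) = \sum_{0 < |r|_N \le W} e^{2\pi i r \xi / N} = \frac{\sin(\pi(2W+1)\xi/N)}{\sin(\pi\xi/N)} - 1$ for $\xi \in \ZZ/N\ZZ$, and the eigenvector entries are $e^{2\pi i u \xi/N}$. Hence
\[ \frac{\mathcal{W}_n(R)}{(2W)^n} = \frac1N \sum_{\xi \in \ZZ/N\ZZ} \left(\frac{\widehat{a}(\xi)}{2W}\right)^n e^{2\pi i R \xi/N}~. \]
The key elementary fact is that $f(\theta) := \left(\frac{\sin((2W+1)\theta/2)}{(2W+1)\sin(\theta/2)} \cdot \frac{2W+1}{2W} - \frac{1}{2W}\right)$ — i.e. $\widehat a/(2W)$ evaluated at $\theta = 2\pi\xi/N$ — satisfies $|f(\theta)| \le 1$ with $|f(\theta)| \le 1 - c\, (W\theta)^2$ for $|W\theta| \lesssim 1$ and $|f(\theta)| \le C/(W|\theta|)$ (using $|\theta|\le\pi$) once $|W\theta| \gtrsim 1$. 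These are just Taylor expansion near $0$ and the trivial bound $|\sin| \le 1$ together with $|\sin(\theta/2)| \gtrsim |\theta|$ on $[-\pi,\pi]$.

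Next I would bound the sum by its absolute-value version, $\frac1N \sum_\xi |f(2\pi\xi/N)|^n$, and split the range of $\xi$ into three regimes. \textbf{(i)} For $|\xi| \le \delta N/W$ with $\delta$ a small absolute constant, use $|f| \le 1 - c(W\theta)^2 \le \exp(-c(W\theta)^2)$; comparing the sum to a Gaussian integral gives a contribution $\le C (W\sqrt n)^{-1}$, and inserting the phase $e^{2\pi i R\xi/N}$ and completing the square in the standard way upgrades this to $\le C(W\sqrt n)^{-1} \exp(-cR^2/(nW^2))$ — this is where the first term of the claimed bound comes from. (One has to be mildly careful that $R$ lives in $\ZZ/N\ZZ$, so $R^2$ should really be $|R|_N^2$, but the statement is about the quantity $R$ with $|u-v|_N = R$, so this is consistent; and when $R$ is comparable to $N$ the exponential is $\lesssim 1$ anyway and the bound is not claiming anything sharp there.) \textbf{(ii)} For $\delta N/W \le |\xi| \le N/2$, use $|f| \le C/(W|\theta|) = C N/(W|\xi|) \le C/\delta \cdot$ (something $\le 1$); more precisely on this range $|f(2\pi\xi/N)| \le \kappa < 1$ for some absolute $\kappa$, so $\sum |f|^n \le N \kappa^n$, which is $\le C/N$ once $n \gtrsim \log N$ — contributing the second term $N^{-1}$. \textbf{(iii)} The leftover case is small $n$ (say $n \le C_0 \log N$), where the crude bound $\mathcal{W}_n(R)/(2W)^n \le 1 \le C(W\sqrt n)^{-1}\cdot (W\sqrt n)$ is useless; instead note that for such small $n$ a non-backtracking-free path of length $n$ from $u$ to $v$ forces $|R|_N \le nW$, so when $|R|_N > nW$ the count is zero, and when $|R|_N \le nW$ we have $\exp(-cR^2/(nW^2)) \ge \exp(-cn) \gtrsim N^{-c'}$, so the bound $\le C(W\sqrt n)^{-1}\exp(-cR^2/(nW^2)) + N^{-1}$ holds after adjusting constants — alternatively, just absorb this regime by choosing the constant $C$ in the proposition large enough, since there are only $O(\log N)$ values of $n$ to worry about and for each the trivial bound is $O(1)$.

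The main obstacle is regime \textbf{(i)}: getting the Gaussian decay in $R$ with the correct constant $R^2/(nW^2)$ uniformly, without extra assumptions on how $n$, $N$, $W$, $R$ compare. The clean way is to not try to be sharp — unlike Propositions~\ref{p:clt1} and~\ref{p:clt2} this is only an upper bound — and to use the elementary inequality $|f(\theta)| \le \exp(-c(W\theta)^2)$ valid on the \emph{whole} range $|W\theta| \le \delta$, then write $\frac1N\sum_{|\xi|\le \delta N/W} e^{-c(W \cdot 2\pi\xi/N)^2 n} e^{2\pi i R\xi/N}$ and dominate it by $\frac1N \sum_{\xi\in\ZZ} e^{-c' (W\xi/N)^2 n}$ after discarding the phase, which is a theta-function-type sum comparable to $\int e^{-c'(Wt/N)^2 n}\,dt/N \asymp (W\sqrt n)^{-1}$; to recover the $R$-dependence, instead keep the phase and bound $\big|\sum_\xi e^{-a\xi^2 + ib\xi}\big| \le e^{-b^2/(4a)}\sum_\xi e^{-a\xi^2}$ for $a>0$ (valid by completing the square and shifting the contour, or by Poisson summation), with $a = c'(W/N)^2 n$ and $b = 2\pi R/N$, yielding the factor $\exp(-c'' R^2/(nW^2))$. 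Everything else is bookkeeping with absolute constants, and the non-backtracking condition (b) from the Lemma plays no role here since $\mathcal{W}_n(R)$ as defined counts \emph{all} paths, so the adjacency-matrix computation is exact.
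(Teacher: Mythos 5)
Your reduction to the exponential sum $\frac1N\sum_{\xi}\bigl(\widehat a(\xi)/2W\bigr)^n e^{2\pi iR\xi/N}$ is exactly the paper's starting point (Lemma~\ref{l:gspec} and \eqref{eq:wn}), and discarding the phase does legitimately produce the pieces $(W\sqrt n)^{-1}$ and $N^{-1}$. The gap is precisely at the point you yourself flag as the main obstacle, and your proposed resolution does not close it: you replace $f(2\pi\xi/N)^n$ by the absolute-value majorant $\exp\{-cn(2\pi W\xi/N)^2\}$ \emph{while keeping} the oscillating factor $e^{2\pi iR\xi/N}$, and then estimate the resulting Gaussian sum by completing the square. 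That substitution is not an inequality: from $|f(\theta)|^n\le e^{-cn(W\theta)^2}$ you may only bound $\sum_\xi |f|^n$, i.e.\ drop the phase (which loses all decay in $R$); you may not conclude $\bigl|\sum_\xi f^n e^{ib\xi}\bigr|\le\bigl|\sum_\xi e^{-a\xi^2+ib\xi}\bigr|$. Nor can you patch this by writing $f(\theta)^n=e^{-\sigma^2 n\theta^2/2}+\mathrm{error}$ and bounding the error in absolute value: the quartic term in $n\ln f$ makes that error only polynomially small (of relative size $nW^4\theta^4$ on the relevant range), whereas the factor you need, $e^{-CR^2/(nW^2)}$, is super-polynomially small exactly in the regime where it carries all the content of the bound and is not covered by the additive $N^{-1}$ term. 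So regime (i), as written, does not yield the stated estimate. (A secondary point: your closing move in regime (iii), ``absorb small $n$ by taking $C$ large since the trivial bound is $O(1)$,'' also fails, because the right-hand side of the proposition can be $o(1)$; that range needs an actual argument too.)

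The paper obtains the $R$-decay the only way an absolute-value bound on $f$ can be used: by moving the contour. In the case $\sqrt nW<N$ it proceeds exactly as in the proof of Proposition~\ref{p:clt1} --- the sum is rewritten via the residue theorem as integrals over horizontal lines $\Im z=y_\pm$ with $y_+\asymp R/(nW^2)$, where the oscillatory factor has become a pointwise damping $e^{-2\pi Ry_+}$; then the bound of Lemma~\ref{l:f.decay} along the shifted line suffices, and one only checks that $y_\pm$ can be chosen with $\phi_\pm(iy_\pm)\le -cR^2/(nW^2)$. In the case $\sqrt nW\ge N$ the sum is bounded directly by $C/N$. If you want to avoid complex analysis altogether, there is an elementary probabilistic route that does work: $\mathcal{W}_n(R)/(2W)^n=\PP\{S_n\equiv R\ (\mathrm{mod}\ N)\}$ for a walk with i.i.d.\ steps uniform on $\{\pm1,\dots,\pm W\}$; splitting the walk into two halves and combining $\sup_x\PP\{S_{\lfloor n/2\rfloor}=x\}\le C(W\sqrt n)^{-1}$ (here phase-discarding is legitimate) with Hoeffding's inequality for the other half gives $\PP\{S_n=r\}\le C(W\sqrt n)^{-1}e^{-cr^2/(nW^2)}$ on $\ZZ$, and summing over the wrap-arounds $r=R+mN$ produces the additional $N^{-1}$; this also covers the small-$n$ regime that your sketch leaves open. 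As it stands, though, your argument is missing the key mechanism (contour shift, or an equivalent) that converts oscillation into the Gaussian factor in $R$.
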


\noindent For fixed $W$, Proposition~\ref{p:clt1} follows from Richter's local limit theorem for
lattice variables (see Ibragimov and Linnik, \cite[Ch.~VII]{IL}.) However, we need
the asymptotics to be uniform in $W$, which we have not found in the literature.
Proposition~\ref{p:clt2} (perhaps with an extra logarithmic factor in the assumptions)
follows easily from the spectral estimates on the mixing time.
\\*
\\*
Let $\A$ be the adjacency matrix of $\G$. As $\G$ is invariant under cyclic shifts,
the discrete Fourier transform diagonalises $\A$. We state this as a lemma:

\begin{lemma}\label{l:gspec}
$(2W)^{-1}\A = \sum_{k=0}^{N-1} a_k
    \mathfrak{f}_k \otimes \mathfrak{f}_k$,
where
\[ a_k = \frac{\sin \left( W \frac{\pi k}{N} \right)}
                                            {W \sin \frac{\pi k}{N}}
                                        \cos\left( (W+1) \frac{\pi k}{N} \right)~,  \]
and $\mathfrak{f}_k(\ell) = \exp \left\{ \frac{2\pi i k \ell}{N} \right\}$.
\end{lemma}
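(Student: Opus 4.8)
The plan is to compute the eigenvalues of the circulant matrix $\A$ directly. Since $\G$ is invariant under the cyclic shift $\ell \mapsto \ell + 1$ on $\ZZ/N\ZZ$, the adjacency matrix $\A$ commutes with the shift operator, hence is simultaneously diagonalised by the characters $\mathfrak{f}_k(\ell) = \exp(2\pi i k \ell / N)$, $k = 0, 1, \dots, N-1$. So the content of the lemma is purely the evaluation of the associated eigenvalue, which is the Fourier coefficient of the first row of $\A$. First I would write $\A_{0m} = \1\{0 < |m|_N \le W\}$, so that the eigenvalue attached to $\mathfrak{f}_k$ is
\[ \lambda_k = \sum_{m : 0 < |m|_N \le W} \exp\!\left( \frac{2\pi i k m}{N} \right)
    = \sum_{m=1}^{W} \left[ \exp\!\left( \frac{2\pi i k m}{N} \right) + \exp\!\left( -\frac{2\pi i k m}{N} \right) \right]
    = 2 \sum_{m=1}^{W} \cos\!\left( \frac{2\pi k m}{N} \right)~. \]

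Next I would evaluate this finite cosine sum by the standard Dirichlet-kernel identity $\sum_{m=1}^{W} \cos(m x) = \frac{\sin(Wx/2)\cos((W+1)x/2)}{\sin(x/2)}$, valid for $x \not\equiv 0 \pmod{2\pi}$ (obtained, e.g., by summing the geometric series $\sum_{m=1}^W e^{imx}$ and taking real parts). Applying this with $x = 2\pi k / N$ gives
\[ \lambda_k = 2 \cdot \frac{\sin\!\left( W \frac{\pi k}{N} \right) \cos\!\left( (W+1)\frac{\pi k}{N} \right)}{\sin \frac{\pi k}{N}}~, \]
and therefore $(2W)^{-1}\lambda_k = a_k$ with $a_k$ exactly as in the statement. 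The case $k = 0$ needs a separate remark: there $a_0 = 1$ (the $m$-sum is simply $W$), which is consistent with reading the formula for $a_k$ as its limit $\sin(W\pi k/N)/(W\sin(\pi k/N)) \to 1$, $\cos((W+1)\pi k/N) \to 1$ as $k \to 0$. Finally, since $\{N^{-1/2}\mathfrak{f}_k\}_{k=0}^{N-1}$ is an orthonormal basis of $\CC^{\ZZ/N\ZZ}$, the spectral decomposition reads $(2W)^{-1}\A = \sum_{k=0}^{N-1} a_k\, \mathfrak{f}_k \otimes \mathfrak{f}_k$ (with the normalisation absorbed into the tensor as in the paper's convention), which is the claim.

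There is no real obstacle here; the only points requiring minor care are the degenerate value $k=0$ (and, if $N$ is such that $W\pi k/N$ can be an integer multiple of $\pi$, the corresponding vanishing of numerator and denominator, again handled as a removable singularity), and fixing the normalisation convention for $\mathfrak{f}_k \otimes \mathfrak{f}_k$ so that it agrees with the stated identity. The lemma is essentially a bookkeeping step, and the substantive work is deferred to the asymptotic analysis of $a_k$ needed for Propositions~\ref{p:clt1}--\ref{p:clt3}.
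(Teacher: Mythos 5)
Your proposal is correct and follows essentially the same route as the paper: both reduce the lemma to evaluating the character sum $\sum_{0<|m|_N\le W}\exp(2\pi i km/N)$ (you via the real Dirichlet-kernel identity, the paper via the complex geometric series, which is the same computation), and both handle the eigenbasis/normalisation point in the same way. No gaps.
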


\begin{proof}
The vectors $\mathfrak{f}_k$ are of unit length, hence we only need to check that
$\A \mathfrak{f}_k = 2 W a_k \mathfrak{f}_k$. Let $\omega = \exp \frac{2\pi i}{N}$. Then
\[ (\A \mathfrak{f}_k)(m)
    = \sum_{0 < |\ell - m|_N \leq W} \omega^{k\ell}
    = \mathfrak{f}_k(m) \sum_{0 < |\ell|_N \leq W} \omega^{k\ell}~.\]
Now,
\[\begin{split}
\sum_{0 < |\ell|_N \leq W} \omega^{k\ell}
    &= -1 + \omega^{-kW} \frac{1 - \omega^{(2W+1)k}}{1 - \omega^k} \\
    &= \frac{\omega^{(W+1/2)k} - \omega^{k/2} + \omega^{-k/2} - \omega^{-(W+1/2)k}}
            {\omega^{k/2} - \omega^{-k/2}} \\
    &= \frac{\omega^{Wk/2} - \omega^{-Wk/2}}{\omega^{k/2} - \omega^{-k/2}}
        \left(\omega^{(W+1)k/2} + \omega^{-(W+1)k/2}\right) = 2W a_k~.
\end{split}\]
\end{proof}

\noindent Denote by $\delta_0, \cdots, \delta_{N-1}$ the standard basis in $\RR^{\ZZ/N\ZZ}$.
Then
\[ \delta_j = N^{-1/2} \sum_{k=0}^{N-1} \exp \frac{-2\pi ijk}{N} \,\, \mathfrak{f}_k~,\]
therefore
\begin{equation}\label{eq:wn}
\frac{\mathcal{W}_n(R)}{(2W)^n}
    = \<(\A/2W)^n \delta_R, \, \delta_0 \>
    = N^{-1} \sum a_k^n \exp \frac{2\pi i Rk}{N}~.
\end{equation}
Thus we have to estimate the above sum. Informally, the argument is as follows:
$\max(|a_1|, \cdots, |a_{N-1}|) = 1 - \Theta(N^2/W^2)$, whereas $a_0 = 1$. If $n \gg N^2 / W^2$,
the sum is dominated by the first addend, which is equal to 1. If $n \ll N^2 / W^2$, the sum
can be replaced with an integral, which is then evaluated using the saddle point method. And
now to the formal proof.

\vspace{1mm}\noindent Denote
\[ f(z) = \frac{\sin \left[ W \pi z \right]}{W \sin \left[ \pi z \right]}
    \, \cos \left[ (W+1) \pi z \right]~.\]
The following lemma summarises some properties of $f$:
\begin{lemma}\label{l:f.decay}
The function $f$ is an entire function with period $1$;
\[ \frac{|f(x+iy)|}{|f(iy)|} \leq \exp \left\{ - c W^2 x^2 \right\}~,
    \quad -1/2 \leq x \leq 1/2~. \]
\end{lemma}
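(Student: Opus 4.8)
The plan is to first put $f$ into a transparent form and then read off both the analytic properties and the estimate from positivity of Fourier coefficients. Summing a geometric series,
\[ D_W(z):=\frac{\sin(W\pi z)}{W\sin(\pi z)} = \frac1W\sum_{j=0}^{W-1}e^{(2j-W+1)\pi i z}~, \]
which is entire; multiplying by $\cos((W+1)\pi z)$ keeps $f$ entire, and collecting terms gives the clean identity
\[ f(z) = \frac1W\sum_{j=1}^{W}\cos(2\pi j z)~, \]
from which $1$-periodicity (and evenness) is immediate. The structural feature I would exploit is that \emph{both} factors $D_W(z)$ and $\cos((W+1)\pi z)$ have nonnegative Fourier coefficients.

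For the estimate I would proceed in two steps. First, for any entire $g(z)=\sum_m a_m e^{2\pi i m z}$ with $a_m\ge 0$ one has $|g(x+iy)|\le\sum_m a_m e^{-2\pi m y}=g(iy)$; applied to $D_W$ this gives $|D_W(x+iy)|\le D_W(iy)$ for all real $x,y$, and applied to $f$ itself it already yields $|f(x+iy)|\le f(iy)=|f(iy)|$, i.e.\ the inequality with constant $1$. Second, for the factor $\cos((W+1)\pi z)$ I would use the exact identity $|\cos((W+1)\pi(x+iy))|^2=\cos^2((W+1)\pi x)+\sinh^2((W+1)\pi y)=\cosh^2((W+1)\pi y)-\sin^2((W+1)\pi x)$. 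Multiplying the two pieces and using $1-t\le e^{-t}$,
\[ \frac{|f(x+iy)|^2}{|f(iy)|^2} = \frac{|D_W(x+iy)|^2}{D_W(iy)^2}\cdot\frac{|\cos((W+1)\pi(x+iy))|^2}{\cosh^2((W+1)\pi y)} \le 1-\frac{\sin^2((W+1)\pi x)}{\cosh^2((W+1)\pi y)} \le \exp\Bigl\{-\frac{\sin^2((W+1)\pi x)}{\cosh^2((W+1)\pi y)}\Bigr\}~. \]
It then remains to bound the exponent from below: applying $\sin^2\theta\ge\frac{4}{\pi^2}\theta^2$ ($|\theta|\le\pi/2$) with $\theta=(W+1)\pi x$ and bounding the denominator by a constant on the range where $(W+1)|y|$ stays below a fixed multiple of $\pi$, one gets $\sin^2((W+1)\pi x)/\cosh^2((W+1)\pi y)\ge cW^2x^2$, hence $|f(x+iy)|/|f(iy)|\le\exp\{-cW^2x^2\}$.

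The step with genuine content is the Fourier-positivity observation, together with the realisation in the second step that once $|D_W(x+iy)|\le D_W(iy)$ has been used, the whole Gaussian gain is produced by the single, explicitly known factor $\cos((W+1)\pi z)$; everything after that is elementary trigonometry. The one point to be careful about is matching scales: the estimate is effective precisely for $|x|$ of order at most $1/W$, so when the lemma is invoked to shift the contour in \eqref{eq:wn} one has to check that the contour height $|y|\asymp R/(nW^2)$ is indeed $\ll 1/W$ in the relevant range of $n$ and $R$, and that for $|x|$ of larger order — where the weight $\exp\{-cW^2x^2\}$ is anyway of constant order or smaller — it is enough to fall back on the bound $|f(x+iy)|\le|f(iy)|$ from the first step.
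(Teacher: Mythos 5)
The paper states this lemma without proof, so there is no argument of the author's to compare yours against; I can only judge the proposal on its own terms. The core of it is sound: the identity $f(z)=\frac1W\sum_{j=1}^{W}\cos(2\pi jz)$ is correct (it is exactly the Fourier expansion behind Lemma~\ref{l:gspec}), the positivity argument gives $|f(x+iy)|\le f(iy)=|f(iy)|$, and the exact factorisation leading to
\[ \frac{|f(x+iy)|^2}{|f(iy)|^2}\;\le\;1-\frac{\sin^2((W+1)\pi x)}{\cosh^2((W+1)\pi y)} \]
is correct. On the range $|x|\le\tfrac12(W+1)^{-1}$, and for $(W+1)|y|=O(1)$ (a condition you rightly flag as needing verification at the point of use, where $|y_\pm|\asymp R/(nW^2)$ so that $W|y_\pm|\ll n^{-1/4}$), this does yield $\exp\{-cW^2x^2\}$.

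The gap is the final step, ``for larger $|x|$ fall back on $|f(x+iy)|\le|f(iy)|$.'' That bound gives a ratio at most $1$, whereas the claimed right-hand side is $e^{-c\eps^2}<1$ already for $|x|=\eps/W$ and is super-exponentially small in $W$ for $|x|$ of order one; so the fallback proves nothing outside $|x|\lesssim 1/W$. Moreover, no argument can close this gap, because the inequality as printed is false near $x=\pm1/2$: taking $x=1/2$, $y=0$ and $W$ odd, one has $|f(1/2)|=|\sin(W\pi/2)\cos((W+1)\pi/2)|/W=1/W$, which for large $W$ is much bigger than $\exp\{-cW^2/4\}$ for any fixed $c>0$. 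What your computation actually establishes, and what should be recorded instead, is a two-regime bound: the Gaussian estimate for $|x|\le\tfrac12(W+1)^{-1}$, together with $|f(x+iy)|\le C|f(iy)|/(W|x|)$ for $\tfrac12(W+1)^{-1}\le|x|\le\tfrac12$ (from $|\sin\pi(x+iy)|\ge|\sin\pi x|\ge 2|x|$ applied to the factor $D_W$, with $(W+1)|y|=O(1)$). One must then check that this weaker, correct form still serves in the proofs of Propositions~\ref{p:clt1}--\ref{p:clt3}; it does, because there $f$ is always raised to a power $n\gg1$ and $\sum_{|k|\ge CN/W}\bigl(CN/(W|k|)\bigr)^{n}$ is negligible against the main term, but that verification is part of the proof and cannot be waved away by the trivial bound.
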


\begin{proof}[Proof of Proposition~\ref{p:clt1}]
Choose $\frac{\sqrt{n} W}{N} \ll \eta \ll 1$, and consider two cases.

\vspace{2mm}\noindent
(a): $R \leq \eta \sqrt{n} W$. Then, according to (\ref{eq:wn}) and Lemma~\ref{l:f.decay},
\[\begin{split}
\frac{\mathcal{W}_n(R)}{(2W)^n}
    &= N^{-1} \sum_{-N/2 < k \leq N/2} \big[ f(k/N) \big]^n \exp \frac{2\pi i Rk}{N} \\
    &= N^{-1} \sum_{-N/2 < k \leq N/2} \big[ f(k/N) \big]^n
        \left[ 1 + O \left( \frac{\eta |k| \sqrt{n} W}{N} \right) \right]~.
\end{split}\]
The contribution of the second addend is negligible:
\begin{multline*}
\left| N^{-1} \sum_{-N/2 < k \leq N/2} \big[ f(k/N) \big]^n
        \, \frac{\eta |k| \sqrt{n} W}{N} \right| \\
    \leq \left| N^{-1} \sum_{-N/2 < k \leq N/2} \exp \left\{ - \frac{cW^2 n k^2}{N^2} \right\}
        \frac{\eta |k| \sqrt{n} W}{N} \right|
    \leq \frac{C \eta}{\sqrt{n} W} \ll \frac{1}{\sqrt{n} W}~.
\end{multline*}
Then,
\begin{multline*}
N^{-1} \sum \big[ f(k/N) \big]^n
    = N^{-1} \sum \exp \left\{ - \frac{\pi n}{3} (W+1)(2W+1) \frac{k^2}{N^2} \right\} \\
    + N^{-1} \sum \left( \big[ f(k/N) \big]^n
        - \exp \left\{ - \frac{\pi n}{3} (W+1)(2W+1) \frac{k^2}{N^2} \right\}\right)~.
\end{multline*}
The second sum is negligible, whereas the first one is
\begin{multline*}
(1 + o(1)) N^{-1} \sum_{k=-\infty}^\infty
    \exp \left\{ - \frac{\pi n}{3} (W+1)(2W+1) \frac{k^2}{N^2} \right\} \\
    = (1 + o(1)) \, \left[ \frac{\pi n}{3} (W+1) (2W+1) \right]^{-1/2}~,
\end{multline*}
according to the Poisson summation formula.

\vspace{2mm}\noindent
(b): $R > \eta \sqrt{n} W$. According to (\ref{eq:wn}), Lemma~\ref{l:f.decay}, and
the residue theorem,
\[\begin{split}
\frac{\mathcal{W}_n(R)}{(2W)^n}
    &= N^{-1} \sum_{-N/2 < k \leq N/2} \big[ f(k/N) \big]^n \exp \frac{2\pi i Rk}{N} \\
    &= - \int_C \big[ f(z) \big]^n \frac{\exp \left[ 2\pi i R z \right]}
                                        {1 - \exp \left[ 2 \pi i N z \right]}  dz~,
\end{split}\]
where the contour $C$ encloses the zeros $k/N$, $-N/2 < k \leq N/2$, of the denominator,
as in Figure~\ref{fig:cont}.
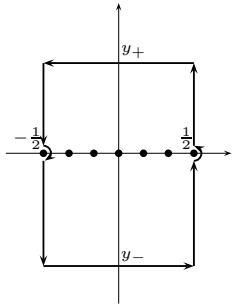
\begin{figure}[h]\label{fig:cont}
\vspace{1.7cm}
\setlength{\unitlength}{1cm}
\begin{pspicture}(-3,0)
\psset{linewidth=.3pt}
\psaxes[labels=none,ticks=none]{->}(0, 0)(-1.5, -2)(1.5, 2)
\psset{linewidth=.7pt,arrows=->}
\psline(-1, -1.5)(1, -1.5)
\psline(1, -1.5)(1, -.1)
\psarc(1, 0){.1}{270}{90}
\psline(1, .1)(1, 1.2)
\psline(1, 1.2)(-1, 1.2)
\psline(-1, 1.2)(-1, .1)
\psarcn(-1, 0){.1}{90}{280}
\psline(-1, -.1)(-1, -1.5)
\psset{dotsize=.1}
\psdot(0, 0)
\psdot(.33, 0)
\psdot(.66, 0)
\psdot(1, 0)
\psdot(-.33, 0)
\psdot(-.66, 0)
\psdot(-1, 0)
\rput(.2, 1.35){\tiny{$y_+$}}
\rput(.2, -1.38){\tiny{$y_-$}}
\rput(.9, .2){\tiny{$\frac{1}{2}$}}
\rput(-1.2, .2){\tiny{$-\frac{1}{2}$}}
\end{pspicture}
\vspace{-12mm}
\quad\quad\caption[Contour]{The contour.}\label{fig:diag3}
\vspace{16mm}
\end{figure}

\vspace{2mm}\noindent
The sum of integrals along the vertical parts of $C$ vanishes, hence
\[ \frac{\mathcal{W}_n(R)}{(2W)^n}
    = \int_{iy_+-1/2}^{iy_++1/2} -  \int_{iy_--1/2}^{iy_-+1/2}~.\]

\vspace{2mm}\noindent The first integral:
\[ I_+ = \int_{iy_+-1/2}^{iy_++1/2} =
    \int_{iy_+-1/2}^{iy_++1/2} \exp \phi_+(z) \frac{dz}{1 - \exp \left[ 2 \pi i N z \right]}~,\]
where $\phi_+(z) = n \ln f(z) + 2\pi i R z$. We choose $y_+$ to make $\phi_+'$ vanish at
$z_+ = i y_+$. Expanding $\phi_+$ in Taylor series at 0, we obtain:
\[ \phi_+(z) = - \frac{n\pi^2}{3} (W+1)(2W+1) z^2\,(1 + O(z^2 W^2)) + 2\pi i Rz~.\]
This expansion is of course differentiable, so
\[\begin{split}
\phi_+'(z) &= - \frac{2n\pi^2}{3} (W+1)(2W+1) z \,(1 + O(z^2 W^2)) + 2\pi i R~,\\
\phi_+''(z) &= - \frac{2n\pi^2}{3} (W+1)(2W+1) \,(1 + O(z^2 W^2))~,
\end{split}\]
therefore there exists a solution $z_+ = iy_+$ of $\phi_+'(z_+) = 0$ such that
\begin{equation}\label{eq:clt.sp1}\begin{split}
z_+ &= \frac{3iR}{\pi n (W+1)(2W+1)} \, \Big(1 + O(R^2/(n^2 W^2))\Big)~, \\
\phi_+(z_+) &= \frac{3R^2}{n (W+1)(2W+1)} + O(R^4/(n^3 W^4))~, \\
\phi_+''(z_+) &= - \frac{2n\pi^2}{3}(W+1)(2W+1) (1+O(R^2/(n^2 W^2)))~.
\end{split}\end{equation}
Under the assumptions of Proposition~\ref{p:clt1}, all the error terms in (\ref{eq:clt.sp1})
are $o(1)$. Also,
\[ \frac{1}{1 - \exp \left\{ 2\pi i N z \right\} }
    = 1 + O \left( \exp \left\{ - 2 \pi N y_+ \right\} \right) = 1 +o(1) \]
uniformly in $-1/2 \leq x \leq 1/2$, since
\[ N y_+  \geq \frac{cNR}{nW^2} \geq \frac{c N \eta}{\sqrt{n}W} \gg 1~.\]
These observations and Lemma~\ref{l:f.decay} justify the saddle point approximation, which
yields:
\[ I_+ = (1 + o(1)) \, \left[ \frac{\pi n}{3}(W+1) (2W+1) \right]^{-1/2}
    \exp \left\{ - \frac{3R^2}{n (W+1)(2W+1)} \right\}~.\]

\vspace{2mm}\noindent Now consider the second integral:
\[ I_- = -\int_{iy_--1/2}^{iy_-+1/2} =
    \int_{iy_--1/2}^{iy_-+1/2} \exp \phi_-(z)
        \frac{\exp \left[ 2\pi i N z \right]}{\exp \left[ 2 \pi i N z \right] - 1} dz~,\]
where $\phi_-(z) = n \ln f(z) + 2\pi i (R - N) z$. According to Lemma~\ref{l:f.decay},
\[\begin{split}
|I_-| &\leq \exp  \phi_-(iy_-) \, \int_{-1/2}^{1/2} \exp \left\{ - c n W^2 \right\} dx \\
      &\leq \frac{C}{\sqrt{n W}} \exp \phi_-(iy_-)~.
\end{split}\]
Choosing $y_-=-y_+$, we see that
\[ |I_-| \ll \frac{1}{\sqrt{n} W}
    \, \exp \left\{ - \frac{3R^2}{n (W+1)(2W+1)} \right\}~.\]
\end{proof}

\begin{proof}[Proof of Proposition~\ref{p:clt2}]
According to (\ref{eq:wn}),
\[ \frac{\mathcal{W}_n(R)}{(2W)^n}
    = N^{-1} \sum_{-N/2 < k \leq N/2} \big[ f(k/N) \big]^n \exp \frac{2\pi i Rk}{N}~. \]
We have: $f(0) = 1$, whereas by Lemma~\ref{l:f.decay}
\[ \left|  \big[ f(k/N) \big]^n \right| \leq \exp \left\{ - c n W^2 k^2 / N^2 \right\}~.\]
Therefore
\[ \left| \sum_{|k| > 0} \big[ f(k/N) \big]^n \exp \frac{2\pi i Rk}{N} \right|
    \leq C \exp \left\{ - \frac{cnW^2}{N^2} \right\} \ll 1~. \]
\end{proof}

\begin{proof}[Proof of Proposition~\ref{p:clt3} (sketch)]
Consider two cases: $\sqrt{n} W \geq N$, and $\sqrt{n}W < N$. In the former case,
\[ \frac{\mathcal{W}_n(R)}{(2W)^n}
    \leq \frac{1}{N} \sum_{-N/2 < k \leq N/2} \exp \left\{ - c n W^2 k^2 / N^2 \right\}
    \leq C/N~.  \]
In the latter case, we proceed as in the proof of Proposition~\ref{p:clt1}, making sure that
one can choose $y_\pm$ so that
\[ \phi_\pm(i y_\pm) \leq - \frac{cR^2}{nW^2}~. \]
\end{proof}

\section{Embeddings into $\G$}\label{s:emb}

In this section, we study the number of ``topological embeddings'' of a graph $G$
into $\G$, satisfying certain restrictions. Let us start with the precise definitions.\\*
Let $G = (V, E)$ be a connected multigraph, and let $\vv{n} = (n_1, \cdots, n_E)$
be an $E$-tuple of numbers\footnote{to simplify the notation, we write $V = \#V, E = \#E$
throughout this section}, $n_e \geq -1$. Construct a new graph $G^{\vv{n}}$ as follows:
\begin{itemize}
\item if $n_e \geq 0$, replace $e = (u, v)$ with a chain $(u, u_1, u_2, \cdots, u_{n_e}, v)$;
\item if $n_e = -1$, contract $e = (u, v)$ into a single vertex.
\end{itemize}
Finally, consider a system of linear equations $\Eq$ in the variables $n_1, \cdots, n_E$:
\[ \Eq: \quad \sum_{e \in E} c_j(e) n_e = n(j)~, \quad j = 1, \cdots, k~,\]
and denote by $\Emb(G, \Eq)$ the number of subgraphs $G' \hookrightarrow \G$ that are
isometric to $G^{\vv{n}}$ for some $\vv{n}$ satisfying the system $\Eq$, and by
$\Emb_+(G, \Eq)$ -- the number of such subgraphs with the additional requirement
$n_e > 0$, $e \in E$.
\\*
We impose the following restrictions on $\Eq$:
\begin{enumerate}
\item $\Eq$ is linearly independent;
\item $c_j(e) \geq 0$, $1 \leq j \leq k$, $e \in E$;
\item $\sum_{j=1}^k c_j(e) = 2$, $e \in E$.
\end{enumerate}
These restrictions are satisfied for the systems of equations that we need, and slightly
simplify the notation.

\begin{prop}\label{p:emb.RM}
In the setting above and under the assumptions 1.-3., if
$\frac{E^2 N^2}{W^2} \ll n(1) \leq n(2) \leq \cdots \leq n(k)$ and $\sum n(j) = 2n \ll W$, then
\[ \frac{\Emb(G, \Eq)}{(2W-1)^n}
    = (1 + o(1)) \, N^{-E+V} \, g_G(n(1), \cdots, n(k))~,\]
where $g_G$ is a continuous homogeneous function of degree $E-k$,
\[ g_G(n(1),\cdots,n(k)) \leq \frac{(Cn)^{E-k}}{(E-k)!}~.\]
The same is true for $\Emb_+(G, \Eq)$. If $\sum n(j) = 2n + 1$,
\[ \Emb(G, \Eq) = \Emb_+(G, \Eq) = 0~. \]
\end{prop}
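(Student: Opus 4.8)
The plan is to write $\Emb(G,\Eq)$ as the sum, over all integer solutions $\vv{n}=(n_1,\dots,n_E)$ of $\Eq$ with $n_e\ge 0$ (and $n_e\ge 1$ for $\Emb_+$), of the number $M(\vv{n})$ of subgraphs of $\G$ isomorphic to $G^{\vv{n}}$, and to estimate each $M(\vv{n})$ by means of the random‑walk asymptotics of Section~\ref{s:rw}. Adding up the $k$ equations of $\Eq$ and using restriction~3, every solution satisfies $2\sum_e n_e=\sum_j n(j)$; hence if $\sum_j n(j)=2n+1$ is odd there is no solution and $\Emb(G,\Eq)=\Emb_+(G,\Eq)=0$, while if $\sum_j n(j)=2n$ then $\sum_e n_e=n$ for every solution.

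The function $g_G$ will essentially be the count of these solutions. By restriction~1 the solution set of $\Eq$ inside $\{n_e\ge 0\}$ is a polytope $P(n(1),\dots,n(k))$ contained in an affine subspace of dimension $E-k$; it scales linearly under $(n(j))\mapsto\lambda(n(j))$, so its number of lattice points is $(1+o(1))$ times its $(E-k)$‑dimensional volume $g_G(n(1),\dots,n(k))$, which is continuous, homogeneous of degree $E-k$, and — since $P$ lies in the simplex $\{n_e\ge 0,\ \sum_e n_e=n\}$ — bounded by $(Cn)^{E-k}/(E-k)!$, with $C$ depending only on $G$ and $\Eq$ (the ``tilt'' of the subspace). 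Moreover the solutions having some coordinate $n_e\le M$ form a subset of volume $O(Mn^{E-k-1})$; since the hypothesis $E^2N^2/W^2\ll n(1)$ allows the choice $N^2/W^2\ll M\ll n$, we may discard those solutions at the cost of a factor $1+o(1)$ and henceforth assume $n_e\gg N^2/W^2$ for all $e$.

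For such $\vv{n}$ I would estimate $M(\vv{n})$ as follows. Choosing the images in $\G$ of the $V$ branch vertices of $G^{\vv{n}}$ accounts for $N^V(1+O(V^2/N))$ possibilities; given them, the chain replacing an edge $e=(u,v)$ must be realised by a walk in $\G$ of the prescribed length (the lengths of all $E$ chains summing to $n$) between the two chosen images, and the number of such walks is a quantity $\mathcal W_{(\cdot)}(\cdot)$ to which Proposition~\ref{p:clt2} applies (as $n_e\gg N^2/W^2$), giving $(1+o(1))(2W)^{(\cdot)}/N$ \emph{uniformly} in the separation of the endpoints. Multiplying the $E$ factors, the ``generic'' part of $M(\vv{n})$ — where the only vertex coincidences are those forced by $G^{\vv{n}}$ — equals $(1+o(1))N^{V-E}(2W)^n$, i.e.\ $(1+o(1))N^{V-E}$ after dividing by $(2W-1)^n$ (with $n\ll W$ making $(2W/(2W-1))^n=1+o(1)$). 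The genuine subgraph count differs from the generic part only through ``coincidence'' configurations — a chain visiting a vertex twice, or two chains meeting away from a shared branch vertex — and the main point is that these are negligible: there are only $O_G(1)$ coincidence patterns, and for each one splits the affected walks at the coincidence vertices and bounds the resulting constrained walk counts by Propositions~\ref{p:clt1}--\ref{p:clt3}, the upshot being that each additional forced coincidence costs a factor $o(1)$ relative to the generic term. (This is where $\sum_j n(j)\ll W$ enters, keeping the chains short enough that collisions are rare; the bookkeeping is the analogue for $\G$ of the combinatorics of \cite[Part~II]{FS}.)

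Putting the pieces together gives $M(\vv{n})/(2W-1)^n=(1+o(1))N^{V-E}$ uniformly over admissible $\vv{n}$, and summing over the $(1+o(1))g_G(n(1),\dots,n(k))$ admissible $\vv{n}$ yields the assertion for $\Emb$; the identical argument with $n_e\ge 1$ in place of $n_e\ge 0$ gives the one for $\Emb_+$, with the same leading function $g_G$ (the extra constraint changes the polytope only by $O(1)$ in each coordinate, hence not its leading volume). I expect the main obstacle to be the coincidence‑control step of the previous paragraph: the rest is a volume computation together with a direct appeal to Section~\ref{s:rw}, but ruling out a non‑negligible contribution from self‑intersecting chains is delicate, since a chain of length $\gg N^2/W^2$ has passed its mixing time and therefore does self‑intersect typically — one has to exploit that the total length $\sum_j n(j)$ is $\ll W$, equivalently that the individual chains are short on the scale $W^{2/3}$, together with the finite classification of coincidence patterns.
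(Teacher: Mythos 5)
Your skeleton is essentially the paper's: decompose $\Emb(G,\Eq)$ over the integer solutions of $\Eq$ (the parity statement follows from restriction~3 exactly as you say), identify $g_G$ with the volume of the solution polytope (whence homogeneity, continuity and the simplex bound $(Cn)^{E-k}/(E-k)!$), estimate the contribution of each solution all of whose coordinates exceed some $n_0\gg N^2/W^2$ by $N^V$ vertex placements times the per-chain walk counts supplied by Proposition~\ref{p:clt2} (the paper inserts Lemma~\ref{l:nbtrw} at this point to pass to non-backtracking walks avoiding prescribed small sets at the junctions), and discard the remaining solutions; the argument for $\Emb_+$ is identical.

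Two corrections to your sketch, though. (i) Discarding the solutions with a coordinate $\le M$ requires a bound on their individual contributions, not only on their number: a solution in which short chains close a cycle of $G$ (say two parallel chains of lengths $m_1,m_2\le M$ joining the same pair of branch vertices) contributes up to order $N/(W\sqrt{m_1+m_2})$ times more embeddings than a generic solution, because the walk count between prescribed nearby endpoints is of order $(2W)^{m}/(W\sqrt m)$ rather than $(2W)^m/N$. This is exactly why the paper invokes Proposition~\ref{p:clt3} at this step, and the hypothesis $n(1)\gg E^2N^2/W^2$ is what makes the sum of these enhanced contributions negligible; counting the discarded solutions alone does not suffice. (ii) Your diagnosis that the collision/self-avoidance correction is the delicate point is right, but the mechanism you propose is not the operative one: $\sum_j n(j)\ll W$ does not make collisions rare. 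For a chain of length $n_e$ past its mixing time, the expected number of coincidences is of order $n_e\bigl(N/W^2+n_e/N\bigr)$ (and similarly for chain--chain collisions), so what is needed is $n\ll\min\bigl(W^2/N,\sqrt N\bigr)$; this holds in the regime where the proposition is actually used in Proposition~\ref{p:main.rm} (namely $n=O(N^{1/3})$, $W\gg N^{5/6}$), but it is not a consequence of $2n\ll W$ and $n(1)\gg E^2N^2/W^2$ alone, so the coincidence step as you frame it cannot be closed from the stated hypotheses. For what it is worth, the paper's own proof is silent on this point: it equates the embedding count with the non-backtracking walk count of Lemma~\ref{l:nbtrw} and absorbs the self-avoidance correction into the $1+o(1)$, which is harmless only in the regime just described.
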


\begin{prop}\label{p:emb.P}
In the setting above and under the assumptions 1.-3., if
$E^2 \ll n(1) \leq n(2) \leq \cdots \leq n(k) \ll \min \left( N^2/W^2, \, W \right)$,
$\sum n(j) = 2n$, then
\[ \frac{\Emb(G, \Eq)}{(2W-1)^n}
    = (1 + o(1)) \, N \, (2W)^{-E+V-1} h_G(n(1), \cdots, n(k))~,\]
where $h_G$ is a homogeneous function of degree $\frac{E + V - 2k - 1}{2}$, continuous
outside the origin, and
\[ h_G(n(1), \cdots, n(k)) \leq \frac{(Cn)^\frac{E+V-2k-1}{2}}{\left(\frac{E+V-2k-1}{2}\right){!}} ~.\]
The same is true for $\Emb_+(G, \Eq)$. If $\sum n(j) = 2n + 1$,
\[ \Emb(G, \Eq) = \Emb_+(G, \Eq) = 0~. \]
\end{prop}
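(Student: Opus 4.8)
The plan is to follow the same strategy as for Proposition~\ref{p:emb.RM}, but using the diffusive (local central limit) regime of the random walk on $\G$ rather than the flat regime. The starting point is the observation that $\Emb(G,\Eq)$ can be written as a sum, over all $\vv{n}$ with $n_e\geq 0$ satisfying $\Eq$, of the number of isometric copies of $G^{\vv n}$ in $\G$. Up to combinatorially bounded factors (choice of the image of one vertex — a factor $N$ by translation invariance — and lower-order corrections coming from possible coincidences that the isometry condition excludes), the number of such copies is a product over the edges $e$ of quantities of the form $\mathcal{W}_{n_e+1}(R_e)/(2W)^{n_e+1}$, with the displacements $R_e$ constrained by the closing-up conditions inherent in $G$. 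Since each $n_e$ will turn out to be of order $W^{4/5}$, and the relevant displacements are of order $W\sqrt{n_e}\sim W^{7/5}\ll N$ (using $n(j)\ll N^2/W^2$), we are exactly in the regime $1\ll n_e\ll N^2/W^2$ where Proposition~\ref{p:clt1} applies and gives the Gaussian asymptotics for each factor.

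First I would set up the bookkeeping: fix a spanning tree $\mathcal T$ of $G$; the tree edges carry free displacement variables $R_e\in\ZZ/N\ZZ$, and each of the $E-V+1$ non-tree edges imposes one linear closing relation on these. Writing $\mathcal{W}_{n+1}(R)/(2W)^{n+1}\approx c\,[n(W+1)(2W+1)]^{-1/2}\exp\{-3R^2/(n(W+1)(2W+1))\}$ from Proposition~\ref{p:clt1}, the sum over the free $R_e$ of a product of such Gaussians is itself (asymptotically) a Gaussian integral; performing it introduces a Jacobian/normalisation that, combined with the prefactors $n_e^{-1/2}$, produces the overall power of $W$ claimed. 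Next I would pass from the sum over $\vv n$ (with the $k$ linear constraints $\Eq$, assumptions 1--3) to a $(E-k)$-dimensional integral over the simplex-like region $\{n_e\geq 0,\ \Eq\}$ after rescaling $n_e = W^{2/5} z_e$; assumption 1 guarantees the solution set is an $(E-k)$-dimensional affine subspace, assumptions 2--3 keep the $n_e$ nonnegative and comparable, and the integrand is a homogeneous function of the $z$'s. Collecting the exponents — $-\tfrac12$ per edge from the local limit theorem, $+\tfrac12$ per free parameter from the Gaussian integration over displacements, one $n$ per integration variable from the Riemann-sum-to-integral step, the $N$ from translation and the $(2W)^{-E+V-1}$ from the reciprocal walk normalisations — gives homogeneity degree $\tfrac{E+V-2k-1}{2}$ and the stated power-of-$W$ factor. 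The upper bound $h_G\leq (Cn)^{(E+V-2k-1)/2}/(\tfrac{E+V-2k-1}{2})!$ follows by bounding each Gaussian factor by a constant and estimating the volume of the constraint region via the standard simplex-volume bound, exactly as in Proposition~\ref{p:emb.RM}; the parity statement (vanishing when $\sum n(j)$ is odd) is immediate since $\sum_e(n_e+1)\cdot(\text{multiplicity})$ has a fixed parity tied to $\sum n(j)$ and a closed walk in a bipartite-type sense has even length — more concretely, $H^{(n)}$ has vanishing odd moments against the diagram, as already encoded in Corollary~\ref{cor}. The passage to $\Emb_+$ (requiring all $n_e>0$) only removes a lower-dimensional boundary piece and does not affect the asymptotics. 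The error in replacing the exact walk count by its Gaussian approximation is controlled uniformly by Proposition~\ref{p:clt3}, which dominates the tails of the $R_e$-sums and the contribution of atypically large $n_e$.

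The main obstacle is the uniformity of the error terms. Proposition~\ref{p:clt1} is an asymptotic statement for a \emph{single} walk, whereas here we multiply up to $E$ such factors and sum over a region of $\vv n$ and $\vv R$ whose size grows; I need the $o(1)$ to be uniform enough that, after the multi-dimensional summation, the total error is still $o(1)$ relative to the main term. This requires (i) a quantitative version of Proposition~\ref{p:clt1} with an explicit rate depending only on $n/(N^2/W^2)$ and $R/(W\sqrt n)$, (ii) controlling the contribution of the ``bad'' part of the summation range (small $n_e$, or large displacements) using the crude bound Proposition~\ref{p:clt3}, and (iii) checking that the closing constraints from non-tree edges do not conspire to force some $R_e$ to be large on a set that is not negligible — this uses that $G$ is fixed (so $E,V$ are constants) and that $n\ll N^2/W^2$ keeps every typical displacement well below $N$. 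Handling the coincidence corrections (ensuring the embedding is \emph{isometric}, i.e.\ an actual subgraph, not merely a homomorphic image) is routine: the number of degenerate configurations is smaller by a factor $O(W^{-2/5})$ or a power of $1/N$ and is absorbed into the error.
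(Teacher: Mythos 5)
Your proposal follows essentially the same route as the paper: spanning-tree parametrisation of the displacements, the local limit theorem (Proposition~\ref{p:clt1}) applied per edge with Proposition~\ref{p:clt3} controlling small $n_e$ and atypical displacements, Gaussian summation over the $V-1$ free displacement variables, and passage from the lattice sum over $\vv{n}$ to an integral over the polytope $\Delta_\Eq'$, with the same exponent bookkeeping yielding $N\,(2W)^{-E+V-1}$ and homogeneity degree $\tfrac{E+V-2k-1}{2}$. The one substantive point you gloss over is the $\vv{n}$-dependence of the Gaussian normalisation: the paper evaluates the covariance determinant via Cauchy--Binet as a sum over spanning trees, $\det\sum_e n_e^{-1}v_e\otimes v_e=\sum_T\prod_{e\in T}n_e^{-1}$, and the stated upper bound on $h_G$ comes from retaining a single spanning tree in that sum --- merely bounding the Gaussian factors by constants, as you suggest, would prevent you from performing the $R$-summation and does not directly produce the power $(E+V-2k-1)/2$.
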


\begin{prop}\label{p:emb.bd}
Without any restrictions on $n, W$,
\[ \frac{\Emb(G, \Eq)}{(2W)^n}
    \leq \frac{(C_1n)^{E-k}}{(E-k)!} (N)^{V-E}
    + C_2N \, \frac{(C_3n)^{\frac{E+V-2k-1}{2}}}{\left(\frac{E+V-2k-1}{2}\right)!}
        W^{-E+V-1}~.\]
\end{prop}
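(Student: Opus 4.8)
The plan is to run the arguments of Propositions~\ref{p:emb.RM} and~\ref{p:emb.P} in parallel, feeding in the \emph{unconditional} random-walk bound Proposition~\ref{p:clt3} in place of the sharp asymptotics of Propositions~\ref{p:clt1} and~\ref{p:clt2}, and tracking multiplicative constants rather than leading coefficients. The starting point is the reduction used there. A subgraph $G'\hookrightarrow\G$ isometric to $G^{\vv n}$ is specified by the images $\bar v_1,\dots,\bar v_V\in\ZZ/N\ZZ$ of the vertices of $G$ together with, for every edge $e=(v_i,v_j)$, a path in $\G$ realising the chain that replaces $e$; writing $\ell_e\ge1$ for the number of edges of that chain (so that $\sum_{e\in E}\ell_e=n$), and dropping the requirement that the chains be internally disjoint and vertex-injective, the number of choices is at most $\prod_e\mathcal W_{\ell_e}(|\bar v_i-\bar v_j|_N)$. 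Edges with $n_e=-1$ are absorbed by summing over the at most $2^E$ contraction patterns and passing to a contracted graph, so one may assume $\ell_e\ge1$ throughout. Hence
\[
\frac{\Emb(G,\Eq)}{(2W)^n}\ \le\ \sum_{\vv n\,\models\,\Eq}\ \sum_{\bar v\in(\ZZ/N\ZZ)^V}\ \prod_{e=(v_i,v_j)\in E}\frac{\mathcal W_{\ell_e}(|\bar v_i-\bar v_j|_N)}{(2W)^{\ell_e}}.
\]

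Next I apply Proposition~\ref{p:clt3} to each factor and expand the resulting product of two-term bounds over the $2^E$ subsets $S\subseteq E$: for $e\in S$ keep the ``Gaussian'' term $C(W\sqrt{\ell_e})^{-1}\exp\{-cR_e^2/(\ell_e W^2)\}$, and for $e\notin S$ the ``flat'' term $C/N$. Fix $S$. The flat edges give $(C/N)^{E-|S|}$, and the sum over $\bar v$ of the Gaussian product factorises over the connected components of $G_S=(V,S)$; in a component $K$, after fixing the image of one vertex (a factor $N$, so $N^{c(S)}$ in all) the remaining sum is, up to a factor $C^{|V_K|}$, the Gaussian lattice sum over the $|V_K|-1$ spanning-tree displacements of $K$ of $\exp\{-cQ_K\}$ with $Q_K(d)=\sum_{e\subseteq K}(\mathrm{lin}_e(d))^2/(\ell_e W^2)$, a positive-definite form. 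Comparing with the corresponding integral $\int_{\RR^{|V_K|-1}}\exp\{-cQ_K\}=(\pi/c)^{(|V_K|-1)/2}/\sqrt{\det M_{Q_K}}$, using the Cauchy--Binet/matrix-tree identity $\det M_{Q_K}=W^{-2(|V_K|-1)}\sum_{F\subseteq K}\prod_{e\in F}\ell_e^{-1}$ (sum over spanning trees $F$ of $K$), and bounding this determinant below by the term of one fixed spanning forest $F_0\subseteq S$ of $G_S$, one collects powers and obtains, with $r_S=|S|-V+c(S)$ the cycle rank of $G_S$,
\[
(S\text{-term})\ \lesssim\ C^{E}\,N^{\,V-E+r_S}\,W^{-r_S}\ \sum_{\vv n\,\models\,\Eq}\ \prod_{e\in S\setminus F_0}\ell_e^{-1/2}.
\]

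It remains to estimate the sum over $\vv n$. The solutions of $\Eq$ with $\vv n\ge0$ are the lattice points of an $(E-k)$-dimensional polytope of size $\asymp n$; the $r_S$ integrable weights $\ell_e^{-1/2}$ (attached to independent coordinate directions) lower the growth along those directions from $n$ to $\sqrt n$, so that $\sum\prod_{e\in S\setminus F_0}\ell_e^{-1/2}\lesssim (Cn)^{E-k-r_S/2}/(E-k-r_S/2)!$. Combining, the $S$-term is at most $C^E N^{V-E+r_S}W^{-r_S}(Cn)^{E-k-r_S/2}/(E-k-r_S/2)!$, which is log-monotone in $r_S\in\{0,1,\dots,E-V+1\}$, hence maximised either at $r_S=0$, reproducing $\dfrac{(C_1n)^{E-k}}{(E-k)!}N^{V-E}$, or at $r_S=E-V+1$, reproducing $C_2N\dfrac{(C_3n)^{(E+V-2k-1)/2}}{((E+V-2k-1)/2)!}W^{-E+V-1}$. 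Summing the at most $2^E\le C^E$ choices of $S$ only rescales $C_1,C_2,C_3$.

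The delicate points — and the ones to settle first — all sit in the middle step: checking that the Gaussian concentration consumes exactly half a power of $n$ per independent cycle that $S$ closes (the exponent $r_S/2$), that the edges of $S\setminus F_0$ index genuinely independent coordinates of the polytope cut out by $\Eq$, and that the passage from Gaussian sums over $\ZZ/N\ZZ$ to integrals over $\RR$ is legitimate with \emph{no} hypotheses on $\ell_e,W,N$ — narrow Gaussians and short chains must be handled by the crude bounds $\sum_R\exp\{-cR^2/(\ell W^2)\}\le C\min(N,W\sqrt\ell)$, and the wide-Gaussian regime (width $\gtrsim N$) only improves the estimate. Once these are in place the rest is a routine matching of exponents against Propositions~\ref{p:emb.RM} and~\ref{p:emb.P}.
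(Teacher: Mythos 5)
Your plan is the one the paper has in mind (the proof is omitted there as ``very similar'' to those of Propositions~\ref{p:emb.RM} and~\ref{p:emb.P}), and the skeleton is sound: expand the product of the two-term bounds from Proposition~\ref{p:clt3} over subsets $S\subseteq E$, collapse the vertex sum component by component via Cauchy--Binet, and observe that the resulting term is geometric in the cycle rank $r_S$, hence maximised at $r_S=0$ or $r_S=E-V+1$. The genuine gap is the step you flag and then wave through with the parenthetical ``(attached to independent coordinate directions)'', namely the claim that for \emph{one fixed} spanning forest $F_0$ one has
\[
\sum_{\vv n\,\models\,\Eq}\ \prod_{e\in S\setminus F_0}\ell_e^{-1/2}\ \leq\ \frac{(Cn)^{E-k-r_S/2}}{\Gamma\bigl(E-k-\tfrac{r_S}{2}+1\bigr)}~.
\]
This is false as stated. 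Take $G$ a triangle ($V=E=3$), $k=2$, $\Eq:\ 2n_1=n(1)$, $2n_2+2n_3=n(2)$ (hypotheses 1.--3.\ hold), with $n(1)=2$, $n(2)=2n-2$, $S=E$, $r_S=1$. Choosing $F_0=\{e_2,e_3\}$ leaves $S\setminus F_0=\{e_1\}$, and $\ell_{e_1}\equiv 1$ is pinned by $\Eq$ --- it is not a free coordinate of $\Delta_\Eq$ at all --- so the sum equals the full lattice count $n-2$, not $O(\sqrt n)$. The corresponding $S$-term is then of order $NnW^{-1}$, which is not dominated by $C_1n+C_2N\sqrt{n}\,W^{-1}$ (take, say, $W=\sqrt N$ and $n=\sqrt N$). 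Followed literally, your recipe therefore proves a strictly weaker inequality than the proposition.

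The repair is a real piece of work, not a verification. One route is to keep the tree sum and use $\bigl(\sum_T\prod_{e\in T}\ell_e^{-1}\bigr)^{-1/2}\leq\prod_{e\in T_{\min}}\ell_e^{1/2}$ with $T_{\min}=T_{\min}(\vv{\ell})$ the \emph{minimum} spanning forest of $G_S$ for the weights $\ell_e$; the cycle property then guarantees that every discarded edge is at least as heavy as every edge on its fundamental cycle (in the example it forces the discarded edge to carry weight $\geq(n-1)/2$, restoring the factor $\sqrt n$), after which one must split the $\vv{n}$-sum according to which forest is minimal and redo the lattice count under these order constraints. Note also that even in the non-degenerate case, bounding each of the $r_S$ discarded coordinates by $\int_1^n\ell^{-1/2}\,d\ell$ and each remaining free coordinate crudely by $n$ yields only the denominator $(E-k-r_S)!$, whereas the stated $\bigl(\tfrac{E+V-2k-1}{2}\bigr)!$ --- which is what Section~\ref{s:p.comb} and the proof of Theorem~\ref{th:norm} actually use to beat the diagram count $(Cs)^{s+k-1}$ --- requires integrating $\prod_{e\notin T}u_e^{-1/2}$ over the simplex $\sum_e u_e\leq n$ as in Example~\ref{ex:k1}, not coordinate by coordinate.
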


To apply the results of Section~\ref{s:rw}, we need a simple observation. Let
$u, v \in \ZZ / N \ZZ$, $A, B \subset \ZZ / N \ZZ$. Denote by
$\widetilde{\mathcal{W}}_n(u,v,A,B)$ the number of paths $p_n = u u_1 u_2 \cdots u_{n-1} v$
from $u$ to $v$ in $\G$, such that
\begin{itemize}
\item $u \neq u_2$, $u_1 \neq u_3$, \ldots, $u_{n-2} \neq v$ (``{\em non-backtracking}''),
\item $u_1 \notin A$, $u_{n-1} \notin B$.
\end{itemize}
\begin{lemma}\label{l:nbtrw} In the notation above,
\[ \mathcal{W}_n(|u-v|_N) \geq  \widetilde{\mathcal{W}}_n(u,v,A,B)
    \geq \left( 1 - C \, \frac{n + \# A + \# B}{W} \right) \, \mathcal{W}_n(|u-v|_N)~.\]
\end{lemma}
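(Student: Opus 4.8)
The left inequality $\mathcal{W}_n(|u-v|_N) \geq \widetilde{\mathcal{W}}_n(u,v,A,B)$ is immediate, since every path counted by $\widetilde{\mathcal{W}}_n$ is in particular an ordinary path of length $n$ from $u$ to $v$ in $\G$; the constraints defining $\widetilde{\mathcal{W}}_n$ only remove paths. So the whole content is the lower bound on $\widetilde{\mathcal{W}}_n$, i.e.\ an upper bound on the number of ``bad'' paths $p_n = uu_1\cdots u_{n-1}v$ that either backtrack at some step ($u_{j-1} = u_{j+1}$ for some $1 \leq j \leq n-1$) or violate $u_1 \notin A$ or $u_{n-1} \notin B$.

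The plan is a union bound over the type of violation. First I would bound the paths with $u_1 \in A$: such a path is determined by the choice of $u_1 \in A$ (at most $\#A$ choices, each a neighbour of $u$) followed by a path of length $n-1$ from $u_1$ to $v$, of which there are $\mathcal{W}_{n-1}(|u_1 - v|_N)$. By Proposition~\ref{p:clt3}, $\mathcal{W}_{n-1}(|u_1-v|_N) \leq 2W \cdot C\big[(W\sqrt{n})^{-1} + N^{-1}\big] (2W)^{n-1} \leq C (2W)^{n-1}$ crudely; more to the point, one wants this compared to $\mathcal{W}_n(|u-v|_N)$. The cleanest route is to note $\mathcal{W}_n(|u-v|_N) = \sum_{u_1 : 0<|u-u_1|_N\leq W} \mathcal{W}_{n-1}(|u_1-v|_N)$, a sum of $2W$ terms, and that the summands $\mathcal{W}_{n-1}(|u_1-v|_N)$ are comparable to each other up to a bounded factor when $R = |u-v|_N \ll n^{3/4}W$ — by Proposition~\ref{p:clt1} or the uniform bound Proposition~\ref{p:clt3}, shifting $v$ by at most $W$ changes $R^2/(nW^2)$ by $O(\sqrt{n^{-1}}\, \cdot R/(nW)) = o(1)$ in the relevant regime, and in the complementary regime $\mathcal{W}_{n-1} \asymp (2W)^{n-1}/N$ uniformly. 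Hence each summand is at most $C/(W) \cdot \mathcal{W}_n(|u-v|_N)$ (roughly the average), so the paths with $u_1 \in A$ number at most $C\,\frac{\#A}{W}\mathcal{W}_n(|u-v|_N)$. The $u_{n-1} \in B$ case is symmetric, giving $C\,\frac{\#B}{W}\mathcal{W}_n$.

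Next, the backtracking paths. A path with $u_{j-1} = u_{j+1}$ for some fixed $j$ is determined by: an initial segment $uu_1\cdots u_{j-1}$ of length $j-1$ (contributing a walk count $\mathcal{W}_{j-1}$), a choice of $u_j$ among the $2W$ neighbours of $u_{j-1}$, then $u_{j+1} = u_{j-1}$ is forced, then a terminal segment from $u_{j+1}=u_{j-1}$ to $v$ of length $n-j-1$. Summing over $j$ and over the intermediate vertex $u_{j-1}$, one gets a bound of the form $\sum_{j} \sum_{w} \mathcal{W}_{j-1}(|u-w|_N) \cdot 2W \cdot \mathcal{W}_{n-j-1}(|w-v|_N)$. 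Using the same ``each walk count is at most $C/W$ times the full count over one more step'' comparison — equivalently, the Chapman–Kolmogorov identity $\sum_w \mathcal{W}_{j-1}(|u-w|_N)\mathcal{W}_{n-j-1}(|w-v|_N) = \mathcal{W}_{n-2}(|u-v|_N)$ together with $\mathcal{W}_{n-2}(|u-v|_N) \leq C W^{-2}\mathcal{W}_n(|u-v|_N)$ — each $j$ contributes $\leq 2W \cdot C W^{-2} \mathcal{W}_n = (C/W)\mathcal{W}_n$, and there are $\leq n$ values of $j$, for a total of $C\,\frac{n}{W}\mathcal{W}_n(|u-v|_N)$. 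Adding the three contributions and subtracting from $\mathcal{W}_n$ gives the claimed inequality.

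The main obstacle is the comparison step: justifying that $\mathcal{W}_m(|u'-v'|_N) \leq C W^{-(n-m)}\,\mathcal{W}_n(|u-v|_N)$ when $u', v'$ are within distance $O(W)$ of $u, v$ and $n - m = O(1)$ or $O(n)$, uniformly over all the regimes of $n$ relative to $N^2/W^2$ and $R$ relative to $n^{3/4}W$. This is where Propositions~\ref{p:clt1}–\ref{p:clt3} are used: in the Gaussian regime the Gaussian factor $\exp(-3R^2/(n(W+1)(2W+1)))$ is stable under $O(W)$-shifts of $R$ precisely because the hypotheses force $R \ll n^{3/4} W$; in the equilibrium regime $\mathcal{W}_n/(2W)^n \asymp 1/N$ makes the comparison trivial; and Proposition~\ref{p:clt3} glues the two with a clean uniform upper bound. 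Once one packages this as a single lemma ``$\mathcal{W}_{n-a}(R') \leq C_a (2W)^{-a}\mathcal{W}_n(R)$ for $|R-R'|\leq aW$, $a$ bounded'' (and its $a = O(n)$ variant via Chapman–Kolmogorov plus the $n\mapsto n-2$ case iterated, or directly from Proposition~\ref{p:clt3}), the union bound above is routine.
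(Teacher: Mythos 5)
The paper offers no proof of this lemma at all --- it is introduced as ``a simple observation'' and used directly --- so there is no official argument to compare against; your reconstruction is, as far as one can tell, exactly the intended one. The left inequality is indeed trivial, and the union bound over (i) a backtrack at a fixed position $j$, handled by deleting the two wasted steps and using the concatenation identity $\sum_w \mathcal{W}_{j-1}(|u-w|_N)\,\mathcal{W}_{n-j-1}(|w-v|_N)=\mathcal{W}_{n-2}(|u-v|_N)$, and (ii) the $A$- and $B$-violations, handled by comparing the $2W$ summands in $\mathcal{W}_n(R)=\sum_{u_1}\mathcal{W}_{n-1}(|u_1-v|_N)$, is the natural route. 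You also correctly isolate the crux: the comparison $\mathcal{W}_{n-a}(R')\le C(2W)^{-a}\,\mathcal{W}_n(R)$ for $|R'-R|\le aW$, $a\in\{1,2\}$.

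The one point where you overreach is the claim that Proposition~\ref{p:clt3} lets you ``glue'' the regimes into a fully uniform statement. It cannot: it gives only an upper bound on $\mathcal{W}_n$, whereas the comparison needs a matching lower bound, and in the ballistic range the lemma as literally stated is false. For instance, if $R=nW$ there is a unique path, so $\widetilde{\mathcal{W}}_n=0$ whenever $u+W\in A$; and for $R=(n-2)W$ with $W\gg n^2$ the proportion of backtracking paths is of order $1/n\gg n/W$. So the inequality must be read under the hypotheses of Proposition~\ref{p:clt1} ($1\ll n\ll N^2/W^2$, $R\ll n^{3/4}W$) or of Proposition~\ref{p:clt2} --- which is consistent with how the lower bound of the lemma is actually invoked in Section~\ref{s:emb}, always in tandem with one of those two propositions (Proposition~\ref{p:clt3} is paired essentially only with the trivial upper bound $\widetilde{\mathcal{W}}_n\le\mathcal{W}_n$). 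With that reading your argument is complete: shifting $R$ by $O(W)$ and $n$ by $O(1)$ changes the exponent $3R^2/(n(W+1)(2W+1))$ by $O(R/(nW))+O(R^2/(n^2W^2))=o(1)$ and the prefactor by $1+O(1/n)$, while in the regime of Proposition~\ref{p:clt2} all counts are $(1+o(1))(2W)^m/N$ and the comparison is immediate.
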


\begin{proof}[Proof of Proposition~\ref{p:emb.RM}]
Choose $N^2/W^2 \ll n_0 \ll n(1)/E^2$. Let $\Delta_\Eq$ be the set of real non-negative
solutions of $\Eq$; obviously, $\Delta_\Eq$ is an $(E-k)$-dimensional polytope with $E+k$ faces.
It is not hard to see that
\[ \Vol_{E-k-1} \partial \Delta_\Eq = O( \frac{E^2}{n(1)} \Vol \Delta_\Eq)~.\]
Hence, the number $\mathfrak{n}(\Eq)$ of integer solutions $\vv{n}$ of $\Eq$ with $n_e \geq -1$
satisfies:
\[ \mathfrak{n}(\Eq) = \gamma_\Eq \Vol_{E-k-1} \Delta_\Eq \left\{ 1 + O(E^2/n(1)) \right\}
    = \Vol \Delta_{\Eq}' \left\{ 1 + O(E^2/n(1)) \right\}~, \]
where $\Delta_{\Eq}' = \Proj_{\RR^{E-k-1}} \Delta_Eq$ is the projection of $\Delta_\Eq$ onto
a set of $E-k-1$ independent coordinates. Moreover, the number of solutions with at least
one coordinate $< n_0$ is at most
\begin{equation}\label{eq:nearbd}
\Vol \Delta_{\Eq}' \cdot O(n_0 E^2 / n(1))~.
\end{equation}
Let $\vv{n}$ be a solution with
\begin{equation}\label{eq:bigcoords}
n_1, \cdots, n_E \geq n_0~.
\end{equation}
It corresponds to
\[ N^{V-E} (2W-1)^n (1+o(1)) \]
different embeddings. Indeed, there are $N^V$ ways to embed the vertices of $G$, and,
according to Proposition~\ref{p:clt2} and Lemma~\ref{l:nbtrw},
\[ (2W-1)^n n^{-E} (1+o(1)) \]
ways to embed the edges. Thus the total number of embeddings corresponding to solutions
that satisfy (\ref{eq:bigcoords}) is
\begin{equation}\label{eq:tot}
\Vol \Delta_\Eq' \,  N^{V-E} (2W-1)^n (1+o(1))~.
\end{equation}
Applying (\ref{eq:nearbd}), Proposition~\ref{p:clt3} and Lemma~\ref{l:nbtrw}, we see
that the number of embeddings that violate (\ref{eq:bigcoords}) is negligible with respect
to (\ref{eq:tot}). Therefore
\[ \frac{\Emb(G, \Eq)}{(2W-1)^n} = (1 + o(1)) \, \Vol \Delta_\Eq' \,  N^{V-E}~. \]
Finally, observe that $\Vol \Delta_\Eq'$ is an $(E-k)$-homogeneous function of
$n(1)$, \ldots, $n(k)$, and that
\[ \Vol \Delta_\Eq'
    \leq \Vol \left\{ 0 \leq u_1, \cdots, u_E, \, u_1 + \cdots + u_E \leq n \right\}
    = \frac{n^{E-k}}{(E-k)!}~.\]
\end{proof}

\begin{proof}[Proof of Proposition~\ref{p:emb.P}] First, number the edges $1, \cdots, E$
so that $1, \cdots, V-1$ is a spanning tree. For $e = (i_e, t_e)$, let $R_e = i_e' - t_e'$,
where $i_e', t_e' \in \ZZ/N\ZZ$ are the images of $i_e, t_e$ in $G'$. Note that all the $R_e$
are linear combinations of $R_1, \cdots, R_{V-1}$. To every $e \in E$ we correspond a vector
$v_e \in \RR^{V-1}$ such that
\[ R_e = \sum_{f=1}^{V-1} v_e(f) R_f~.\]
By Proposition~\ref{p:clt1} and Lemma~\ref{l:nbtrw},
\begin{multline}\label{eq:emb.1}
\frac{\Emb(G, \Eq)}{(2W-1)^n} = (1+o(1)) \frac{\Emb(G, \Eq)}{(2W)^n}  \\
    = (1 + o(1)) \, \sum_{R_1, \cdots, R_{V-1}} {\sum_{\vv{n}}}^\ast \prod_{e \in E}
        \left[ \frac{2\pi}{3} n_e W^2 \right]^{-1/2}
        \exp \left\{ - \frac{3}{2} \frac{R_e^2}{n_e W^2} \right\}~,
\end{multline}
where the sum is over $n_e$ satisfying $\Eq$. Here we have disregarded the contribution
of $n_e \leq n_0$, where $1 \ll n_0 \ll n(1)$; this can be easily justified using
Proposition~\ref{p:clt3} and the asymptotic estimates below.
\\*
Now, applying the Poisson summation formula and discarding the negligible terms, we have:
\begin{multline*}
\sum_{R_1, \cdots, R_{V-1} \in \ZZ}
    \exp \left\{ - \frac{3}{2} \frac{R_e^2}{n_e W^2} \right\} \\
    = (1 + o(1)) \left[\frac{2\pi W^2}{3}\right]^\frac{V-1}{2}
       \left\{ \det  \left[ \sum_e n_e^{-1} v_e \otimes v_e\right] \right\}^{-1/2}~.
\end{multline*}
According to the Cauchy--Binet formula,
\[ \det \sum_e c_e v_e \otimes v_e
    = \sum_{T \in \binom{E}{V-1}} \prod_{e \in E} c_e \, {\det}^2 (v_e)_{e \in T}~.\]
If $T$ is not a spanning tree of $G$, $(v_e)_{e \in T}$ is not of full rank and hence has zero
determinant. If $T$ is a spanning tree, $(v_e)_{e \in T}$ and its inverse are integer matrices,
hence the squared determinant is equal to $1$, therefore
\[ \det \sum_e c_e v_e \otimes v_e = \sum_{T} \sum_{e \in T} c_e~,\]
where now the sum is over spanning trees $T \subset E$. Going back to (\ref{eq:emb.1}), we
deduce:
\begin{multline}\label{eq:emb.2}
\frac{\Emb(G, \Eq)}{(2W-1)^n} = (1+o(1)) \, \\
    \times \left[\frac{2\pi W^2}{3}\right]^\frac{-E+V-1}{2} \,
    {\sum_{\vv{n}}}^\ast
        \left( \prod_{e \in E} \frac{1}{\sqrt{n_e}} \right) \,
        \left( \sum_T \prod_{e \in T} \frac{1}{n_e} \right)^{-1/2}~.
\end{multline}
Finally, replace the sum $\sum^\ast$ with an integral over the polytope
\[ \Delta_{\Eq}' = \Proj_{\RR^{E-k}} \Delta_\Eq~,\]
where $\Delta_\Eq$ is the set of non-negative real solutions of $\Eq$, and the projection
is onto an independent subset of coordinates. This can be done for example using the Poisson
summation formula. We deduce:
\begin{multline}\label{eq:clt.sp2}
\frac{\Emb(G, \Eq)}{(2W-1)^n} = (1+o(1)) \, \left[\frac{2\pi W^2}{3}\right]^\frac{-E+V-1}{2} \, \\
     \times \idotsint\limits_{\Delta_\Eq'}
         \left( \prod_{e \in E} \frac{1}{\sqrt{u_e}} \right) \,
         \left( \sum_T \prod_{e \in T} \frac{1}{u_e} \right)^{-1/2} du_1 \cdots du_{E-k}
\end{multline}
(where all the other $u_e$ are determined from $\Eq$.) It is easy to see that the integral
in the right-hand side is $\frac{E+V-2k-1}{2}$--homogeneous in $n(1), \cdots, n(k)$.
Replacing $\Sigma_T$ with a single spanning tree $T_0$, we deduce the upper bound.
\end{proof}

\begin{ex}\label{ex:k1} The reader may find the following example illustrative: for $k = 1$,
\[ \Eq: \quad \sum 2 n_e = 2n~,\]
we have:
\[ \idotsint\limits_{\Delta_\Eq'} = n^{\frac{E+V-3}{2}} \idotsint\limits_{\widetilde\Delta_\Eq'}~,\]
where the set $\widetilde\Delta_\Eq'= n^{-1} \Delta_\Eq' = \{ \sum x_e \leq 1, x_e \geq 0\}$ does not actually depend of $n$.
\end{ex}

We omit the proof of Proposition~\ref{p:emb.bd}, which is very similar to the proofs of
the previous two propositions.

\section{Proofs of Propositions~\ref{p:main.rm}, \ref{p:main.rw}}\label{s:p.comb}

According to Lemma~\ref{l:fs} and the discussion preceding and following it,
\[ T(n(1), \cdots, n(k)) = \sum_{\mathpzc{D}} \mathrm{Paths}(n(1), \cdots, n(k); \mathpzc{D})~,\]
where the sum is over connected diagrams $\mathpzc{D} = (\bar{G} = (\bar{V}, \bar{E}), \bar{p})$, and
\[ \mathrm{Paths}(n(1), \cdots, n(k); \mathpzc{D}) \]
is the number of $k$-paths $p_{n(1), \cdots, n(k)}$ corresponding to $\mathpzc{D}$ and satisfying
(a)--(d) from Section~\ref{s:p}. Let
\[ c_j(\bar{e}) = \# \left\{ \begin{split}
    &\text{times that the $j$-th part}\\
    &\,\,\text{of $p$ passes through $\bar{e}$}
                       \end{split} \right\} \in \{0, 1, 2\}~,
    \quad 1 \leq j \leq k~, \, \bar{e} \in \bar{E}~,\]
and consider the system of equations
\[ \Eq_\mathpzc{D}: \quad \sum_{\bar{e}} c_j(\bar{e}) n_{\bar{e}} = n(j)~, \quad 1 \leq j \leq k~.\]
Then
\begin{equation}\label{eq:paths_emb}
\Emb_+(\bar{G}, \Eq_\mathpzc{D}) \leq \mathrm{Paths}(n(1), \cdots, n(k); \mathpzc{D})
    \leq \Emb(\bar{G}, \Eq_\mathpzc{D})~;
\end{equation}
here $\# \bar{V} = 2s$ and $\# \bar{E} = 3s-k$. Therefore we can apply the results of
Section~\ref{s:emb}.

\begin{proof}[Proof of Proposition~\ref{p:main.rm}]
Choose $n_0$ and $s_0$ so that
\[ N^{1/3} \gg n_0 \gg \frac{s_0 N^2}{W^2} \gg \frac{N^2}{W^2}~.\]
If $n_0 \leq n(1) \leq \cdots \leq n(k) \leq R N^{1/3}$, and $\mathpzc{D}$ is a
connected diagram with $2s$ vertices and $3s-k$ edges, $1 \leq s \leq s_0$, then
Proposition~\ref{p:emb.RM} yields:
\[\begin{split}
&\frac{\mathrm{Paths}(n(1), \cdots, n(k); \mathpzc{D})}{(2W-1)^n} \\
&\qquad= (1+o(1)) \, g_\mathpzc{D}(n(1), \cdots, n(k)) N^{-s+k} \\
&\qquad= (1+o(1)) \, n(1) \cdots n(k) \,
    \widetilde{g}_\mathpzc{D}(n(1)/N^{1/3}, \cdots, n(k)/N^{1/3})~,
\end{split}\]
where $g_\mathpzc{D}$ is homogeneous of degree $3s-k$, and
\[ \widetilde{g}_\mathpzc{D}(x_1, \cdots, x_k)
    = (x_1 \cdots x_k)^{-1} g_\mathpzc{D}(x_1, \cdots, x_k)~.\]
If $s > s_0$, Proposition~\ref{p:emb.bd} yields:
\[ \frac{\mathrm{Paths}(n(1), \cdots, n(k); \mathpzc{D})}{(2W-1)^n}
    \leq \frac{(C'n)^{3s-2k}}{(3s-2k)!} \, (C'N)^{-s+k}~,\]
hence the total contribution of all diagrams with $s > s_0$ is at most
\[\begin{split}
&C_k \sum_{s>s_0} \frac{(C'n)^{3s-2k}}{(3s-2k)!} \, (C'N)^{-s+k} \, (C s)^{s+k-1}\\
&\qquad\leq C_1(k) n^k \sum_{s > s_0} \frac{(C_1 n^3/N)^{s-k}}{s^{2s-3k+1}}
    = o(n^k) = o(N^{k/3})~.
\end{split} \]
As the same bound holds for the sum of
\[ n(1) \cdots n(k) \, \widetilde{g}_\mathpzc{D}(n(1)/N^{1/3}, \cdots, n(k)/N^{1/3})\]
over diagrams with $s>s_0$, we deduce:
\begin{multline*}
\frac{T(n(1), \cdots, n(k))}{(2W-1)^n} \\
    = \left[ \prod n(j) \right] \,
    \sum_\mathpzc{D} \widetilde{g}_\mathpzc{D}(n(1)/N^{1/3}, \cdots, n(k)/N^{1/3})
    + o(N^{k/3})~,
\end{multline*}
where the sum is now over all connected $k$-diagrams $\mathpzc{D}$. To prove the bound for
$n(1) < n_0$, apply Proposition~\ref{p:emb.RM} in a similar fashion.
\end{proof}

\begin{proof}[Proof of Proposition~\ref{p:main.rw} (Sketch)]
Choose $s_0, n_0, n_0'$ so that
\[ 1 \ll s_0^2 \ll n_0 \ll W^{2/5} \ll n_0' \ll \frac{N^2}{W^2}~, \]
and proceed as in the previous proof, using Proposition~\ref{p:emb.P} instead of
Proposition~\ref{p:emb.RM}.
\end{proof}

\section{Digression}\label{s:sl}

Let $\{ \mu_N \}_{N=1}^\infty$ be a sequence of probability measures on $\RR$,
and let $s_N \to +\infty$ be a sequence of (real) positive numbers. We shall study
the scaled measures
\begin{equation}\label{eq:sc}
\begin{cases}
    &\sigma_{R,N}(\lambda) = s_N^3 \left( 1 - \mu_N( 1 - 2 s_N^{-2} \lambda ) \right)\\
    &\sigma_{L,N}(\lambda) = s_N^3 \left( \mu_N( -1 + 2 s_N^{-2} \lambda) \right)
\end{cases}~, \quad \lambda \in \RR~.
\end{equation}
This scaling is meaningful if $\mu_N$ are close (in some sense) to the Wigner
measure $\mu_{\operatorname{Wigner}}$.
\\*
Let
\begin{equation}\label{eq:muhat}
\widehat{\mu_N}(n) = \int U_n(\alpha) d\mu_N(\alpha)~,
\end{equation}
where as before $U_n$ are the Chebyshev polynomials of the second kind, and
assume that
\begin{equation}\label{eq:hat.asymp}
\widehat{\mu_N}(n) = \frac{n}{s_N^3} \Big[ \phi_R(n/s_N) + (-1)^n \phi_L(n/s_N) \Big]
    + \frac{\eps_N(n)}{s_N^2}~,
\end{equation}
where
\begin{equation}\label{eq:phiprop}
\phi_L, \phi_R \in C(0, +\infty) \bigcap L_1(\exp \left\{ - x^{-2+\delta} \right\} dx)
\end{equation}
(for some $\delta > 0$), and $\eps_N(n)$ are ``small'' (in a sense made precise below),
for $n = O(s_N)$. Then $\sigma_{R,N}, \sigma_{L,N}$ converge to limits that can be expressed
in terms of $\phi_R, \phi_L$. We state this as a proposition; the main ingredient of the
proof is a variant of Levitan's uniqueness theorem \cite{L}. The assumptions can be
definitely relaxed: thus, the integrability condition (\ref{eq:phiprop}) can be replaced
with a weaker one using the methods of Levitan and Meiman \cite{LM} and Vul \cite{V}.

\begin{prop}\label{p:6}
Let $\{\mu_N\}$ be a sequence of measures on $\RR$, and assume that the coefficients
$\widehat{\mu_N}$ (defined in (\ref{eq:muhat})) satisfy (\ref{eq:hat.asymp}), with
$\phi_L$, $\phi_R$ as in (\ref{eq:phiprop}), and
\begin{enumerate}
\item $\sum_{n=1}^{s_N} \frac{|\eps_N(n)|}{n} \leq C$;
\item for any $R \in \NN$, $\eps_N(n) = o(1)$ for $n \in \{1,2,\cdots,Rs_N\}$, with the
implicit constant depending only on $R$.
\end{enumerate}
Then the measures $\sigma_{R,N}$ and $\sigma_{L,N}$ (defined in (\ref{eq:sc})) converge to
limiting measures $\sigma_R$ and $\sigma_L$ (respectively), where $\sigma_R$ and
$\sigma_L$ are uniquely defined by $\phi_R$ and $\phi_L$ (respectively). The limiting
measures $\sigma = \sigma_R, \sigma_L$ share the following properties:
\begin{equation}\label{eq:limprop}
\begin{cases}
\left|\sigma(\lambda) - \frac{2}{3\pi} \lambda_+^{3/2} \right| = O(\lambda)~, &\lambda \to +\infty~; \\
\sigma(\lambda) = O \left[ \exp \left\{ - C' |\lambda|^{\frac{1 - \delta/2}{1 - \delta}}\right\} \right]~,
    &\lambda \to -\infty~.
\end{cases}
\end{equation}
\end{prop}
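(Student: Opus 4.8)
The plan is to pass from the Chebyshev coefficients $\widehat{\mu_N}(n)$ to the scaled edge measures via a two-step argument: first identify a candidate limiting object in terms of $\phi_R,\phi_L$ by a formal computation, then justify the convergence and the uniqueness using a Levitan-type argument from Section~\ref{s:sl}. The starting point is the generating identity $\sum_n \widehat{\mu_N}(n)\, U_n(\cos\theta)\, x^n$, or rather the observation that $\mu_N$ is determined by $\{\widehat{\mu_N}(n)\}$ since the $U_n$ are a basis; substituting the asymptotics \eqref{eq:hat.asymp} and performing the substitution $\alpha = 1 - 2s_N^{-2}\lambda$ (so that near the right edge $U_n(\alpha)$ behaves like the Airy-type kernel $s_N^3 n^{-1}\cdot(\text{Airy function of }n/s_N\text{ and }\lambda)$ up to the oscillatory factor $(-1)^n$ separating the two edges) formally yields an integral transform $\sigma_R(\lambda) = \mathcal{T}[\phi_R](\lambda)$ whose kernel is a Laplace/Airy-type transform. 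I would define $\sigma_R$ (resp.\ $\sigma_L$) to be exactly this transform of $\phi_R$ (resp.\ $\phi_L$); the integrability hypothesis \eqref{eq:phiprop} is precisely what makes the transform converge and produces the stated tail bounds \eqref{eq:limprop}.

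First I would make the edge scaling of the Chebyshev polynomials precise: for $n$ and $\lambda$ in the relevant range, $U_n\!\left(1 - \tfrac{2\lambda}{s_N^2}\right)$ and $U_n\!\left(-1 + \tfrac{2\lambda}{s_N^2}\right)$ admit uniform asymptotics in terms of $n/s_N$ (the transition region where $n^2/s_N^2 \asymp \lambda$ is governed by the Airy function, and away from it by exponential growth/decay), with error terms controllable by hypotheses 1.\ and 2. Using these, the partial sums $\sum_{n\le Rs_N}\widehat{\mu_N}(n)U_n$ evaluated at the scaled point converge, after multiplying by $s_N^3$, to the transform applied to $\phi_R$ restricted to $(0,R)$; the tail $n > Rs_N$ is negligible uniformly by condition 1.\ together with the integrability of $\phi_R,\phi_L$ against $\exp\{-x^{-2+\delta}\}$, which bounds both the true coefficients and the limiting contributions for large argument. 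This gives convergence of $\sigma_{R,N},\sigma_{L,N}$ to $\sigma_R,\sigma_L$ on a dense class of test functions.

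The tail estimates \eqref{eq:limprop} I would extract directly from the explicit transform: for $\lambda\to+\infty$ the Airy kernel has a stationary-phase/saddle regime producing the Wigner-type $\tfrac{2}{3\pi}\lambda_+^{3/2}$ leading term with an $O(\lambda)$ correction; for $\lambda\to-\infty$ one uses the superexponential decay of the Airy function together with $\phi\in L_1(\exp\{-x^{-2+\delta}\}dx)$, optimizing over the contributing range of $n/s_N$, which yields the exponent $\tfrac{1-\delta/2}{1-\delta}$. The main obstacle I anticipate is the \emph{uniqueness} claim — that $\sigma_R$ is determined by $\phi_R$ alone (and not by finer data about $\mu_N$) — and more subtly that the convergence holds as measures, not merely formally. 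This is exactly where Levitan's uniqueness theorem \cite{L} enters: the function $\sigma_R$, viewed through its own Chebyshev-type expansion, is a function of exponential type whose growth is controlled by \eqref{eq:phiprop}, so it is rigid enough to be pinned down by the asymptotic data; I would invoke the variant collected in Section~\ref{s:sl} to upgrade the formal identification into genuine convergence and uniqueness, handling the $\phi_R$ versus $\phi_L$ separation by the parity factor $(-1)^n$, which decouples the two edges in the limit.
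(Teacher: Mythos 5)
Your overall instincts (edge asymptotics of the $U_n$, Levitan for uniqueness, tail bounds from the integrability of $\phi$) point in the right direction, but the central mechanism of your argument has a gap that the paper's proof is specifically designed to avoid. You propose to \emph{construct} the limit by inverting the Chebyshev expansion: substitute the asymptotics (\ref{eq:hat.asymp}) into the expansion of $\mu_N$ and resum to get $\sigma_R=\mathcal{T}[\phi_R]$. But the hypotheses give information about $\widehat{\mu_N}(n)$ only for $n=O(s_N)$ (condition 1 controls $n\le s_N$, condition 2 controls $n\le Rs_N$ for each fixed $R$); nothing whatsoever is assumed about the coefficients with $n\gg s_N$. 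Since the Chebyshev coefficients of an indicator (or of any reasonable mollification at scale $s_N^{-2}$ near the edge) decay only slowly, the tail $n>Rs_N$ of the expansion is \emph{not} negligible on the basis of the stated hypotheses, and your claim that it is controlled ``by condition 1 together with the integrability of $\phi_R,\phi_L$'' does not hold: those conditions say nothing about the uncontrolled high coefficients. Relatedly, the explicit inverse transform $\phi_R\mapsto\sigma_R$ is delicate to define, because on the negative half-line the kernel $\sin(x\sqrt\lambda)/(x\sqrt\lambda)=\sinh(x\sqrt{|\lambda|})/(x\sqrt{|\lambda|})$ grows exponentially; the paper deliberately never inverts it (the Remark following the Proposition states the relation (\ref{eq:gl}) and omits its proof precisely because it is not needed).

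The paper's route is structurally different and closes this gap with two ingredients you are missing. First, the Erd\H{o}s--Tur\'an-type inequality (Proposition~\ref{p:et}) applied with $s=s_N$ uses only the controlled coefficients and yields the a priori bound $|\tau_{R,N}(\lambda)|\le C\max(\lambda,1)$; this gives precompactness of $\{\sigma_{R,N}\}$, reduces convergence to uniqueness of subsequential limits, supplies the domination needed to pass to the limit, furnishes the growth condition (\ref{eq:decay}) required for the Phragm\'en--Lindel\"of step in Levitan's argument, and directly gives the right-tail estimate in (\ref{eq:limprop}). Second, instead of inverting the expansion, the paper tests any subsequential limit against the \emph{fixed-degree} polynomials $U_n^4$ and $U_{n+1}U_n^3$ with $n=O(s_N)$ (so only controlled coefficients enter via the linearization (\ref{eq:jk})), obtaining the relations (\ref{eq:gl4}) with the positive kernel $\sin^4(x\sqrt\lambda)/(x\sqrt\lambda)^4$; the sum and difference of these two quantities is what actually decouples $\phi_R$ from $\phi_L$ through the parity factor, a mechanism you gesture at but do not implement. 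Levitan's theorem is then used only to show that (\ref{eq:gl4}) admits at most one solution, not to build the solution. To repair your proposal you would need to replace the direct resummation by such a compactness-plus-uniqueness scheme, or else add hypotheses on $\widehat{\mu_N}(n)$ for all $n$.
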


\begin{rmk}
One can show that
\begin{equation}\label{eq:gl}
\begin{split}
\int_{-\infty}^{+\infty} \frac{\sin x\sqrt{\lambda}}{x \sqrt \lambda} d\tau_R(\lambda)
    &= \phi_R(x) \\
\int_{-\infty}^{+\infty} \frac{\sin x\sqrt{\lambda}}{x \sqrt \lambda} d\tau_L(\lambda)
    &= \phi_L(x)
\end{split}~,\quad
\begin{split}
\sigma_R(\lambda) = \tau_R(\lambda) + \frac{2}{3\pi} \lambda_+^{3/2}\\
\sigma_L(\lambda) = \tau_L(\lambda) + \frac{2}{3\pi} \lambda_+^{3/2}
\end{split}~.\end{equation}
We do not use this in the sequel, and therefore omit the proof.
\end{rmk}

To prove Proposition~\ref{p:6}, we shall need the following Erd\H{o}s--Tur\'an type
inequality:
\begin{prop}[{\cite[Proposition~5]{FS2}}]\label{p:et}
Let $\mu$ be a probability measure on $\RR$. Then, for any $s \geq 1$
and any $\alpha \in \RR$,
\[ \big| \mu(\alpha) - \mu_{\operatorname{Wigner}}(\alpha) \big| \\
    \leq C \left\{ \frac{\rho(\alpha; s)}{s}
        + \sqrt{\rho(\alpha; s)}\,\sum_{n=1}^{s} \frac{|\widehat{\mu}(n)|}{n} \right\}~, \]
where $\rho(\alpha; s) = \max(1 - |\alpha|, s^{-2})$.
\end{prop}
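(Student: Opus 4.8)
The plan is to reduce the statement to a weighted Erd\H os--Tur\'an discrepancy inequality on the circle and then invoke a Beurling--Selberg type extremal construction. Substituting $\alpha = \cos\theta$, $\theta\in[0,\pi]$, and recalling $U_n(\cos\theta) = \sin((n+1)\theta)/\sin\theta$, one sees that $\mu_{\operatorname{Wigner}}$ pushes forward to $\tfrac2\pi\sin^2\theta\,d\theta$, with respect to which the $U_n$ are orthonormal; in particular $\int U_n\,d\mu_{\operatorname{Wigner}} = \delta_{n0}$, so $\widehat{\mu_{\operatorname{Wigner}}}(n) = 0$ for $n\geq1$. The quantity to estimate thus becomes the discrepancy, at the point $\theta_0 = \arccos\alpha$, between the pushforward of $\mu$ and $\tfrac2\pi\sin^2\theta\,d\theta$, where the admissible test functions are trigonometric polynomials of degree $\leq s+O(1)$ whose $U_n(\cos\theta)\sin^2\theta$-coefficients encode $\widehat{\mu}(n)$, since $U_n(\cos\theta)\sin^2\theta = \tfrac12[\cos n\theta - \cos((n+2)\theta)]$.

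First I would construct, for each $\theta_0$, a pair of even $2\pi$-periodic trigonometric polynomials $P_\pm$ of degree $\leq s+O(1)$ with $P_-(\theta)\leq\1_{[\theta_0,\pi]}(\theta)\leq P_+(\theta)$, chosen so that $\int(P_+-P_-)(\theta)\,\tfrac2\pi\sin^2\theta\,d\theta$ is as small as possible. This is the Beurling--Selberg majorant/minorant of an indicator, modified so that the $L^1$-error is measured against the \emph{variable} weight $\sin^2\theta$ rather than Lebesgue measure; near the endpoints $\theta_0\in\{0,\pi\}$ the weight degenerates, which is exactly the source of the factor $\rho(\alpha;s) = \max(1-|\alpha|,s^{-2})$ --- the unavoidable error in a window of angular width $\sim 1/s$ about $\theta_0$ has $\sin^2\theta$-mass of order $\rho(\alpha;s)/s$. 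Expanding in the basis $\{U_n(\cos\theta)\sin^2\theta\}$, write $P_\pm = \tfrac2\pi\sum_{n=0}^{s} a_n^\pm\,U_n(\cos\theta)\sin^2\theta$ with $a_0^\pm = 1 - \mu_{\operatorname{Wigner}}(\alpha)$ (the main term) and, crucially, $|a_n^\pm|\leq C\sqrt{\rho(\alpha;s)}/n$ for $1\leq n\leq s$ --- this Fourier-coefficient decay is the standard output of the Beurling--Selberg construction, the $\sqrt\rho$ again reflecting the weight at scale $1/s$.

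Then I would integrate the sandwich against $d\mu$: $\int P_-\,d\mu \leq 1-\mu(\alpha)\leq\int P_+\,d\mu$, while $\int P_\pm\,d\mu = a_0^\pm + \sum_{n=1}^{s} a_n^\pm\,\widehat{\mu}(n)$. Using also that $\int(P_+-P_-)\,d\mu = \int(P_+-P_-)\,d\mu_{\operatorname{Wigner}} + \sum_{n=1}^s(a_n^+-a_n^-)\widehat{\mu}(n)$, and that the first term here is $O(\rho/s)$ (the $\mu_{\operatorname{Wigner}}$ case, where all $\widehat{\mu}(n)$ vanish), one gets
\[
\bigl|(1-\mu(\alpha)) - a_0^\pm\bigr|
  \leq \sum_{n=1}^{s}|a_n^\pm|\,|\widehat{\mu}(n)| + \int(P_+-P_-)\,d\mu
  \leq C\sqrt{\rho(\alpha;s)}\,\sum_{n=1}^{s}\frac{|\widehat{\mu}(n)|}{n} + C\,\frac{\rho(\alpha;s)}{s}~.
\]
Since $a_0^\pm = 1-\mu_{\operatorname{Wigner}}(\alpha)$, this is the asserted inequality for $\alpha\geq0$ in the form $|(1-\mu) - (1-\mu_{\operatorname{Wigner}})|$; the case $\alpha<0$ is handled by the symmetric choice $\1_{[-\pi,\theta_0]}$ (equivalently using $U_n(-\cos\theta) = (-1)^n U_n(\cos\theta)$). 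The main obstacle is the weighted Beurling--Selberg step: producing majorant/minorant polynomials of degree $\leq s+O(1)$ whose $L^1(\sin^2\theta\,d\theta)$-error is $O(\rho/s)$ \emph{and} whose $n$-th coefficient is $O(\sqrt\rho/n)$, uniformly in $\theta_0$ --- in particular for $\theta_0$ within $O(1/s)$ of $0$ or $\pi$, where one must prevent the extremal function from creating spurious mass near the degenerate weight. Everything else is bookkeeping: the change of variables, the orthogonality relation, and the two applications of the sandwich.
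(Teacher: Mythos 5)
This proposition is not proved in the paper at all: it is imported verbatim from \cite[Proposition~5]{FS2}, so there is no internal proof to compare against. Judged on its own terms, your outline has the right architecture for an Erd\H{o}s--Tur\'an-type bound (change of variables $\alpha=\cos\theta$, orthonormality of $U_n$ against $\tfrac2\pi\sin^2\theta\,d\theta$, sandwich by majorant/minorant polynomials, read off the coefficients of $\widehat\mu(n)$), and your heuristic for where $\rho(\alpha;s)$ comes from is correct. But the step you yourself flag as ``the main obstacle'' is not a technicality --- it is the entire content of the proposition, and it is exactly where the argument as written does not close. The off-the-shelf Selberg polynomials $S_\pm$ of degree $s$ satisfy $|\widehat{S_\pm}(m)-\widehat{\1}(m)|\le C/s$ per Fourier coefficient; since the $U_n$-coefficient of $P_\pm$ is (up to normalisation) the difference $\widehat{S_\pm}(n)-\widehat{S_\pm}(n+2)$, this only yields $|a_n^\pm|\le C\big(\sqrt{\rho}/n+1/s\big)$. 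The extra $1/s$ is harmless when $1-|\alpha|\gtrsim 1$, but near the edge, where $\sqrt\rho\asymp 1/s$, it swamps $\sqrt\rho/n$ for $n\asymp s$ and you only recover the bound \emph{without} the $\sqrt\rho$ factor --- i.e.\ the plain Erd\H{o}s--Tur\'an analogue, not Proposition~\ref{p:et}. (The sharp indicator itself does have $U_n$-coefficients $O(\sqrt\rho/n)$, as a direct computation with $\tfrac12[\cos n\theta-\cos(n+2)\theta]$ shows; what is missing is a construction of the correction term whose $(n)$-minus-$(n+2)$ coefficient differences exhibit the same cancellation.) Until you either build a genuinely weighted extremal function or replace the extremal-function route by a positive-kernel convolution argument that keeps track of this cancellation, the proof is incomplete at its crux.

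Two further points. First, the proposition is stated for probability measures on $\RR$, not on $[-1,1]$, and this generality is used in the paper (the empirical measures have eigenvalues outside $[-1,1]$). Your sandwich $P_-\le\1\le P_+$ is constructed on the circle, hence only controls $\mu$ restricted to $[-1,1]$; you must additionally arrange (or verify) the majorant/minorant inequalities for the polynomials $P_\pm(\alpha)$ on all of $\RR$, where $U_n$ grows rapidly. Second, a bookkeeping slip: you write $P_\pm=\tfrac2\pi\sum a_n^\pm U_n(\cos\theta)\sin^2\theta$, but to get $\int P_\pm\,d\mu=\sum a_n^\pm\widehat\mu(n)$ you need the expansion $P_\pm=\sum a_n^\pm U_n(\cos\theta)$ \emph{without} the weight; the weight belongs only in the formula $a_n^\pm=\tfrac2\pi\int P_\pm\,U_n(\cos\theta)\sin^2\theta\,d\theta$ for the coefficients. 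This is repairable, but as written the central identity of your bookkeeping is not correct.
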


\begin{proof}[Proof of Proposition~\ref{p:6}] \hfill\\

\newcounter{remnr3}
\def\rem3{
    \addtocounter{remnr3}{1}
    \vspace{2mm}\noindent{\bf (\alph{remnr3})} }
\rem3 According to assumption~1.\ and Proposition~\ref{p:et},
\begin{equation}\label{eq:etcor}
|\tau_{R,N}(\lambda)| \leq C \max(\lambda, 1)~,
\end{equation}
where
\[ \tau_{R,N}(\lambda) = \sigma_{R,N}(\lambda)
    - \frac{2}{\pi} \int_{1-2s_N^{-2} \lambda_+}^1 \sqrt{1-\alpha^2} \, d\alpha~. \]
Therefore the sequence $\{ \sigma_{R,N} \}$ is precompact. The same is of
course true for $\{ \sigma_{L,N} \}$. If $\sigma_{R, N_j} \to \sigma_R$,
then $\tau_{R, N_j} \to \tau_R$, defined by
\[ \sigma_R(\lambda) = \tau_R(\lambda) + \frac{2}{3\pi} \lambda_+^{3/2}~.\]
Therefore we need to prove that $\sigma_R$, $\tau_R$ are uniquely determined.

\rem3 Let
\begin{equation}\label{eq:jk}
\begin{split}
U_n^4(\alpha) &= \sum_{k=0}^{4n} c_{n,k}^+ U_k(\alpha)~,\\
U_{n+1}(\alpha) U_n^3(\alpha) &= \sum_{k=0}^{4n+1} c_{n,k}^- U_k(\alpha)~.
\end{split}
\end{equation}
The explicit expressions for $c_{n,k}^\pm$ can be easily derived from the identity
\[ U_k(\alpha) U_\ell(\alpha) = \sum_{m=0}^{\min(k, \ell)} U_{|\ell - k| + 2m}(\alpha)~.\]
We shall only need the following simple properties:
\[\begin{split}
c_{n,k}^+ &= 2 \1_{2\ZZ}(k) \, (c(k/n) + o(1)) \, n^2~, \\
c_{n, 0}^+ &= (c_0 + o(1)) \, n~,\\
c_{n,k}^- &= 2 \1_{2\ZZ+1}(k) \, (c(k/n) + o(1)) \, n^2~,
\end{split}\]
where $c \in C[0, 4.01]$ and the $o(1)$ terms are uniform. Thus we have:
\[\begin{split}
&\frac{s_N^3}{n^4} \int U_n^4(\alpha) d\mu_N(\alpha) \\
    &\quad= \int_0^4 x c(x) \left[ \phi_R(xn/s_N) + \phi_L(xn/s_N) \right] dx
        + c_0 s_N^3/n^3 + o(1)~, \\
&\frac{s_N^3}{n^4} \int U_{n+1}(\alpha) U_n^3(\alpha) d\mu_N(\alpha) \\
    &\quad= \int_0^4 x c(x) \left[ \phi_R(xn/s_N) - \phi_L(xn/s_N) \right] dx
        + o(1)~.
\end{split}\]
Passing to the limit along a subsequence $N_j \to \infty$ and applying
(\ref{eq:etcor}) and the dominated convergence theorem, we deduce:
\begin{equation}\label{eq:gl4}
\begin{split}
\int \frac{\sin^4 x\sqrt{\lambda}}{(x\sqrt{\lambda})^4} d\sigma_R(\lambda)
    &= \int_0^4 y c(y) \phi_R(xy) dy + c_0 x^{-3}~,\\
\int \frac{\sin^4 x\sqrt{\lambda}}{(x\sqrt{\lambda})^4} d\sigma_L(\lambda)
    &= \int_0^4 y c(y) \phi_L(xy) dy + c_0 x^{-3}~.\\
\end{split}\end{equation}

\rem3 The relations (\ref{eq:gl4}) determine $\sigma_R, \sigma_L$ uniquely.
Indeed, according to (\ref{eq:etcor}) and (\ref{eq:phiprop}),
the measure $\sigma_R$ must satisfy
\begin{equation}\label{eq:decay}
\int_0^\infty \frac{d\sigma_R(\lambda)}{1 + \lambda^2} < +\infty~,
\quad \int_{-\infty}^0 \exp \left[ x\sqrt{|\lambda|} \right] \, d\sigma_R(\lambda)
    \leq C_1 \exp \left[ C_1 x^{2-\delta}\right]~;
\end{equation}
hence one may apply the argument of Levitan \cite{L}. The latter works
as follows. Suppose $\widetilde\sigma_R$ is another measure for which (\ref{eq:gl4})
also holds. Then $\widetilde\sigma_R$ also satisfies (\ref{eq:decay});
$\nu = \sigma_R - \widetilde\sigma_R$ satisfies
\[ \int \frac{\sin^4 x \sqrt{\lambda}}{(x\sqrt{\lambda})^4} d\sigma(\lambda) = 0~,
    \quad x \in \RR~.\]
Set
\[ f(x) = \int_{-\infty}^0 \frac{\sin^4 x \sqrt{\lambda}}{\lambda^2} d\nu(\lambda)
        = - \int_0^{+\infty} \frac{\sin^4 x \sqrt{\lambda}}{\lambda^2} d\nu(\lambda)~,\]
and apply the Phragm\'en--Lindel\"of principle to $f(z)$ in every quadrant. We see
that $f$ is bounded, hence constant. Therefore $\widetilde\sigma_R = \sigma_R$.

We have proved that the limiting measure is unique; in particular, the sequence
$\{\sigma_R,N\}$ converges (to this limit.)

\rem3 Returning to (\ref{eq:decay}),
\[ \int_{-\infty}^\infty \exp \left\{ x \sqrt{|\lambda|} \right\} d\sigma_R(\lambda)
\leq C_1 \exp \left[ C_1 x^{2-\delta} \right]~,\]
hence
\[ \sigma_R(\lambda) \leq C_1 \exp \left[ - x \sqrt{|\lambda|} + C_1 x^{2-\delta} \right]~,
    \quad \lambda \leq 0~, x \geq 0~. \]
Taking $x = \left[ \frac{\sqrt{|\lambda|}}{2C_1}\right]^\frac{1}{1-\delta}$, we obtain
the second part of (\ref{eq:limprop}). The first part follows from (\ref{eq:etcor}).
\end{proof}

Proposition~\ref{p:6} can be extended to measures on $\RR^\ell$ (for any {\em fixed} $\ell$.)
Namely, let $\{\mu_N\}$ be a sequence of measures on $\RR^\ell$. For simplicity, we assume that
$\mu_N$ are symmetric (=invariant under permutation of coordinates). For $k \leq \ell$, set
\[ \widehat{\mu_N} (n(1), \cdots, n(k)) = \int \prod_{j=1}^k U_{n(j)} (\alpha_j) d\mu_N(\alpha)~.\]

\begin{prop}\label{p:6.k}
Assume that, for $1 \leq k \leq \ell$,
\begin{multline*}
\widehat{\mu_N}(n(1), \cdots, n(k)) \\
    = \frac{\prod_{j=1}^k n(j)}{s_N^{3k}}
        \sum_{I \subset \{1, \cdots, k \}} (-1)^{\sum_{j \in I} n(j)}
        \phi_{R, k}\left( \{n(j)\}_{j \notin I} \right)
        \phi_{L, k}\left( \{n(j)\}_{j \in I} \right) \\
    + \frac{\eps_N(n(1), \cdots, n(k))}{s_N^{2k}}~,
\end{multline*}
where
\[ \phi_{R, k}, \phi_{L, k} \in C\left((0, +\infty)^k\right)
    \bigcap L_1(\exp \left[ - \|x \|^{2-\delta} \right] dx)~, \]
the coefficients $\eps_N$ tend to zero uniformly on $\vv{n} \in \{1, \cdots, R s_N \}^k$,
and
\[ \sum_{1 \leq n(1) \leq \cdots \leq n(k) \leq s_N}
    \frac{\eps_N(n(1), \cdots, n(k))}{n(1) \cdots n(k)} \leq C~.\]
Then the scaled measures
\begin{equation}\label{eq:sc.k}
\begin{cases}
    &\sigma_{R,N}(\lambda_1, \cdots, \lambda_\ell)
        = s_N^3 \left( 1 - \mu_N( 1 - 2 s_N^{-2} \lambda_1, \cdots, 1 - 2s_N^{-2} \lambda_\ell) \right) \\
    &\sigma_{L,N}(\lambda_1, \cdots, \lambda_\ell)
        = s_N^3\, \mu_N( -1 + 2 s_N^{-2}\lambda_1, \cdots, -1+2s_N^{-2} \lambda_\ell)
\end{cases}
\end{equation}
converge to limiting measures $\sigma_R, \sigma_L$, which are uniquely determined by
$\{\phi_{k,R}\}$, $\{\phi_{k, L}\}$ (respectively). Moreover, $\sigma = \sigma_L, \sigma_R$
satisfy:
\begin{equation*}
\begin{cases}
\left|\sigma(\lambda_1, \cdots, \lambda_\ell)
    - \left(\frac{2}{3\pi}\right)^\ell
        \prod_{j=1}^\ell \lambda_{j \,+}^{3/2} \right| = O(\|\lambda\|^\frac{3k-1}{2})~,
            &\lambda_j \geq 0~, \, \|\lambda \| \to \infty; \\
\sigma(\lambda_1, \cdots, \lambda_\ell) = O \left[ \exp \left\{ - C'_\ell \|\lambda\|^{\frac{1 - \delta/2}{1 - \delta}}\right\} \right]~,
    &\lambda_j \leq 0~, \, \|\lambda \| \to \infty~.
\end{cases}
\end{equation*}
\end{prop}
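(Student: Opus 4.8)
The plan is to reduce the multidimensional statement to the one-dimensional Proposition~\ref{p:6} by a tensorisation/slicing argument, exactly the way the combinatorial quantities $T(n(1),\dots,n(k))$ were built up from connected pieces. First I would establish precompactness of $\{\sigma_{R,N}\}$ and $\{\sigma_{L,N}\}$ on $\RR^\ell$. For this one applies the one-variable Erd\H{o}s--Tur\'an inequality (Proposition~\ref{p:et}) in each coordinate separately: fixing all but one of the $\lambda_j$ and using the hypothesis on $\widehat{\mu_N}(n)$ (the $k=1$ case) together with assumption~1 gives a uniform bound $|\sigma_{R,N}(\lambda_1,\dots,\lambda_\ell)| \leq C\prod_j \max(\lambda_j,1)$, hence tightness; so along a subsequence $\sigma_{R,N_j}\to\sigma_R$. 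As in the proof of Proposition~\ref{p:6}, write $\sigma_R = \tau_R + \big(\tfrac{2}{3\pi}\big)^\ell\prod_j\lambda_{j+}^{3/2}$ and aim to show $\tau_R$ is uniquely determined.

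Next I would run the moment-identity step of Proposition~\ref{p:6} in each variable at once. Multiply the $k$ polynomials $U_{n(j)}^4(\alpha_j)$ (or, to separate $\phi_R$ from $\phi_L$, some coordinates $U_{n(j)}^4$ and others $U_{n(j)+1}U_{n(j)}^3$), integrate against $\mu_N$, use the expansions of $c_{n,k}^\pm$ recalled in the excerpt, and pass to the limit along $N_j$ using assumption~2 and the dominated convergence theorem justified by the coordinatewise bound on $\tau_{R,N}$. This yields, for each sign pattern, an identity of the schematic form
\[
\int \prod_{j=1}^\ell \frac{\sin^4 x_j\sqrt{\lambda_j}}{(x_j\sqrt{\lambda_j})^4}\, d\sigma_R(\lambda)
= \idotsint\limits_{(0,4)^\ell} \Big(\prod_j y_j c(y_j)\Big)\, \phi_{R,\ell}(x_1y_1,\dots,x_\ell y_\ell)\, dy
\; + \; (\text{lower-order terms in the }x_j^{-3}),
\]
and similarly with some factors replaced by $\sin^4$-kernels centred at the left edge; the cross terms with mixed $R$/$L$ content are governed by the $\phi_{R,k}\phi_{L,k}$ splitting in the hypothesis and, after restricting to connected components / taking successive differences of marginals as in the definition of $T$, reduce to products of lower-dimensional quantities already controlled by induction on $\ell$.

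For uniqueness I would reduce to the one-dimensional Levitan argument one coordinate at a time. Fix $\lambda_2,\dots,\lambda_\ell$; the decay estimates — $\int (1+\lambda_1^2)^{-1} d\sigma_R < \infty$ from the Erd\H{o}s--Tur\'an bound and $\int \exp\{x_1\sqrt{|\lambda_1|}\}\,d\sigma_R \leq C_1\exp\{C_1 x_1^{2-\delta}\}$ from $\phi_{R,\ell}\in L_1(\exp[-\|x\|^{2-\delta}]dx)$ — are exactly (\ref{eq:decay}) in the variable $\lambda_1$, so forming $\nu = \sigma_R - \widetilde\sigma_R$ for a competing measure, setting $f(z_1) = \int_{-\infty}^0 \sin^4(z_1\sqrt{\lambda_1})\lambda_1^{-2}\,d\nu$ and applying Phragm\'en--Lindel\"of in each quadrant forces $f$ constant, hence $\nu=0$ in the first coordinate; iterating over $j=2,\dots,\ell$ pins down $\sigma_R$ completely (and likewise $\sigma_L$). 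Finally the asymptotics of $\sigma$ follow as in Proposition~\ref{p:6}: the right tail $O(\|\lambda\|^{(3k-1)/2})$ comes from the coordinatewise Erd\H{o}s--Tur\'an remainder combined across the $\ell$ coordinates, and the left tail from optimising $\sigma_R(\lambda)\leq C_1\exp[-\|x\|\cdot c\|\lambda\|^{1/2} + C_1\|x\|^{2-\delta}]$ over $x$, giving the stated exponent $\frac{1-\delta/2}{1-\delta}$.

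\textbf{Main obstacle.} The delicate point is the bookkeeping of the mixed left/right cross terms in the multivariable moment identity: one must check that the $\phi_{R,k}\phi_{L,k}$ product structure in the hypothesis, after the connected-component reduction, closes under the induction on $\ell$ and that the $o(1)$ errors remain summable in the sense of assumption~1 when several variables are tied together near the two edges simultaneously. Everything else is a coordinatewise repetition of Proposition~\ref{p:6}.
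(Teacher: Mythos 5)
The paper never writes out a proof of Proposition~\ref{p:6.k}; it says only that the proof is similar to the one-dimensional case and omits it, and your proposal is exactly that coordinatewise repetition (Erd\H{o}s--Tur\'an on the marginals for precompactness, the $2^\ell$ mixed moment identities built from $U_{n(j)}^4$ and $U_{n(j)+1}U_{n(j)}^3$ in each coordinate to separate the corners $(\pm1,\cdots,\pm1)$, and an iterated Levitan/Phragm\'en--Lindel\"of uniqueness argument), so it matches the intended route. One step in your uniqueness argument needs repair as written: you cannot ``fix $\lambda_2,\cdots,\lambda_\ell$'' --- that would amount to disintegrating $\nu=\sigma_R-\widetilde\sigma_R$, about which the integral identities give no direct information. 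The correct iteration fixes the dual variables $x_2,\cdots,x_\ell$, pushes the signed measure $\prod_{j\geq 2}\frac{\sin^4 (x_j\sqrt{\lambda_j})}{(x_j\sqrt{\lambda_j})^4}\,d\nu(\lambda)$ forward onto the $\lambda_1$-axis, checks that this one-dimensional signed measure inherits the decay conditions (\ref{eq:decay}) (it does, since the kernels are bounded), and applies the one-variable Phragm\'en--Lindel\"of argument to conclude it vanishes for all choices of $x_2,\cdots,x_\ell$; one then iterates over the remaining coordinates. With that correction the sketch is sound; the difficulties you flag at the end (the $\phi_{R,k}\phi_{L,k}$ cross-term bookkeeping and the summability of the errors when several variables sit near the edges) are genuine but routine, and the connected-component/induction-on-$\ell$ remarks really belong to the application of the proposition in Section~\ref{s:pr}, not to its proof.
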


The proof is similar to the one-dimensional case (Proposition~\ref{p:6}); we omit it.

\section{Proof of the main results}\label{s:pr}

\begin{proof}[Proof of Theorem~\ref{th:RM}]
To show that the random counting measures $\sigma_R(\lambda)$, $\sigma_L(\lambda)$
converge in distribution to $\mathfrak{Ai}_1(-\lambda)$, it is sufficient to
prove the convergence of the correlation measures
\[ \rho_{\ell, R}(\lambda_1, \cdots, \lambda_\ell) = \EE \prod_{j=1}^\ell \sigma_R(\lambda_j),
\quad \rho_{\ell, L}(\lambda_1, \cdots, \lambda_\ell) = \EE \prod_{j=1}^\ell \sigma_L(\lambda_j) \]
to
\[ \rho_\ell(\lambda_1, \cdots, \lambda_\ell) =  \EE \prod_{j=1}^\ell \mathfrak{Ai}_1(-\lambda_j)~.\]
It will be convenient to denote
\begin{equation}\label{eq:def.mu.1}
\mu_N(\alpha) = \# \left\{ \text{eigenvalues of $\frac{H_N}{2\sqrt{2W_N-1}}$
    in $(-\infty, \alpha]$} \right\}
\end{equation}
(this differs slightly from (\ref{eq:def.mu}).)

Let us first consider $\ell = 1$. According to Proposition~\ref{p:main.rm},
\[\begin{split}
\widehat{\EE \mu_N}(2n)
    &= \frac{n}{N} \phi_1(n/N^{1/3}) + \frac{\eps_N^{(1)}(n)}{N^{2/3}} \\
    &= \frac{2n}{N} \left[ \widetilde{\phi}_1(2n/N^{1/3})
        + (-1)^{2n} \widetilde{\phi}_1(2n/N^{1/3})\right]
        + \frac{\eps_N^{(2)}(2n)}{N^{2/3}}~, \\
\widehat{\EE \mu_N}(2n)
    &= 0\\
    &= \frac{2n+1}{N} \left[ \widetilde{\phi}_1((2n+1)/N^{1/3})
        + (-1)^{2n+1} \widetilde{\phi}_1((2n+1)/N^{1/3})\right]~, \\
\end{split}\]
where $\eps_N^{(1)}$ absorb the difference between the matrices
${H_N^{(2n)}} / (2W_N-1)^n$
and $U_{2n}(H_N/(2\sqrt{2W_N})$, and $\eps_N^{(2)}$, $\widetilde{\phi}_1$ are introduced
to make the notation compatible with Section~\ref{s:sl}. The sequence of measures $\{\mu_N\}$
satisfies the assumptions of Proposition~\ref{p:6} with $s_N = N^{1/3}$, hence
\[ \rho_{1,R},\rho_{1,L} \to \widetilde{\rho}_1~,\]
where $\widetilde{\rho}_1$ is a measure determined by $\widetilde{\phi}_1$ (and in particular
independent of $W_N$.) For $W_N = N/2$, $\widetilde{\rho}_1 = \rho_1$ according to the result
of Soshnikov \cite{S1}; hence the same is true for any $W_N \gg N^{5/6}$.

The same argument works for $\ell > 1$. Indeed,
\[\begin{split}
\EE \prod_{j=1}^\ell \tr \frac{H_N^{n(j)}}{(2W_N-1)^n}
    &= \sum_\Pi \prod_{P \in \Pi} T(\{n(j)\}_{j \in P} \\
    &= \sum_\Pi \prod_{P \in \Pi} \frac{1+(-1)^{n(j)}}{2} T(\{n(j)\}_{j \in P}~,
\end{split}\]
where the sum is over all partitions $\Pi$ of $\{1, \cdots, \ell\}$. For a subset
$I \subset \{1, \cdots, \ell \}$, write $\Pi \prec I$ if
\[ \forall P \in \Pi \,\, P \cap I \in \{ P, \varnothing \}~.\]
Then
\[ \EE \prod_{j=1}^\ell \tr \frac{H_N^{n(j)}}{(2W_N-1)^n}
    = \sum_{I \subset \{1, \cdots, \ell \}} (-1)^{\sum_{j \in I} n(j)}
        \prod_{\Pi \prec I} \frac{ T( \{n(j)\}_{j \in P} )}{2}~.\]
Now apply Proposition~\ref{p:main.rm} and then Proposition~\ref{p:6.k}.
\end{proof}

\begin{rmk} Another (perhaps, slightly simpler) way to prove the convergence
of the correlation measures is to follow the arguments of \cite[Section~I.5]{FS},
and then use the uniqueness theorem for Laplace transform instead of the arguments of
Section~\ref{s:sl}, as in \cite{S1}.
\end{rmk}

\begin{proof}[Proof of Theorem~\ref{th:P}]
According to Proposition~\ref{p:main.rw} with $k=1$,
\[ \widehat{\EE \mu_N} (2n) = \frac{1}{W_N^{6/5}} \, 2n \cdot 2 \psi_1^{(1)}(2n/W_N^{2/5})
    + \frac{\eps_N^{(1)}}{W_N^{4/5}}~,\]
where again $\psi_1^{(1)}$, $\eps_N^{(1)}$ are introduced to make the notation consistent
with Section~\ref{s:sl}. Now apply Proposition~\ref{p:6} with $s_N = W_N^{2/5}$,
and deduce that
\[ \EE \sigma_{R,N} , \, \EE \sigma_{L,N} \To \sigma_1~,\]
where
\[ \sigma_1(\lambda) = \tau_1(\lambda) + \frac{2}{3\pi} \lambda_+^{3/2}~, \quad
    \int_{-\infty}^{+\infty} \frac{\sin x \sqrt{\lambda}}{ x \sqrt \lambda}
        \, d\tau_1(\lambda) = \psi(x)~.
\]
Applying Proposition~\ref{p:main.rw} with $k=2$ and Proposition~\ref{p:6.k},
it is not hard to see that
\[\begin{split}
\EE \sigma_{R,N} \otimes \sigma_{R,N} &- (\EE \sigma_{R,N}) \otimes (\EE \sigma_{R,N}) \To 0~, \\
\EE \sigma_{L,N} \otimes \sigma_{L,N} &- (\EE \sigma_{L,N}) \otimes (\EE \sigma_{L,N}) \To 0~,
\end{split}\]
hence also
\[ \sigma_{R,N}, \sigma_{L,N} \ToD \sigma_1~.\]
\end{proof}

\begin{rmk}\label{rmk:conj} Staring at the asymptotics of $\psi$ near zero, it
seems natural to conjecture that
\begin{equation}\label{eq:as.sigma}
\sigma_1(\lambda) = \frac{2}{3\pi} \lambda_+^{3/2} + \sqrt\frac{3}{32 \pi^2} \, \lambda_+^{1/4}
    + O(1)~, \quad \lambda \to + \infty~.
\end{equation}
We have not been able to prove this as stated. Applying Marchenko's Tauberian theorem \cite{Ma},
one can show that (\ref{eq:as.sigma}) holds in a weak sense (say, after integrating both
sides with a compactly supported twice differentiable kernel.)
\end{rmk}

\begin{proof}[Proof of Theorem~\ref{th:norm}]
First, for $n \leq W_N$, Proposition~\ref{p:emb.bd} yields:
\[\begin{split}
\frac{\EE \tr H_N^{(2n)}}{(2W_N-1)^N}
    &= \sum_\mathpzc{D} \frac{\mathrm{Paths}(2n; \mathpzc{D})}{(2W_N-1)^n} \\
    &= \sum_\mathpzc{D} \left[ \frac{(Cn)^{3s-2}}{(3s-2)!} (cN)^{-s+1}
        + N \, \frac{(Cn)^\frac{5s-4}{2}}{ \left(\frac{5s-4}{2}\right) !} (cW_N)^{-s} \right]~.
\end{split} \]
Rearranging the sum and using Lemma~\ref{l:fs}, we continue:
\[\begin{split}
\frac{\EE \tr H_N^{(2n)}}{(2W_N-1)^N}
    &\leq \sum_s \left[ \frac{(Cn)^{3s-2}}{(3s-2)!} (cN)^{-s+1}
        + N \, \frac{(Cn)^\frac{5s-4}{2}}{ \left(\frac{5s-4}{2}\right) !} (cW_N)^{-s} \right]
        \, (Cs)^s \\
    &\leq Cn \left\{ \exp \left[ \frac{Cn^{3/2}}{N^{1/2}}\right]
        + \frac{N}{W_N^{6/5}} \exp \left[ \frac{Cn^{5/3}}{W_N^{2/3}} \right] \right\}~,
\end{split} \]
hence also
\[ \EE \tr U_{2n} \left( \frac{H_N}{2\sqrt{2W_n}} \right)\
    \leq Cn \left\{ \exp \left[ \frac{Cn^{3/2}}{N^{1/2}}\right]
        + \frac{N}{W_N^{6/5}} \exp \left[ \frac{Cn^{5/3}}{W_N^{2/3}} \right] \right\} \]
(perhaps, with a different constant $C$.) Applying the identities (\ref{eq:jk}),
we have:
\[ \EE \tr U_{2n}^4 \left( \frac{H_N}{2\sqrt{2W_n}} \right)\
    \leq C \left\{ n N + n^4 \exp \left[ \frac{Cn^{3/2}}{N^{1/2}}\right]
        + \frac{N n^4 }{W_N^{6/5}} \exp \left[ \frac{Cn^{5/3}}{W_N^{2/3}} \right]
    \right\}~. \]
If $N^{5/6} \leq W_N$, the right hand side is bounded by $Cn^4$ for
$n = \lfloor N^{1/3} \rfloor$. As
\[ \frac{U_{2n}(\pm(1+\eps))}{2n} \geq c \exp(c n \sqrt{\eps})~,  \]
we deduce that $\|H_N/(2\sqrt{2W_N})\| \ToD 1$ (and in fact,
\[ \left\{ \left( \|H_N/(2\sqrt{2W_N}) - 1 \right) N^{2/3} \right\}_N \]
is stochastically bounded.)
\\*
If $W_N \leq N^{5/6}$, take $n = \lfloor W_N^{3/5} \log^{2/5} N \rfloor$.
Then
\[ \EE \tr U_{2n}^4 \left( \frac{H_N}{2\sqrt{2W_n}} \right)
    \leq C n^4 N^{C'} \big/ W_N^{6/5}~,\]
and hence again $\|H_N/(2\sqrt{2W_N})\| \ToD 1$.
\end{proof}

\section{Random phases}\label{s:ph}

The steps of the proof for the matrices with entries (\ref{eq:ph}) are very similar
to those for (\ref{eq:pm}). We indicate the necessary modifications.

\begin{itemize}
\item Section~\ref{s:p}: Lemma~\ref{l:fs} remains valid {\em verbatim}. Condition (d)
in Corollary~\ref{cor} should be replaced with
\[ \# \left\{ (i, j) \, | \, u_i^j = u, \, u_{i+1}^j = v \right\}
    = \# \left\{  (i, j) \, | \, u_i^j = v, \, u_{i+1}^j = u  \right\}~.\]
In Definition~\ref{def:diag.k}, loops are not allowed, and the third condition
should be also replaced with
\[ \# \left\{ (i, j) \, | \, \bar{u}_i^j = \bar{u}, \, \bar{u}_{i+1}^j = \bar{v} \right\}
    = \# \left\{  (i, j) \, | \, \bar{u}_i^j = \bar{v},
        \, \bar{u}_{i+1}^j = \bar{u} \right\} = 1~.\]
Thus, the diagrams in the new sense are a subset of diagrams in the old sense.
In Lemma~\ref{l:fs}, $s$ is now always even, and the estimate is valid for
even $s$. In Propositions~\ref{p:main.rm},\ref{p:main.rw} the sums are now
over even $s$ (and the functions $\phi, \psi$ are different than before.)
\item In Section~\ref{s:p.comb}, one should only consider the diagrams
that are valid according to the new definition, and only even values of $s$.
\item The argument in Section~\ref{s:pr} is still valid. In Remark~\ref{rmk:conj},
we would now conjecture that
\begin{equation}\label{eq:as.sigma.2}
\sigma_2(\lambda) = \frac{2}{3\pi} \lambda_+^{3/2} + O(1)~,
    \quad \lambda \to + \infty~.
\end{equation}
\end{itemize}

\section{Concluding remarks}\label{s:cr}

\newcounter{remnr4}
\def\rem4{
    \addtocounter{remnr4}{1}
    \vspace{2mm}\noindent{\bf \Roman{remnr4}.} }

\rem4 To simplify the exposition, we have only considered the simplest random
variables (\ref{eq:pm}), (\ref{eq:ph}). Assume that the entries of $H_N$ above the
diagonal are independent, and have symmetric distribution with (uniformly) subgaussian
tails. Applying the methods of \cite[Part III]{FS}, one can assume that
\[ \EE H_{uv}^2 = 1~, \quad H_{uv} \in \RR \,\,\, \mathrm{a.s.} \quad (0 < |u-v|_N \leq W)\]
or
\[ \EE H_{uv}^2 = 0~, \quad \EE |H_{uv}|^2 = 1 \quad (0 < |u-v|_N \leq W)\]
instead of (\ref{eq:pm}) or (\ref{eq:ph}) (respectively), and prove analogues of
Theorems~\ref{th:RM} and \ref{th:P}. In particular, this extension covers the frequently
considered case of matrices with Gaussian elements.

\rem4 The restriction $W_N \gg 1$ is also an artefact of the proof. For $W_N = O(1)$,
Lemma~\ref{l:nbtrw} is no longer applicable, hence one should take into account the
difference between non-backtracking and usual random walk. Thus, we need an analogue
of Propositions~\ref{p:clt1}--\ref{p:clt3} for non-backtracking walks. This can probably
be proved using either a trace formula for the representation of non-backtracking
random walk as a Markov chain on the space of directed edges (see Smilansky \cite{Sm}),
or the connection to Chebyshev polynomials (see \cite{ABLS}.)

\rem4 It would be interesting to obtain a more detailed description of the measure
$\sigma_\beta$ from Theorem~\ref{th:P}. In particular, a more precise description of the
left tail would allow to find the limiting distribution of the maximal eigenvalue (cf.\ the
remark after Theorem~\ref{th:norm}); as to the right tail, it would be interesting
to justify the asymptotics (\ref{eq:as.sigma}) (and perhaps derive the next terms
in the asymptotic series.)

\rem4 Usual (i.e.\ non-periodic) random band matrices have non-zero elements $H_{uv}$
for
\[ 0 < |u - v| \leq W_N~.\]
We expect that the results of this paper hold, perhaps in modified form, for these
matrices as well. Following Bogachev, Molchanov, and Pastur \cite{BMP}, we note however
that even the limiting spectral measure coincides (\ref{eq:w}) only if $1 \ll W_N \ll N$
or $W_N = (1-o(1)) N$. The limiting spectral measure in the complementary regimes
has been described by Khorunzhiy, Molchanov, and Pastur in \cite{MPK}.

\rem4 It would also be interesting to study the crossover regime $W_N \asymp N^{5/6}$.
We refer the reader to the works of Johansson \cite{J} and Bender \cite{Be}
for the description of the crossover regime at the spectral edge for different kinds of
random matrices.

\rem4 The method of this paper can be used to study the eigenvectors of
$H_N/(2\sqrt{2W_N})$ that correspond to eigenvalues close to the edge, and, in particular,
their {\em inverse participation ratio}
\[ \sum_{u=1}^N |\mathrm{v}(u)|^4 \big/ \left( \sum_{u = 1}^N |\mathrm{v}(u)|^2 \right)^2~.\]
If $W_N \gg N^{5/6}$, the inverse participation ratio of eigenvectors corresponding to
eigenvalues $\alpha = 1 + O(N^{-2/3})$ is, with high probability, of order $N^{-1}$. If
$W_N \ll N^{5/6}$, the inverse participation ratio averaged over eigenvalues in a window
$[1 + a/W_N^{4/5}, 1 + b/W_N^{4/5}]$ is, with high probability, of order $W_N^{-6/5}$.

\rem4 We remark that Schenker \cite{Sch} proved a lower bound $\frac{1}{C W^{8}}$ on the inverse
participation ratio of the eigenvalues in the bulk of the spectrum (for a slightly
different class of band matrices). In the opposite direction, Erd\H{o}s and
Knowles \cite{EK1,EK2} recently proved an upper bound $W^{-\frac{1}{3} + o(1)}$
for a wide class of band matrices; their argument uses in particular the expansion in
Chebyshev polynomials developed in the current paper.

\rem4 Finally, there is a natural extension of band matrices to higher-dimen\-sional
lattices: the rows and columns of $H_N$ are indexed by elements of $(\ZZ/N\ZZ)^d$,
and $H_N(u, v) = 0$ unless $0 < \|u-v\| \leq W_N$. Similar random matrices have been
also studied in physical and mathematical literature, cf.\ Silvestrov \cite{Si},
Disertori, Pinson, and Spencer \cite{DPS}. We hope to consider the spectral edges of
such matrices in a forthcoming work.

\vspace{4mm}\noindent
{\bf Acknowledgment.} I am grateful to my supervisor, Vitali Milman, for his encouragement
and support.

Thomas Spencer has shared with me his interest in band matrices, and encouraged to apply
the method of \cite{FS} to study their spectral edges. His comments on a preliminary version
of this paper have been of great help. Yan Fyodorov has explained to me what is the
Thouless criterion, and why are its predictions coherent with the results of the current
paper. Bo'az Klartag has suggested to use the Cauchy--Binet formula to simplify the expressions
in Section~\ref{s:emb}. My father has referred me to the works \cite{L,LM,Ma,V}. Mark Rudelson
and Valentin Vengerovsky have helped fix the ambiguities in the definition of diagram, see
Remark~\ref{rem:clar}. The discussions with Alexei Khorunzhiy, Leonid Pastur, Mariya Shcherbina,
and Uzy Smilansky on related topics have been of great benefit to me.

I thank them all very much.

\end{document}